\definecolor{colabred}{RGB}{242, 60, 60}
\definecolor{context}{RGB}{0, 51, 102}
\definecolor{highlight}{rgb}{0.97,0.91,0.81}
\definecolor[named]{ACMBlue}{cmyk}{1,0.1,0,0.1}
\definecolor[named]{ACMYellow}{cmyk}{0,0.16,1,0}
\definecolor[named]{ACMOrange}{cmyk}{0,0.42,1,0.01}
\definecolor[named]{ACMRed}{cmyk}{0,0.90,0.86,0}
\definecolor[named]{ACMLightBlue}{cmyk}{0.49,0.01,0,0}
\definecolor[named]{ACMGreen}{cmyk}{0.20,0,1,0.19}
\definecolor[named]{ACMPurple}{cmyk}{0.55,1,0,0.15}
\definecolor[named]{ACMDarkBlue}{cmyk}{1,0.58,0,0.21}
\newcommand{\type}[1]{{\color{ACMDarkBlue} #1}}
\crefname{equation}{}{}
\crefname{rq}{research question}{research questions}
\crefname{l}{line}{lines}
\crefname{rule}{rule}{rules}
\crefname{figure}{figure}{figures}
\newcounter{rule}
\renewcommand{\infer}[5][]{%
    \def\therule{#2}%
    \refstepcounter{rule}%
    \label{rule:#3}%
    \inferrule*[lab={#2}, right={#1}, vcenter]{#4}{#5}%
}
\newenvironment{customthm}[1]
  {\innercustomthm}
  {\endinnercustomthm}
\newcommand*{\eg}{\textit{e.g.}}
\newcommand*{\ie}{\textit{i.e.,}}
\newcommand*{\etc}{\textit{etc.}}
\newcommand*{\cf}{\textit{cf.}}
\newcommand{\dom}{\textsf{dom}}
\newcommand{\pEnd}{\textsf{end}}
\newcommand{\extract}{\textsf{assoc}}
\newcommand{\sizeof}[1]{|#1|}
\newcommand{\midspace}{\;\mid\;}
\newcommand{\tab}{\ \ \ \ }
\let\oldAnd\&
\renewcommand{\&}{\oldAnd}
\renewcommand{\|}{\, |\,}
\renewcommand{\mid}{\ \bigm\vert\ }
\renewcommand{\o}{\oplus}
\newcommand{\ses}[1]{{\color{ACMBlue} #1}}
\newcommand{\poly}[1]{\widetilde{#1}}
\newcommand{\rc}{\mathbb{C}}
\newcommand{\0}{\bm 0}
\newcommand{\msgLabel}[1]{\textsf{#1}}
\newcommand{\recVar}[1]{\textnormal{\textrm{#1}}}
\newcommand{\defin}[3]{\textsf{def } \recVar{#1}(#2) = #3 \textsf{in }}
\newcommand{\definD}[1]{\textsf{def } #1 \textsf{in }}
\newcommand{\then}{\ .\ }
\newcommand{\role}[1]{{\color{ACMRed} {\ensuremath{\bm {#1}}}}}
\newcommand{\new}[2]{(\nu \ses{#1} : #2)\ }
\newcommand{\newnt}[1]{(\nu \ses{#1})}
\newcommand{\rcv}[3]{ #1[#2] \; {\&}\;  #3 \then }
\newcommand{\rcvI}[3]{ #1[#2] \; {\&}_{i\in I}\;  #3 \then }
\newcommand{\snd}[3]{#1[#2] \; {\o}\;  #3 \then }
\newcommand{\intT}{\type{\textsf{int}}}
\newcommand{\tEnd}{\type{\textsf{\textbf{end}}}}
\let\oldTheta\Theta
\renewcommand{\Theta}{\type{\oldTheta}}
\let\oldGamma\Gamma
\renewcommand{\Gamma}{\type{\oldGamma}}
\newcommand{\Gammab}[1]{\type{\Gamma_{#1}}}
\let\oldDelta\Delta
\renewcommand{\Delta}{\type{\oldDelta}}
\newcommand{\cons}{\,;}
\newcommand{\cSplit}{\,\type{\cdot}\,}
 \newcommand{\insnew}[1]{\hookleftarrow #1}
\let\oldPsi\Psi
\renewcommand{\Psi}{\type{\oldPsi}}
\newcommand{\fv}{\textsf{fv}}
\newcommand{\fpv}{\textsf{fpv}}
\newcommand{\frv}{\textsf{frv}}
\newcommand{\fs}{\textsf{fs}}
\newcommand{\dpv}{\textsf{dpv}}
\newcommand{\redP}{\rightarrow}
\newcommand{\subT}{\type{\leqslant}}
\newcommand{\supT}{\geqslant}
\newcommand{\hasT}{\,\type{:}\,}
\newcommand{\hasTT}{{\type{:}}}
\newcommand{\redCstar}{\type{\rightarrow^*}}
\newcommand{\redS}[1]{\;\type{\xrightarrow{#1}}\;}
\newcommand{\redC}{\type{\rightarrow}}
\newcommand{\beh}{\textsf{beh}}
\newcommand{\unf}{\textsf{unf}}
\newcommand{\tf}{\textsf{tf}}
\newcommand{\lf}{\textsf{lf}}
\newcommand{\act}{\type{\gamma}}
\newcommand{\aIn}[5]{\ses{#1}\type{{:}#2\oldAnd#3{:}#4(#5)}}
\newcommand{\aOut}[5]{\ses{#1}\type{{:}#2{\o}#3{:}#4(#5)}}
\newcommand{\aCom}[4]{\ses{#1}\type{{:}#2{,}#3{:}#4}}
\newcommand{\p}{\role{p}}
\newcommand{\q}{\role{q}}
\newcommand{\m}{\msgLabel{m}}
\newcommand{\ok}{\msgLabel{ok}}
\newcommand{\ping}{\msgLabel{ping}}
\newcommand{\pong}{\msgLabel{pong}}
\newcommand{\tree}{\msgLabel{tree}}
\newcommand{\node}{\msgLabel{node}}
\newcommand{\leaf}{\msgLabel{leaf}}
\newcommand{\skipy}{\msgLabel{skip}}
\newcommand{\lock}{\msgLabel{lock}}
\newcommand{\acquire}{\msgLabel{acquire}}
\newcommand{\release}{\msgLabel{release}}
\newcommand{\done}{\msgLabel{done}}
\newcommand{\take}{\msgLabel{take?}}
\newcommand{\give}{\msgLabel{give}}
\newcommand{\fin}{\msgLabel{fin}}
\newcommand{\bid}{\msgLabel{bid}}
\newcommand{\notavail}{\msgLabel{not-avail}}
\newcommand{\yes}{\msgLabel{yes}}
\newcommand{\no}{\msgLabel{no}}
\newcommand{\St}{\type{S}}
\newcommand{\Gammap}{\type{\Gamma^\prime}}
\newcommand{\Gammapp}{\type{\Gamma^{\prime\prime}}}
\newcommand{\sub}[2]{\{#1/#2\}}
\newcommand{\hl}[1]{\colorbox{highlight}{#1}}
\newcommand{\reason}[2]{\textnormal{(#1)}\label{l:#2}}
\newcommand{\MPSTb}{\textsf{MPST!}\xspace}
\newcommand{\MAGpi}{MAG$\pi$\xspace}
\newcommand{\MAGpib}{MAG$\pi$!\xspace}
\newcommand{\smallminus}{\scalebox{0.4}[1.0]{\( - \)}}
\newcommand{\smallequals}{\scalebox{0.6}[1.0]{\( = \)}}
\newcommand{\smallplus}{\scalebox{0.5}{\( + \)}}
\newcommand{\header}[1]{
  \begin{flushleft}
    \textbf{#1}
  \end{flushleft}
}
\newcommand{\headersig}[2]{
  \begin{flushleft}
    \textbf{#1} \hfill \framebox{#2}
  \end{flushleft}
}
\newcommand{\headertwo}[2]{
  \begin{flushleft}
    \textbf{#1} \hfill #2
  \end{flushleft}
}
\begin{document}

\title{Multiparty Session Types with a Bang!}

\author{
  Matthew~Alan {Le Brun} \orcidlink{0000-0001-7394-0122} \and 
  Simon Fowler \orcidlink{0000-0001-5143-5475} \and
  Ornela Dardha \orcidlink{0000-0001-9927-7875}
}

\institute{
  University of Glasgow \\ \email{m.le-brun.1@research.gla.ac.uk} \\ \email{simon.fowler@glasgow.ac.uk} \\ \email{ornela.dardha@glasgow.ac.uk}
}

\maketitle

\begin{abstract}
  Replication is an alternative construct to recursion for describing infinite behaviours in the $\pi$-calculus.
  In this paper we explore the implications of including type-level replication
  in \emph{Multiparty Session Types (MPST)}, a behavioural type theory for
  message-passing programs.
  We introduce $\MPSTb$, a session-typed multiparty process calculus with
  replication and first-class roles.
  % , and show (via the generalised approach of
  % Scalas \& Yoshida~\cite{DBLP:journals/pacmpl/ScalasY19}) that we can ensure
  % behavioural properties of $\MPSTb$ such as termination and deadlock-freedom.
  %
  We show that replication is \emph{not} an equivalent alternative to recursion
  in MPST, and that using both replication \emph{and} recursion in one type system in
  fact allows us to express both context-free protocols and protocols that
  support mutual exclusion and races. We demonstrate the expressiveness of
  $\MPSTb$ on examples including binary tree serialisation, dining
  philosophers, and a model of an auction, and explore the implications
  of replication on the decidability of typechecking.
\end{abstract}

\keywords{Multiparty session types, Replication, Distributed protocols}

%%%%%%%%%%%%%%%%%%%%%%%%%%%%%%%%%%%%%%%%%%%%%
%%%%%%%%%%%%%%%%%%%%%%%%%%%%%%%%%%%%%%%%%%%%%
%%%%%%%%%%%%%%%%%%%%%%%%%%%%%%%%%%%%%%%%%%%%%

% Sections
\section{Introduction}\label{sec:intro}

Our world is powered by a multitude of computer systems working together by
\emph{communicating}, \ie\ sending and receiving messages according to some
\emph{protocol}. It is therefore vital to \emph{verify} the correctness of both
communication protocols and their implementations, to ensure our programs behave
according to their specifications, and to guarantee that specifications are
indeed \emph{safe}.

\emph{Session
types}~\cite{DBLP:journals/iandc/DardhaGS17,DBLP:conf/concur/Honda93,DBLP:conf/esop/HondaVK98,DBLP:journals/iandc/Vasconcelos12}
provide a lightweight method by which a developer can ensure safety of, and
conformance to, communication protocols.
Session types can be thought of as \emph{types for protocols} which can be
attached to a communication \emph{channel} to specify \emph{how} it should be
used, and can be used to detect issues such as \emph{communication mismatches}
and \emph{deadlocks} early in the development process.
\emph{Multiparty session types}
(MPST)~\cite{DBLP:conf/sfm/CoppoDPY15,DBLP:journals/jacm/HondaYC16,DBLP:journals/darts/ScalasDHY17}
generalise binary session types to allow reasoning about communication between
two \emph{or more} participants, and have been shown to be expressive
enough to capture a range of practical protocols such as the OAuth~2
authentication protocol~\cite{DBLP:journals/pacmpl/ScalasY19}.

\begin{example}[Client-Server-Worker]\label{ex:intro}
    Using generalised MPST~\cite{DBLP:journals/pacmpl/ScalasY19}, we describe the types for three participants in a simple work-offloading system.
    {\small\begin{equation}
        \begin{array}{c}
            \type{S_\role{c}} := \type{\role s {\o} \msgLabel{req}(\texttt{int}) \then \role w \& \msgLabel{ans}(\texttt{str})}
            \\[0.5em]
            \begin{array}{c c}
                \type{S_\role{s}} := \type{\role c {\&} \msgLabel{req}(\texttt{int}) \then \role w {\o} \msgLabel{fw}(\texttt{int})}\tab
                &\tab
                \type{S_\role{w}} := \type{\role s {\&} \msgLabel{fw}(\texttt{int}) \then \role c {\o} \msgLabel{ans}(\texttt{str})}
            \end{array}
        \end{array}
    \end{equation}}%
    The above describes types for a $\role c$lient, $\role s$erver, and $\role w$orker respectively.
    The $\role c$lient, having type $\type{S_\role{c}}$, sends ($\type{\o}$) a \type{\msgLabel{req}}uest to the $\role s$erver with payload type \type{\texttt{int}}, then (\type{.}) waits to receive ($\type{\&}$) an \type{\msgLabel{ans}}wer from the $\role w$orker with payload type \type{\texttt{str}}.
    The $\role s$erver, upon receiving the \type{\msgLabel{req}}uest from the $\role c$lient, \type{\msgLabel{f}}or\type{\msgLabel{w}}ards it to the $\role w$orker.
    Lastly the $\role w$orker, after receiving the \type{\msgLabel{f}}or\type{\msgLabel{w}}arded request from the $\role s$erver, sends the \type{\msgLabel{ans}}wer to the $\role c$lient.
    A MPST system verifies that any written program code conforms to this
    specification (known as \emph{session fidelity}), and that this protocol is
    \emph{safe}---\ie\ that processes send and receive messages of compatible
    types.
\end{example}

Despite the potential of MPST for safe distributed programming, there remain
limitations to the theory that impede their adoption for practical systems.
For instance, generalising \Cref{ex:intro} to multiple workers in the style 
of a load-balancer is non-trivial and has inspired a series of work on the 
generalisation of \emph{direction of choice}~\cite{DBLP:phd/dnb/Stutz24}.
Further, generalising the number of \emph{clients} is also non-trivial---typically, MPST theories assume a global view of \emph{all} participants in a session. 
Lastly, the \emph{objects} of actions in MPST (e.g., the recipient of a sent 
message) are \emph{hard coded} because role names are \emph{constants}.
% Lastly, the \emph{subjects} of actions in MPST (e.g., the senders or receivers
% of a message) are \emph{hard coded} because role names are \emph{constants}.
%
%If we wish to reason about generalising participants in a protocol, role names must be \emph{dynamic}.

% Lastly, \emph{delegation} only allows processes to communicate \emph{one side} of a multiparty session endpoint.
% %
% Since participant names in MPST are static information, the other side of a delegated communication action must be \emph{hard coded}.
% %
% This is not so much of a problem under the assumption of a total view of all participants in a system, but is problematic if we wish to further generalise MPST and align their expressiveness with what is expected and desirable in practical distributed protocols.

\begin{example}[Load Balancer for $n$ clients]\label{ex:lb-intro}
    We introduce \emph{replication} and \emph{first-class roles}, combined with undirected choice in sends, to generalise \Cref{ex:intro} to support \emph{two workers} and \emph{any number} of clients.
    {\small\begin{equation}
        \begin{array}{r l}
            \type{S_\role{s}} &:= \type{!\role \alpha {\&} \msgLabel{req}(\texttt{int}) \then {\o} \left\{\begin{array}{l}
                \role{w_1} \msgLabel{fw}(\texttt{int}, \role \alpha) \then \role \alpha {\o} \msgLabel{wrk}(\role{w_1}) \\
                \role{w_2} \msgLabel{fw}(\texttt{int}, \role \alpha) \then \role \alpha {\o} \msgLabel{wrk}(\role{w_2})
            \end{array}\right.
            }
            \\[1em]
            \type{S_\role{w_i}} &:= \type{!\role s {\&} \msgLabel{fw}(\texttt{int},\role \gamma) \then \role \gamma {\o} \msgLabel{ans}(\texttt{str})} \tab\tab \textnormal{for } i \in \{1,2\}
        \end{array}
    \end{equation}}%
    The first difference is the use of the \emph{bang} (\type{!}) operator to denote a \emph{replicated} action, \ie\ one which may occur \emph{any number} of times.
    This makes the server agnostic to the \emph{number} of requests.
    Second, a server now waits for requests---not from a specific client---but from \emph{any} participant, binding the name of the sender to role variable $\role \alpha$.
    This makes the server agnostic to the \emph{source} of the requests.
    The server then makes a \emph{choice} to forward the request to one of two workers, notably whilst passing the name of the client as one of the payloads.
    Finally, the server informs the client of the choice it made by sending the name of the worker in that branch.
    The worker type is also updated to be replicated, as it is dependent on the number of requests forwarded by the server.
    Notably, it receives the name of the client in the forward message, binding it to $\role \gamma$, and uses it to send the final answer. 
    A client may now be defined as:
    {\small\begin{equation}
        \type{S_\role{c_j}} := \type{\role s {\o} \msgLabel{req}(\texttt{int}) \then \role s {\&} \msgLabel{wrk}(\role \omega) \then \role \omega \& \msgLabel{ans}(\texttt{str})} \tab \textnormal{for any } j \in \mathbb{N}
    \end{equation}}%
    As a result of the \emph{replicated types} and \emph{first-class role names} 
    on the server-side, we may instantiate \emph{any number of clients} and 
    have them make \emph{any number of requests}---all without changing the server-side protocol.
    Conversely, updating the number of workers has \emph{no impact} on the types of clients.
    Thus, this extension promotes \emph{modular} design of components in multiparty systems.
\end{example}

In fact, as we will see in~\Cref{sec:disc}, the addition of
replication---especially when it is used in tandem with recursion---has several
surprising consequences, in particular allowing us to describe \emph{context-free
protocols} as well as protocols that deal with \emph{races} and \emph{mutual exclusion}.

% In fact, as we will see in~\Cref{sec:disc}, 
% The addition of replication---especially when it is used in tandem with recursion---has several
% surprising consequences, in particular allowing us to describe \emph{context-free
% protocols} as well as protocols that deal with \emph{races} and \emph{mutual exclusion}.

\subsubsection{Contributions.}
The overarching contribution of this paper is the first integration of
replication and first-class roles into a generalised MPST calculus, and
an exploration of the impacts of these extensions on expressiveness and decidability.
%
%%   In this paper, we introduce for the first time, \emph{replication} and \emph{first-class roles} into MPST.
%%   %
%%   We show how these constructs can be used with a standard calculus to capture \emph{more} well-typed programs than the most expressive of MPST systems.
%%   %
%%   Given this is the first work to introduce these constructs in a multiparty setting, we also conduct a foundational study of replication in MPST.
%%   %
%%   We find that \emph{replication} increases the expressiveness of the type grammar beyond that of regular languages. %to that of \emph{context-free} languages, and \emph{replication with free role names} further increases expressiveness to that of \emph{context-sensitive} languages.
%%   %
%%   We also show how, despite the increased expressiveness, replication \emph{does not} subsume tail-recursion---encouraging the use of both constructs in one type system.
%%   %
Our specific contributions are as follows:
\begin{enumerate}
    \item We present \MPSTb, the first multiparty
        session-typed language with \emph{replication} and \emph{first-class
        roles} (\Cref{sec:lang}), and prove its \emph{metatheory} in the form of \emph{subject
        reduction} and \emph{session fidelity} properties (\Cref{sec:meta}).
    \item We show several expressiveness results through a series of
        representative examples (\Cref{sec:exp}): in particular,
        replication lifts the expressive power of types and thus we give the
        first account of \emph{context-free} MPST. We show that combining
        both replication and recursion allows us to model races and
        mutual exclusion; we demonstrate nontrivial examples
        including \emph{binary tree serialisation}, the \emph{dining
        philosophers problem}, and an \emph{auction service}.
    \item We demonstrate the impacts of replication on the decidability of
        typechecking~(\Cref{sec:decidability}). We show that the decidability of
        typechecking is contingent on the decidability of a given safety
        property, and demonstrate conditions guaranteeing a property to be
        decidable.  Finally we show two syntactic approximations to allow us to
        verify that a property is decidable.
\end{enumerate}

\Cref{sec:related} gives an account of related work and \Cref{sec:conc}
concludes.

\section{Multiparty Session Types with a Bang!}\label{sec:lang}
In this section we introduce $\MPSTb$, a conservative extension of existing
multiparty session
calculi~\cite{DBLP:conf/sfm/CoppoDPY15,DBLP:journals/darts/ScalasDHY17,DBLP:journals/pacmpl/ScalasY19}
with support for \emph{replication} and \emph{first-class roles}.

\begin{figure}[t]
    {\small\begin{align*}
        c &::= \ses{s}[\q] \mid x  &(\textit{session w/ role, channel variable})\\
        \role{\rho} &::= \q \mid \role \alpha &(\textit{role name value, \hl{role name variable}})\\
        a &::= c \mid \role{\alpha} \tab V ::= c \mid \role{\rho}  &(\textit{names, values}) \\
        b &::= x \mid \role \alpha \tab d ::= \ses{s}[\q] \mid \q &(\textit{binders, concrete values}) \\
        P,Q &::= P \| Q \mid \newnt{s}\, P \mid \0 &(\textit{composition, restriction, termination})\\
            &\mid {\textstyle\sum_{i \in I}} \snd{c_i}{\role{\rho_i}}{\msgLabel{m}_i\langle \poly{V_i} \rangle} P_i &(\textit{choice of sends})\\
            &\mid [{!}]\,c[\role{\rho}] \; \&_{i \in  I}\; \m_i (\poly{b_i}) \then P_i &(\textnormal{\hl{[\emph{replicated}]} \emph{branching receive}})\\
            &\mid \definD{D\;} Q \mid \recVar{X}\langle \poly{c} \rangle &(\textit{process definition, process call})\\
        D &::= \recVar{X}(\poly{x}) = P  &(\textit{process declaration}) 
    \end{align*}}
    \caption{Syntax of \MPSTb}
    \label{fig:lang-syntax}
\end{figure}
%   \noindent
%   Restriction, branching and process definitions act as binders.
%   %
%   Replicated branching additionally acts as a binder on role names.
%   %
%   $\fs(P)$ is the set of free \emph{sessions} in $P$; $\fv(P)$ is the set of free \emph{variables}; $\fpv(P)$ is the set of free \emph{process variables}; and $\frv(P)$ is the set of free \emph{role variables}. 
%   %
%   We assume $I \neq \emptyset$ for branching and choice.

\subsection{Language}

\Cref{fig:lang-syntax} shows the syntax of $\MPSTb$.

\paragraph{Names, values, and binders.}
A \emph{session name}, ranged over by $\ses{s}, \ses{s^\prime}, \dots$,
represents a collection of interconnected participants.  A \emph{role} is a
participant in a multiparty communication protocol, and each \emph{communication
endpoint} $\ses{s}[\q]$ is obtained by indexing a session name with a role.
In contrast to existing MPST calculi, \MPSTb supports \emph{first-class} roles,
meaning that a role may be communicated as part of a message. To this end, a role
$\role{\rho}$ may either be a concrete role \role{p} (e.g., \role{s} or \role{w}
in our load balancing example) or a \emph{role variable} \role{\alpha}.
A name $a$ is either an endpoint, a variable, or a role variable, whereas
a value $V$ is either a channel or a role.
Binders $b$ are used when receiving a message and can either be a variable
binder or a role variable binder.
Concrete values $d$ are used when sending a message (at runtime) and are either
an endpoint or a role name value.

\paragraph{Processes.}
Processes are ranged over by $P,Q,R,\dots$: process $P \| Q$ denotes
$P$ and $Q$ running in parallel; \emph{session restriction} $\newnt{s} P$ binds
session name $\ses{s}$ in process $P$; and $\0$ is the inactive process.

As in the $\pi$-calculus, but unlike other MPST calculi, $\MPSTb$ supports
\emph{output-guarded choice} $\sum_{i\in I}
\snd{c_i}{\role{\rho_i}}{\msgLabel{m}_i\langle \poly{V_i} \rangle} P_i$,
allowing a nondeterministic send along any $c_i$ to role $\role{\rho_i}$ with
label $\m_i$ and payload $\poly{V_i}$, with the process continuing as $P_i$.
\emph{Branching receive} $c[\role{\rho}] \&_{i \in I}\ \m_i (\poly{b_i}) \then P_i$ 
denotes a process waiting on channel $c$ for one of a set of messages from 
role $\role{\rho}$ with label $\m_i$, binding the received data to $\poly{b_i}$ 
before continuing according to $P_i$.
It is key to note that the object of a communication action is indicated via
$\role{\rho}$ (for both sending and receiving), which can either be a concrete
value, or a role variable.
The second key difference is the optional use of the \emph{bang} $!$ %operator
with a branching receive, marking it as \emph{replicated}---\ie\ it may
% input that may
be used $0$ or more times, modelling \emph{infinitely available servers}.

\begin{figure}[t!]
    \headersig{Process reduction}{$P_1 \redP P_2$}
    {\small\begin{mathpar}
        \infer[\ \textnormal{if} $k \in I$]{R-C}{r-c}
        {}
        {\rcvI{\ses{s}[\role{p}]}{\role{q}}{\m_i(\poly{b_i})} P_i\ \|\ \snd{\ses{s}[\role{q}]}{\role{p}}{\m_k\langle \poly{d} \rangle} Q
        \ \redP\
        P_k \sub{\poly{d}}{\poly{b_k}}\ \|\ Q}

        \infer[\ \textnormal{if} $k \in I$]{R-!C$_1$}{r-bang1}
        { R = {!}\rcvI{\ses{s}[\role{p}]}{\role{q}}{\m_i(\poly{b_i})} Q_i }
        {\snd{\ses{s}[\role{q}]}{\role{p}}{\m_k\langle \poly{d} \rangle} P\ \|\ R
        \ \redP\
        P\ \|\ R\ \|\ Q_k \sub{\poly{d}}{\poly{b_k}}}

        \infer[\ \textnormal{if} $k \in I$]{R-!C$_2$}{r-bang2}
        { R = {!}\rcvI{\ses{s}[\role{p}]}{\role{\alpha}}{\m_i(\poly{b_i})} Q_i }
        {\snd{\ses{s}[\role{q}]}{\role{p}}{\m_k\langle \poly{d} \rangle} P\ \|\ R
        \ \redP\
        P\ \|\ R\ \|\ Q_k \sub{\poly{d}}{\poly{b_k}} \sub{\role{q}}{\role{\alpha}}}

        \infer[\textnormal{ for $j \in I$}]{R-+}{r-choice}
        {}
        {{\textstyle\sum_{i \in I}}\, \snd{\ses{s_i}[\role{q_i}]}{\role{p_i}}{\msgLabel{m}_i\langle \poly{d_i} \rangle} P_i
        \ \redP\ 
        \snd{\ses{s_j}[\role{q_j}]}{\role{p_j}}{\msgLabel{m}_j\langle \poly{d_j} \rangle} P_j}

        \infer{R-\recVar{X}}{r-proc-call}
        {}
        {\defin{X}{\poly{x}}{P\;} (\recVar{X}\langle \poly{\ses{s^\prime}[\role{r}]} \rangle \| Q)
        \ \redP\ 
        \defin{X}{\poly{x}}{P\;} (P\sub{\poly{\ses{s^\prime}[\role{r}]}}{\poly{x}} \| Q)}

        \infer{R-$\equiv$}{r-cong}
        {P \equiv P^\prime \\ P \redP Q \\ Q \equiv Q^\prime}
        {P^\prime \redP Q^\prime}

        \infer{R-$\rc$}{r-ctx}
        {P \redP P^\prime}
        {\rc[P] \redP \rc[P^\prime]}
    \end{mathpar}}
    \caption{Operational semantics for the extended multiparty session $\pi$-calculus.}
    \label{fig:os}
\end{figure}

% \subsection{Reduction Rules}
%
\begin{definition}[Reduction context]\label{def:ctx}
    A \emph{reduction context} $\rc$ is given as: 
    $\rc ::= \rc\|P \mid \newnt{s}\,\rc \mid \definD{D\;}\rc \mid [\ ]$.
\end{definition}

% SJF: I imagine this is too long. We can trim if we're short on space.
A \emph{reduction context} allows us to evaluate processes under parallel composition and name restrictions.
With this, reduction rules on processes are given in \Cref{fig:os}. 
The rules make use
of a standard \emph{structural congruence} $\equiv$ (\Cref{app:cong}) that
allows us to treat parallel composition as commutative and associative, as well
as including the usual $\pi$-calculus scope extrusion rule.

Rule \Cref{rule:r-c} shows synchronous communication between two processes in
session $\ses{s}$. The first process, playing role $\p$, offers role $\q$ a
choice of message labels and associated process continuations. The second
process, playing role $\q$, sends a message with label $\msgLabel{m}_k$ and
transmits payloads $\poly{d}$. The first process reduces to the selected
continuation with the transmitted payloads substituted for the binders in the
selected branch, and the second process reduces to the continuation $Q$.

Rules \Cref{rule:r-bang1} and \Cref{rule:r-bang2} describe communication with a
\emph{replicated} process $R$.  Rule \Cref{rule:r-bang1} is similar to
\textsc{R-C} but the replicated process remains unchanged and the continuation
$Q_k$ is evaluated in parallel.  Rule \Cref{rule:r-bang2} handles the case where
the replicated process does not need to receive from a specific role, but
instead allows communication with an \emph{arbitrary} role: the rule binds the
sending role to $\role{\alpha}$ in the replicated continuation.
We refer to this as a \emph{universal receive}.

The \Cref{rule:r-choice} rule evaluates a branching output by nondeterministically
evaluating to one of the sending branches; rule \Cref{rule:r-proc-call} handles a recursive
call.
Finally, rules \Cref{rule:r-cong} and \Cref{rule:r-ctx} are administrative,
allowing reduction modulo structural congruence and under contexts respectively.

%% ???
\begin{figure}[t]
    {\small\begin{align*}
        P_\role{c} &:= 
            \snd{\ses{s}[\role{c}]}{\role{s}}{\msgLabel{req} \langle 42 \rangle}
            \rcv{\ses{s}[\role{c}]}{\role s}{\msgLabel{wrk} (\role \omega)}
            \rcv{\ses{s}[\role{c}]}{\role \omega}{\msgLabel{ans} (z)} \0 && \\
        P_\role{s} &:= 
            {!}\rcv{\ses{s}[\role{s}]}{\role{\alpha}}{\msgLabel{req} (x)}
            {\sum_{i=1}^{2}} \snd{\ses{s}[\role{s}]}{\role{w_i}}{\msgLabel{fw}\langle x,\role \alpha \rangle}
            \snd{\ses{s}[\role{s}]}{\role{\alpha}}{\msgLabel{wrk}\langle \role{w_i} \rangle} \0 && \\
        P_\role{w_i} &:= 
            {!}\rcv{\ses{s}[\role{w_i}]}{\role{s}}{\msgLabel{fw} (y,\role \gamma)}
            \snd{\ses{s}[\role{w_i}]}{\role \gamma}{\msgLabel{ans} \langle \text{``life''} \rangle} \0 &&
    \end{align*}}
    \caption{Process definitions for load balancer example}
    \label{fig:balancer-procs}
\end{figure}

\begin{example}[Load Balancer: Process Reduction]\label{ex:lb-proc-red}
    We recall the load balancer example from \Cref{sec:intro}, but this time, we
    present processes for each role (\Cref{fig:balancer-procs}) to
    demonstrate how our operational semantics handles communication.
    Consider a single client $P_\role{c}$ in parallel with three server-side
    processes:%
    {\small
    \begin{flalign*}
            & P_\role{c} &\|\tab& P_\role{s}  &\|\tab&  P_\role{w_1}  &\|\tab& P_\role{w_2}   &&\\
        =\  & \snd{\ses{s}[\role{c}]}{\role{s}}{\msgLabel{req} \langle 42 \rangle} P_\role{c}^\prime  
            &\|\tab&
              {!}\rcv{\ses{s}[\role{s}]}{\role{\alpha}}{\msgLabel{req} (x)} P_\role{s}^\prime &\|\tab&  P_\role{w_1}  &\|\tab& P_\role{w_2} &&
    \end{flalign*}}%
    Using \Cref{rule:r-bang2}, the client and server reduce. 
    The reduction advances the client to its continuation $P_\role{c}^\prime$, and pulls out a copy of the server's continuation as a new process.
    It is key to note that $\role \alpha$ is acting as a binder in $P_\role{s}$, therefore, in the continuation we observe a role variable substitution:
    {\small
    \begin{flalign*}
        \redP\ & P_\role{c}^\prime &\|\ & P_\role{s} &\|\ & \left({\sum_{i=1}^{2}} \snd{\ses{s}[\role{s}]}{\role{w_i}}{\msgLabel{fw}\langle x,\role \alpha \rangle} P_\role{s_i}^{\prime\prime}\right)\  \sub{42}{x} \sub{\role c}{\role \alpha} &\|\ & P_\role{w_1}  &\|\ & P_\role{w_2}   &&\\
        =\ & P_\role{c}^\prime &\|\ & P_\role{s} &\|\ & {\sum_{i=1}^{2}} \snd{\ses{s}[\role{s}]}{\role{w_i}}{\msgLabel{fw}\langle 42,\role c \rangle} \left( P_\role{s_i}^{\prime\prime} \  \sub{42}{x} \sub{\role c}{\role \alpha}\right) &\|\ & P_\role{w_1}  &\|\ & P_\role{w_2}   &&
    \end{flalign*}
    }%
    The spawned server process will then non-deterministically choose a worker
    to send to via rule \Cref{rule:r-choice}; suppose $\role{w_1}$ is picked.
    {\small
    \begin{flalign*}
        \redP\ & P_\role{c}^\prime &\|\ & P_\role{s} &\|\ & \snd{\ses{s}[\role{s}]}{\role{w_1}}{\msgLabel{fw}\langle 42,\role c \rangle} P_\role{s_1}^{\prime\prime}  &\|\ & {!}\rcv{\ses{s}[\role{w_1}]}{\role{s}}{\msgLabel{fw} (y,\role \gamma)} P_\role{w_1}^\prime  &\|\ & P_\role{w_2}
    \end{flalign*}
    }%
    Communication is now possible between the spawned server process and worker $\role{w_1}$, using rule \Cref{rule:r-bang1}.
    As before, this communication advances the sender process and pulls out a copy of the worker's continuation:
    {\small
    \begin{flalign*}
        \redP\ & P_\role{c}^\prime &\|\tab& P_\role{s} &\|\tab& P_\role{s_1}^{\prime\prime}  &\|\tab&  P_\role{w_1} &\|\tab&  P_\role{w_1}^{\prime}\ \sub{42}{y}\sub{\role c}{\role \gamma}  &\|\tab& P_\role{w_2} &&
    \end{flalign*}}%
    The client can now learn which worker was chosen, 
    terminating the spawned server process, then the answer is exchanged 
    between the worker and client:
    {\small
    \begin{flalign*}
        \redP\ & 
        \rcv{\ses{s}[\role{c}]}{\role{w_1}}{\msgLabel{ans} (z)} \0
        &\|\ & P_\role{s} &\|\ & \0  &\|\ &  P_\role{w_1} &\|\ &  
        \snd{\ses{s}[\role{w_1}]}{\role c}{\msgLabel{ans} \langle \text{``life''} \rangle} \0 \!
        &\|\ & P_\role{w_2} && \\
        \redP\ & \0
        &\|\ & P_\role{s} &\|\ & \0  &\|\ &  P_\role{w_1} &\|\ & \0
        &\|\ & P_\role{w_2} && \\
        \equiv\ & P_\role{s} \ \|\  P_\role{w_1} \ \|\ P_\role{w_2}
    \end{flalign*}}
\end{example}

\subsection{Types}\label{sec:types}

Figure~\ref{fig:type-syntax} shows the syntax of \MPSTb types. 

\paragraph{Syntax of types.}
Session types $\St$ type communication endpoints. They consist of
branching types $\type{S^\&}$, 
\emph{replicated} branching types $\type{!(S^\&)}$,
selection types $\type{S^\o}$,
recursive types $\type{\mu \recVar{t}.  S}$ and variables
$\type{\recVar{t}}$, and the
$\tEnd$ type.

Branching session types have the form
$\type{ \role{\rho} \&_{i \in I}\m_i (\poly{T_i}) . S_i}$ indicating that role
$\role{\rho}$ offers a choice of message labels $\type{\m_i}$ with payload types
$\type{\poly{T_i}}$ and continuations $\type{S_i}$.
Similarly, selection session types 
$\type{\o_{i \in I}\role{\rho_i}\,\m_i (\poly{T_i}) . S_i}$ indicate 
an internal choice towards one of a set of roles $\role{\rho_i}$, with a message label, given payload types and
% that role
% $\role{\rho}$ \emph{selects} a message label with the given payload types and
continues as the corresponding continuation type.
A \emph{replicated} branching type $\type{!\role{\rho} \&_{i \in I}\m_i
(\poly{T_i}) . S_i}$ types a replicated channel. 
As with processes, role $\role{\rho}$ can either
be a concrete role, or a role variable in \emph{binding} position.

Our type system supports \emph{singleton} types for roles: role
$\role{\rho}$ has singleton type~$\role{\rho}$, used to pattern-match
specific roles in payloads.
\emph{Static} types $\type{T}$ are used for protocol
specification, whereas \emph{runtime} types $\type{U}$ are used by the
type semantics to include a notion of \emph{parallel composition} 
at type level:
originally introduced by Le~Brun and Dardha~\cite{DBLP:conf/forte/BrunD24},
a type $\type{U_1 \| U_2}$ 
allows a channel to be associated with multiple
``active'' session types.

We assume that branching and selection types include a non-empty set of messages
with pairwise distinct message names $\m_i$ (per role for $\type{\o}$). We further take an
\emph{equi-recursive} view of types, identifying a recursive type with its
unfolding (i.e., $\type{\mu \recVar{t} . S} = \type{S} \{ \type{\mu \recVar{t} .
S} / \type{\recVar{t}} \}$) and require that recursion variables are guarded.

\begin{figure}[t]
    {\small\begin{align*}
        \textit{Session types}& & \type{S} &::= \type{S^\&} \mid \type{{!} (S^\&)} \mid \type{S^\o} \mid \type{\mu \recVar{t} . S} \mid \type{\recVar{t}}  \mid \tEnd \\
        % \textit{Basic types}& & \type{C} &::= \intT \mid \stringT \mid \dots \\
        \textit{Branching/Selection Types}& &\type{S^\&} &::= \type{ \role{\rho} \&_{i \in I}\m_i (\poly{T_i}) . S_i} \tab\tab 
        \type{S^\o} ::= \type{\o_{i \in I}\role{\rho_i}\,\m_i (\poly{T_i}) . S_i} \\
        % \textit{Replicated ST}& & \type{R} &::= \type{{!} (S^\&)}\\
        \textit{Role singletons}& & \role{\rho} &::= \role{q} \mid
        \role{\alpha}\\
        \textit{Static types}& & \type{T} &::= \type{S} \mid \role{\rho} \\ % \mid \type{C}
        \textit{Runtime types}& & \type{U} &::= \type{S} \mid \type{(U_1 \| U_2)}
\end{align*}}
\caption{Syntax of Types}
\label{fig:type-syntax}
\end{figure}

\begin{definition}[Subtyping]\label{def:subtyping}
    The \textbf{subtyping relation} $\subT$ is co-inductively defined on types by the following inference rules:
    {\normalfont\small\begin{mathpar}\mprset{fraction={===}}
        \inferrule
        {\type{(\poly{T_i} \,\subT\, \poly{T_i^\prime})_{i \in I}} \\ \type{(S_i\,\subT\, S_i^\prime)_{i \in I}}}
        {\type{\role{\rho} \&_{i \in I}\m_i (\poly{T_i}) . S_i \,\subT\, \role{\rho} \&_{i \in I \cup J}\m_i (\poly{T_i^\prime}) . S_i^\prime}}
        
        \inferrule
        {\type{(\poly{T_i} \supT \poly{T_i^\prime})_{i \in I}} \\ \type{(S_i\,\subT\, S_i^\prime)_{i \in I}}}
        {\type{\o_{i \in I\cup J}\role{\rho_i}\,\m_i (\poly{T_i}) . S_i \,\subT\, \o_{i \in I}\role{\rho_i}\,\m_i (\poly{T_i^\prime}) . S_i^\prime}}

        \inferrule
        {\type{S^\&_1 \,\subT\, S^\&_2}}
        {\type{{!}S^\&_1 \,\subT\, {!}S^\&_2}}
        
        \inferrule
        {\type{U \,\subT\, U^\prime} \\ \type{S \,\subT\, S^\prime}}
        {\type{U \| S \,\subT\, U^\prime \| S^\prime}}

        \inferrule
        {\type{S}\sub{\type{\mu\recVar{t}.S}}{\type{\recVar{t}}}\; \type{\,\subT\, S^\prime}}
        {\type{\mu\recVar{t}.S \,\subT\, S^\prime}}

        \inferrule
        {\type{S\,\subT\, S^\prime}\sub{\type{\mu\recVar{t}.S^\prime}}{\type{\recVar{t}}}}
        {\type{S \,\subT\, \mu\recVar{t}.S^\prime}}

        \inferrule
        { }
        {\type{T \,\subT\, T}}
    \end{mathpar}}
\end{definition}

Our definition of subtyping (\Cref{def:subtyping}) is mostly standard.
We adopt the convention of smaller types being ones with less external choice and more internal choice (\emph{à la} Gay and Hole~\cite{DBLP:journals/acta/GayH05}).
Subtyping of replication is based on regular branching; and parallel types are related iff their session types are subtypes.
Subtypes are related up-to their recursive unfolding, and subtyping is \emph{reflexive}.

\begin{definition}[Type Congruence]\label{def:type-cong}
    \emph{Type congruence} allows us to treat parallel runtime types as commutative and associative with identity element \textnormal{$\tEnd$}.
    {\normalfont\small\begin{mathpar}
    \inferrule{}{\type{U_1\|U_2 \equiv U_2\|U_1}}
    
    \inferrule{}{\type{(U_1\|U_2)\|U_3 \equiv U_1\|(U_2\|U_3)}} 

    \inferrule{}{\type{U\|\tEnd \equiv U}}
    \end{mathpar}}
\end{definition}

\begin{figure}[t]
    \header{Typing contexts}
    \[\begin{array}{l}
        \Theta ::= \emptyset \!\mid\! \Theta, X \hasT \type{\poly{S}}
        \tab\tab
        \Gamma ::= \emptyset \mid  
                   \Gamma, c \hasT \type{U}  \mid 
                   \Gamma, \role{\alpha} \hasT \role{\alpha} 
                %    \Gamma, \ses{s}[\role{q}] \hasT \type{D} \mid  
                %    \Gamma, x \hasT \type{D}  \mid 
    \end{array}\]

    \headersig{Context splitting}{$\Gamma = \Gammab{1} \cSplit \Gammab{2}$}
    \begin{mathpar}
        \infer{}{split-0}
        { }
        {\emptyset = \emptyset \cSplit \emptyset}

        \infer{}{split-l1}
        {\Gamma = \Gammab{1} \cSplit \Gammab{2}}
        {\Gamma,c \hasT \type{U} = \Gammab{1},c \hasT \type{U} \cSplit \Gammab{2}}

        \infer{}{split-r1}
        {\Gamma = \Gammab{1} \cSplit \Gammab{2}}
        {\Gamma,c \hasT \type{U} = \Gammab{1} \cSplit \Gammab{2},c \hasT \type{U}}

        \infer{}{split-role}
        {\Gamma = \Gammab{1} \cSplit \Gammab{2}}
        {\Gamma,\role{\alpha} \hasT \role{\alpha} = \Gammab{1},\role{\alpha} \hasT \role{\alpha} \cSplit \Gammab{2},\role{\alpha} \hasT \role{\alpha}}

        \infer{}{split-par}
        {\Gamma = \Gammab{1} \cSplit \Gammab{2}}
        {\Gamma,c \hasT \type{U_1\|U_2} = \Gammab{1},c\hasT \type{U_1} \cSplit \Gammab{2},c \hasT \type{U_2}}
    \end{mathpar}

    \headersig{Context addition}{$\Gammab{1} + \Gammab{2} = \Gamma$}
    \begin{mathpar}
        \infer{}{+-empty}
        { }
        {\Gamma + \emptyset  = \Gamma}
        \ 
        \infer{}{+-Vl}
        {\Gammab{1} + \Gammab{2} = \Gamma \\ a \not\in \dom(\Gammab{2})}
        {\Gammab{1}, a \hasT \type T + \Gammab{2}  = \Gamma, a \hasT \type T}
        \ 
        \infer{}{+-Vr}
        {\Gammab{1} + \Gammab{2} = \Gamma \\ a \not\in \dom(\Gammab{1})}
        {\Gammab{1} + \Gammab{2}, a \hasT \type T  = \Gamma, a \hasT \type T}

        \infer{}{+-role}
        {\Gammab{1} + \Gammab{2} = \Gamma }
        {\Gammab{1},\role{\alpha} \hasT \role{\alpha} + \Gammab{2},\role{\alpha} \hasT \role{\alpha}  = \Gamma,\role{\alpha} \hasT \role{\alpha}}

        \infer{}{+-par}
        {\Gammab{1} + \Gammab{2} = \Gamma }
        {\Gammab{1},c:\type{U_1} + \Gammab{2},c:\type{U_2} = \Gamma,c:\type{U_1\|U_2}}
    \end{mathpar}

    \headersig{Context role insertion}{$\type{\Gamma} \insnew{\role{\rho}}$}
    \begin{mathpar}
        \type{\Gamma} \insnew{\role{q}} = \type{\Gamma}

        \type{\Gamma} \insnew{\role{\alpha}} = \type{\Gamma} + \role{\alpha} : \role{\alpha}
    \end{mathpar}

    \caption{Typing contexts and context operations.}
    \label{fig:context-operations}
\end{figure}

Figure~\ref{fig:context-operations} shows the definition of typing contexts
and their operations.
Context $\Theta$ is used to type recursive process definitions, mapping process
variables $X$ to tuples of parameter types. Context $\Gamma$ maps channels to
types, and role variable singletons. Context
composition $\Gamma,\Gammap$ is defined iff $\Gamma$ and $\Gammap$
have disjoint domains. We lift subtyping and type congruence to contexts in the
usual way.

As inspired by (e.g.)~\cite{DBLP:journals/iandc/Vasconcelos12}, linearity is
enforced through the use of a \emph{context split} operation $\Gamma = \type{\Gamma_1\cSplit \Gamma_2}$
that splits a context $\Gamma$ into two environments
$\type{\Gamma_1}$ and $\type{\Gamma_2}$. These environments may share variables with
unrestricted types and role variables.  Additionally, a channel $c$ with
runtime type $\type{U_1 \| U_2}$ may be split such that $\type{\Gamma_1}$ contains $c : \type{U_1}$
and $\type{\Gamma_2}$ contains $c : \type{U_2}$; this allows us to type endpoints used by
different replicated processes.
The inverse operation is \emph{context addition} $\type{\Gamma_1} + \type{\Gamma_2} = \Gamma$
that combines $\type{\Gamma_1}$ and $\type{\Gamma_2}$ into an environment $\Gamma$; again
roles may be shared across environments. In the case that we have $\type{\Gamma_1}, c :
\type{U_1} + \type{\Gamma_2}, c : \type{U_2}$ (i.e., a linear variable used in two different
contexts), context addition results in $c$ having the combined runtime type
$\type{U_1 \| U_2}$.
The \emph{role insertion} operation $\type{\Gamma} \insnew{\role{\rho}}$ is used when
typing replicated receives and extends a context with a mapping $\role{\alpha} :
\role{\alpha}$ in the case that $\role{\rho}$ is a role variable, and leaves
$\type{\Gamma}$ unchanged otherwise.

%\subsubsection{Type Checking.}\label{sec:type-check}

Before presenting the typing rules, we introduce the notion of
a \emph{session protocol}, mapping role names to session types. 
% this will be
% used extensively in our type semantics. 
%
The $\extract_\ses{s}(\Psi)$ function associates a session protocol $\Psi$ to a 
concrete session $\ses{s}$ to form a typing context.

\begin{definition}[Session Protocol]\label{def:protocol}
    A session protocol $\Psi$ is attached to a session through the restriction operation in the form of $\new{s}{\Psi} P$, dictating the protocol for $\ses{s}$ in $P$.
    A protocol $\Psi$ is a partial mapping from role to session type, given as:
    {\normalfont\[\Psi ::= \emptyset \mid \Psi,\q : \type{S}\]}
    A protocol $\Psi$ is well-formed iff it does not contain parallel types.
    We obtain a typing context by associating roles in a protocol 
    with a session name.
    Formally:
    {\normalfont\small\[
        \extract_\ses{s}(\q : \St, \Psi) := \ses{s}[\q] \hasT \St, \extract_\ses{s}(\Psi)
        \tab\tab
        \extract_\ses{s}(\emptyset) := \emptyset
    \]
    }
\end{definition}

Next, we introduce predicate $\pEnd$ on type contexts.
This ensures that an environment is \emph{end-typed}: that is, its  session types are
congruent to type $\tEnd$.
% ; the latter ensures that an environment is either
% end-typed or contains replicated types, this will be used by the metatheory (\cf\ \Cref{sec:meta}).

\begin{definition}[End-typed environment]\label{def:end}
    A context is \emph{end-typed}, written \textnormal{$\pEnd(\Gamma)$}, iff it
    only maps channels to session type \textnormal{$\tEnd$}.
    %
    %A context is unrestricted, written \textnormal{$\un(\Gamma)$}, if it is end-typed or if the only other types in the context are replicated types.
    %
    {\normalfont\begin{mathpar}
        \inferrule
        { }
        {\pEnd(\emptyset)}

        \inferrule
        {\pEnd(\Gamma)}
        {\pEnd(c \hasT \tEnd, \Gamma)}

        \inferrule
        {\pEnd(\Gamma)}
        {\pEnd(\role{\alpha} \hasT \role{\alpha}, \Gamma)}

        \inferrule
        {\Gamma \type{\equiv} \Gammap \ \ \pEnd(\Gamma)}
        {\pEnd(\Gammap)}
%% \ \ 
%%           \inferrule
%%           {\pEnd(\Gamma)}
%%           {\un(\Gamma)}
%%   \ \ 
%%           \inferrule
%%           {\un(\Gamma)}
%%           {\un(c \hasT \type{!S^\&}, \Gamma)}
    \end{mathpar}}
    % \noindent
    %
    % {\normalfont\begin{mathpar}
    %     \inferrule
    %     {\un(\Gamma)}
    %     {\un(\Gamma \cSplit c \hasT \type{R})}
        
    %     \inferrule
    %     {\pEnd(\Gamma)}
    %     {\un(\Gamma)}
    % \end{mathpar}}
\end{definition}

% Thoughts on the syntax of role variable in types:
% - Expression typing will never assign a type rho.
% - Type vars rho will only ever be used in LTS and in typing rules to make sure we know about what we sending.
% - Kinding is more explicit about how we reason on role variables in types.
% - Two environments: 
%  -- If we have to keep info in both contexts, it becomes redundant
%  -- We could remove type 'role', keep track of existance in some second context by syntactically separating the vars and the role-vars --> downside is we lose the conservative extension. 

\begin{figure}[t]
        \setlength{\lineskip}{0pt} 
        \mprset{sep=1em}
        \headertwo
            {Typing Rules for Values and Recursive Definitions}
            {\framebox{$\Gamma \vdash V \hasT \type{T}\vphantom{\poly{S}}$}~
             \framebox{$\Theta \vdash X \hasT \type{\poly{S}}$}}
        {\normalfont\small\begin{mathpar}
            \infer{T-Wkn}{t-weak}
            {\Gammab{1} + \Gammab{2} = \Gamma \\\\ \Gammab{1} \vdash V \hasT \type{T} \\ \pEnd(\Gammab{2}) }
            {\Gamma \vdash V \hasT \type{T}}
            \hspace{-1.5em}
            \and
            \infer{T-$\role{q}$}{t-roleVal}
            { }
            {\emptyset \vdash \role{q} \hasTT \q}
            \hspace{-1.5em}
            \and
            \infer{T-Sub}{t-sub}
            {\type{S \subT S^\prime}}
            {c \hasTT \type{S} \vdash c \hasTT \type{S^\prime}}
            \hspace{-1.5em}
            \and
            \infer{T-$\role{\alpha}$}{t-roleVar}
            { }
            {\role \alpha \hasTT \role \alpha \vdash \role \alpha \hasTT \role \alpha}
            \hspace{-1.5em}
            \and
            \infer{T-$\recVar{X}$}{t-x}
            {\Theta(\recVar{X}) {=} \type{(S_i)_{i \in 1..n}}}
            {\Theta \vdash \recVar{X} \hasTT \type{(S_i)_{i \in 1..n}}}
        \end{mathpar}}
    
    \headersig{Typing Rules for Processes}{$\Theta;\Gamma \vdash P$}
    {\normalfont\small\begin{mathpar}
        \infer{T-$\0$}{t-0}
        {\pEnd(\Gamma)}
        {\Theta \cons \Gamma \vdash 0}
        \ 
        \infer{T-{$|$}}{t-par}
        {\Theta \cons \type{\Gamma_1} \vdash P_1 \\ \Theta \cons \type{\Gamma_2} \vdash P_2}
        {\Theta \cons \type{\Gamma_1} \cSplit \type{\Gamma_2} \vdash P_1 \| P_2}
        \ 
        \infer{T-{$+$}}{t-choice}
        {(\Theta \cons \Gamma \vdash \snd{c_i}{\role{\role{\rho_i}}}{\msgLabel{m}_i\langle \poly{d_i} \rangle} P_i)_{i \in I}}
        {\Theta \cons \Gamma \vdash {\textstyle\sum_{i \in I}}\, \snd{c_i}{\role{\role{\rho_i}}}{\msgLabel{m}_i\langle \poly{d_i} \rangle} P_i}

        \infer{T-\type{$!$}}{t-bang}
        {
        \type{\Gamma_!} \vdash c \hasT \type{!\role \rho \&_{i \in I}\m_i (\poly{T_i}) . S_i^\prime} \\
        \pEnd(\Gamma) \\\\
        (\Theta \cons \Gamma + c \hasT \type{S_i^\prime} + \poly{b_i} :
        \type{\poly{T_i}} \insnew{\role \rho} \vdash P_i)_{i\in I}
        }
        {\Theta \cons \Gamma \cSplit \type{\Gamma_!} \vdash {!}c[\role \rho]\ \oldAnd_{i\in I}\, \m_i (\poly{b_i}) \then P_i}
        \infer{T-\type{$\&$}}{t-rcv}
        {
        \type{\Gamma_\&} \vdash c \hasT \type{\role \rho \&_{i \in I}\m_i (\poly{T_i}) . S_i^\prime} \\
        \type{\Gamma_r} \vdash \role \rho \hasT \role \rho \\\\
        (\Theta \cons \Gamma + c \hasT \type{S_i^\prime} + \poly{b_i} \hasT \type{\poly{T_i}} \vdash P_i)_{i \in I}
        }
        {\Theta \cons \Gamma \cSplit\type{\Gamma_\&} \cSplit \type{\Gamma_r} \vdash c[\role \rho]\ \oldAnd_{i\in I}\, \m_i (\poly{b_i}) \then P_i}
        
        \infer{T-{$\nu$}}{t-new}
        {\varphi(\extract_\ses{s}(\Psi)) \\\\
          \textnormal{ $\varphi$ is a \emph{safety} property} \\\\ \ses{s} \not\in \Gamma \\ \Theta \cons \Gamma + \extract_\ses{s}(\Psi) \vdash P}
        {\Theta \cons \Gamma \vdash \new{s}{\type{\Psi}} P} 
        \infer{T-\type{$\o$}}{t-send}
        {
        \type{\Gamma_\o} \vdash c \hasT \type{\role{\rho} {\o} \m (\poly{T}) . S^\prime} \\
        \type{\Gamma_r} \vdash \role{\rho} \hasT \role \rho \\\\
        (\type{\Gamma_i} \vdash V_i \hasT \type{T_i})_{i \in 1..n} \\
        \Theta \cons \Gamma + c \hasT \type{S^\prime}  \vdash P
        }
        {\Theta \cons \Gamma \cSplit \type{\Gamma_\o} \cSplit \type{\Gamma_r} (\!\cSplit \type{\Gamma_i})_{i\in 1..n}
        \vdash 
        c[\role{\rho}] \o \m \langle (V_i)_{i \in 1..n} \rangle \,.\, P}        

        \infer[]{T-Def}{t-def}
        {
        \Theta, \recVar{X}\hasT \type{\poly{S}} \cons \Gamma, \poly{x \hasT \type{S}} \vdash P \\
        \pEnd(\Gamma) \\\\
        \Theta, \recVar{X}\hasT \type{\poly{S}} \cons \Gammap \vdash Q
        }
        {\Theta \cons \Gamma \cSplit \Gammap \vdash \defin{X}{\poly{x \hasT \type{S}}}{P\;} Q}
        \quad
        \infer[]{T-Call}{t-call}
        {\Theta \vdash \recVar{X} \hasT (\type{S_i})_{i \in 1..n} \\ \pEnd(\type{\Gamma_0}) \\\\
        \forall i \in 1..n : \type{\Gamma_i} \vdash c_i \hasT \type{S_i} }
        {\Theta \cons \type{\Gamma_0}( \cSplit \type{\Gamma_i})_{i \in 1..n} \vdash \recVar{X}\langle (c_i)_{i \in 1..n} \rangle}
    \end{mathpar}}

    \caption{Typing rules.}
    \label{fig:typing-rules}
\end{figure}

% \todo{Probably do want to say a bit more about the fact that the rules are
% implicitly parameterised by a safety property and it's possible to customise
% this property.}

Figure~\ref{fig:typing-rules} shows the typing rules for \MPSTb. There are three
judgements: the value typing judgement $\type{\Gamma} \vdash V : \type{T}$
assigns type $\type{T}$ to value $V$ under context $\type{\Gamma}$ and consists
of four rules. Rule \Cref{rule:t-weak} allows weakening: if value $V$ has type
$\type{T}$ under some environment $\type{\Gamma_1}$, and we have an end-typed
environment $\type{\Gamma_2}$ such that $\type{\Gamma_1} + \type{\Gamma_2} = \Gamma$, then the
rule allows us to conclude that $V$ has type $\type{T}$ under $\type{\Gamma}$. 
Rule \Cref{rule:t-roleVal} types concrete roles as singleton types under the empty
environment, whereas \Cref{rule:t-roleVar} types a role variable provided
that it is contained within the type environment. 
Finally, rule \Cref{rule:t-sub}
types a linear variable, allowing for subtyping. Judgement $\type{\Theta} \vdash X :
\type{\poly{S}}$ simply looks up process variable $X$ in process environment
$\type{\Theta}$, returning its parameter types; this is achieved by rule \Cref{rule:t-x}.

The final judgement $\type{\Theta} \cons \type{\Gamma} \vdash P$ states that
process $P$ is typable under recursion environment $\type{\Theta}$ and type
environment $\type{\Gamma}$.
Rule \Cref{rule:t-0} types the inactive process under an end-typed environment.
Rule \Cref{rule:t-par} types parallel processes under a
split environment. Rule \Cref{rule:t-choice} types an output-directed choice; since this
operation uses branching control flow, each choice must be typable under the
same environment.

Rules \Cref{rule:t-bang} and \Cref{rule:t-rcv} type replicated and non-replicated receives
respectively. In both cases the rules check that the channel has a receiving
session type. Both rules check that each branch is typable by extending a common
typing context with the variables bound by the receives, along with the
continuation type for the session channel.
In the case of a replicated receive there are two differences: first, the
context used to type each branch must be end-typed (in order to avoid
duplicating linear resources). Second, since the role $\role{\rho}$ may be a
\emph{binding} occurrence of a role variable $\role{\alpha}$, the context used
to type each branch must be extended with the role variable (if applicable)
using the insertion operator.

Rule \Cref{rule:t-new} types a session name restriction $(\nu \ses{s} : \Psi) P$. 
As is standard in generalised MPST~\cite{DBLP:journals/pacmpl/ScalasY19}, the session protocol must satisfy a \emph{safety property} $\varphi$.
We discuss specifics of this property in \Cref{sec:type-sem}, and how it is used in \Cref{sec:meta}.
Informally, the property 
% The
% session protocol must satisfy a \emph{safety property} $\varphi$; we will
% discuss safety properties further in~\Cref{sec:type-sem} but informally this
ensures that all process communication is ``compatible'', and is the weakest
property required to prove subject reduction.
% a process does not select an unavailable message label and that two
% processes do not exchange messages with incompatible payload types.
%
We then prove our metatheory parametric on the \emph{largest} safety property, 
allowing it to be customised to verify more precise properties (e.g., to ensure deadlock-freedom or
termination), based on the requirements of a specific protocol. 
%
% We say \emph{a} safety property as even though (as we will later see) we can
% customise the precise property we use (e.g., to ensure deadlock-freedom or
% termination), we need only prove subject reduction for the \emph{weakest} safety
% property.

Rule \Cref{rule:t-send} types an output if the sending channel 
can be mapped to a selection type with payload types that match the values being sent.
The rule ensures that role $\role \rho$ is either a 
value, or it exists in the type context---\ie\ messages cannot be sent to unbound role variables.
The selection type continuation should be used, along with the common context, to type the continuation process.

Lastly, rules \Cref{rule:t-def} and \Cref{rule:t-call} type recursive processes.
The former populates the recursion environment whilst ensuring that process declarations are well typed under the variables they bind.
The latter checks that types of values used in a process call match what was specified in the declaration.

% \todo{SJF: go from here. It might help for the flow of the text to rearrange a
% couple of the rules, for example putting replicated and non-replicated receives
% closer together.}

\begin{example}[Load Balancer: Type Checking]\label{ex:lb-type-check}
    By introducing a name restriction for session $\ses{s}$ that includes the
    types from~\Cref{ex:lb-intro}, we can type the processes from 
    \Cref{ex:lb-proc-red} under empty typing contexts:
    {\normalfont\small\[\emptyset;\emptyset \vdash \new{\ses{s}}{\{\role{s}\hasT\type{S_\role{s}}, \role{w_1}\hasT\type{S_\role{w_1}}, \role{w_2}\hasT\type{S_\role{w_2}}, \role{c}\hasT\type{S_\role{c}}\}} P_\role{s} \| P_\role{w_1} \| P_\role{w_2} \| P_\role{c}\]}
\end{example}

% \inferrule*[right={\Cref{rule:t-bang}}]
%           { }
%           {\Gamma_\role{s} \vdash P_\role{s} } \ 
%           \inferrule*[right={\Cref{rule:t-bang}}]
%           { }
%           {\Gamma_\role{w_1} \vdash P_\role{w_1} } \ 
%           \inferrule*[right={\Cref{rule:t-bang}}]
%           { }
%           {\Gamma_\role{w_2} \vdash P_\role{w_2} } \ 
%           \inferrule*[right={\Cref{rule:t-send}}]
%           { }
%           {\Gamma_\role{c} \vdash P_\role{w_c} }

\subsection{Type Semantics}\label{sec:type-sem}

\begin{figure}[t]
    {\normalfont\small
        \[\text{Actions} \quad \act ::= \aOut{s}{\q}{\role r}{\m}{\poly{T}} \mid \aIn{s}{\q}{\role \rho}{\m}{\tilde{T}} \mid \aCom{s}{\q}{\role r}{\m}\]

        \headersig{Context Reduction}{$\Gamma \redC \Gamma'$}
        \begin{mathpar}
        \infer{$\Gamma$-\type{$\&$}}{g-rcv}
        {\St = \type{\q \&_{i \in I}\, \m_i (\poly{T_i}) . S_i^\prime} \\ k \in I}
        {\ses{s}[\p] \hasT \St \redS{\aIn{s}{\p}{\q}{\m_k}{\poly{T_k}}}  \ses{s}[\p] \hasT \type{S_k^\prime}}
        \,
        \infer{$\Gamma$-\type{$\o$}}{g-send}
        {\St = \type{{\o_{i \in I}} \role{q_i} \m_i (\poly{T_i}) . S_i^\prime} \\ k \in I}
        {\ses{s}[\p] \hasT \St \redS{\aOut{s}{\p}{\role{q_k}}{\m_k}{\poly{T_k}}} \ses{s}[\p] \hasT \type{S_k^\prime}}
        \,
        \infer{$\Gamma$-$\role \rho$}{g-role}
        {\Gamma \redS{\act} \type{\Gamma^\prime}}
        {\Gamma,\role \rho \hasT \role \rho \redS{\act} \type{\Gamma^\prime},\role \rho \hasT \role \rho}

        \infer{$\Gamma$-\type{$!$}}{g-bang}
        {\type{R} = \type{!\role \rho \&_{i \in I}\, \m_i (\poly{T_i}) . S_i} \\ k \in I}
        {\ses{s}[\q] \hasT \type{R} \redS{\aIn{s}{\q}{\role \rho}{\m_k}{\poly{T_k}}}  \ses{s}[\q] \hasT \type{R \| S_k}}

        \infer{$\Gamma$-Cong}{g-cong}
        {\Gamma \redS{\act} \type{\Gamma^\prime}}
        {\Gamma \cSplit \type{\Gamma^{\prime\prime}} \redS{\act} \type{\Gamma^\prime} {+} \type{\Gamma^{\prime\prime}}}

        \infer{$\Gamma$-\type{$\mu$}}{g-rec}
        {\Gamma \cSplit c \hasT \type{S} \sub{\type{\mu\recVar{t}.S}}{\type{\recVar{t}}}\, \redS{\act}\, \type{\Gamma^\prime}}
        {\Gamma \cSplit c \hasT \type{\mu\recVar{t}.S}\, \redS{\act}\, \type{\Gamma^\prime}}

        \infer{$\Gamma$-Com$_1$}{g-com1}
        {\Gamma = \type{\Gamma_{1}} \cSplit \type{\Gamma_{2}} \\ \type{\Gamma_1} \redS{\aOut{s}{\p}{\q}{\m}{\poly{S}, \poly{\role{r}} }} \type{\Gamma_1^\prime} \\ \type{\Gamma_2} \redS{\aIn{s}{\q}{\p}{\m}{\poly{S}^\prime, \poly{\role{\alpha}} }} \type{\Gamma_2^\prime} \\ \type{\poly{S} \subT \poly{S}^\prime}}
        {\Gamma \redS{\aCom{s}{\p}{\q}{\m}} \type{\Gamma_{1}^\prime} + \type{\Gamma_{2}^\prime} \sub{\type{\poly{\role{r}}}}{\type{\poly{\role{\alpha}}}} }

        \infer{$\Gamma$-Com$_2$}{g-com2}
        {\Gamma = \type{\Gamma_{1}} \cSplit \type{\Gamma_{2}} \\ \type{\Gamma_1} \redS{\aOut{s}{\p}{\q}{\m}{\poly{S}, \poly{\role{r}} }} \type{\Gamma_1^\prime} \\ \type{\Gamma_2} \redS{\aIn{s}{\q}{\role{\alpha_0}}{\m}{\poly{S}^\prime, \poly{\role{\alpha}} }} \type{\Gamma_2^\prime} \\ \type{\poly{S} \subT \poly{S}^\prime}}
        {\Gamma \redS{\aCom{s}{\p}{\q}{\m}} \type{\Gamma_{1}^\prime} + \type{\Gamma_{2}^\prime}\sub{\p}{\role{\alpha_0}} \sub{\type{\poly{\role{r}}}}{\type{\poly{\role{\alpha}}}} }
    \end{mathpar}}
    \caption{Type semantics.}
    \label{fig:type-sem}
\end{figure}

To reason about the interactions between session types, following Scalas \&
Yoshida~\cite{DBLP:journals/pacmpl/ScalasY19}, we endow typing contexts $\Gamma$
with LTS semantics as shown in \Cref{fig:type-sem}.
Each action $\act$ denotes an \emph{output}, \emph{input}, and
\emph{synchronising communication} respectively.
\emph{Context reduction} \textnormal{$\Gamma \redC \Gammap$} is defined iff \textnormal{$\Gamma \redS{\aCom{s}{\p}{\q}{\m}} \Gammap$} for some \textnormal{$\ses{s},\p,\q,\m$},
and we write $\redC^*$ for the transitive and reflexive closure of $\redC$.
We write $\Gamma \redC$ iff $\Gamma \redC \Gammap$ for some $\Gammap$.
    % $\exists \Gammap $ s.t. $ \Gamma \redC \Gammap$.
% TODO: Add example of how par is preserved in the type semantics through the splitting and adding.

Transitions \Cref{rule:g-rcv} and \Cref{rule:g-send} are standard: 
a context can fire an input label (resp. output label) matching any 
of the labels in the top-level branch type (resp. selection type).
This transitions the entry to the continuation of the chosen type.

%
%Just like the linear receive, an input label can be fired to match any of the branches specified in the replicated branch type.
%
Transition \Cref{rule:g-bang} models the receipt of a message by a replicated input.
The two main differences to the linear receive are:
\emph{(i)}~the role $\role \rho$ used in the transition label is allowed to be a role variable name; and
\emph{(ii)}~firing an input does not advance the type, but instead pulls out a copy of the continuation and places it in parallel.
Role $\role \rho$ is considered bound in $\type{R}$ and its continuations, but is \emph{free} in pulled out copies of the continuations composed in parallel (\type{$S_k$}).

Transition rules \Cref{rule:g-role}, \Cref{rule:g-cong}, \Cref{rule:g-rec} allow contexts to reduce under a larger context, or when types are guarded by recursive binders.
Concretely, \Cref{rule:g-role} allows transitions to ignore role singletons; \Cref{rule:g-cong} allows a context to perform a transition when split from a larger context (the transition result must be added back in); and \Cref{rule:g-rec} allows recursive binders to be treated equal to their unfolding.

Transitions \Cref{rule:g-com1} and \Cref{rule:g-com2} model type-level
communication; for simplicity and without loss of generality we assume a
convention wherein session-typed payloads precede role-typed payloads.
Both rules state that if a context can be split such that one part fires an output label, and the other fires an input with matching roles, message label and payloads, then the entire context can transition via a \emph{communication} action.
We note that payloads will consist of either session types or role singletons. 
For the former, sender payloads must be subtypes of the receiver payloads; for the latter, \emph{role substitution} occurs after communication.
\Cref{rule:g-com2} caters for universal receives, where the input label consists of a role variable in binding position---this is reflected in the role substitution.
We say that a context \emph{reduces} iff it can transition via communication actions.

\begin{example}[Load Balancer: Context Reduction]
    We now use the protocol from \Cref{ex:lb-type-check} to demonstrate context reduction in action.
    Initially, the only communication action possible is between the client and server, via \Cref{rule:g-com2}.
    {\small\[\begin{array}{r l}
        & \ses{s}[\role{c}] \hasT \type{S_\role{c}},\ \ses{s}[\role{s}] \hasT \type{S_\role{s}},\ \ses{s}[\role{w_1}] \hasT \type{S_\role{w_1}},\ \ses{s}[\role{w_2}] \hasT \type{S_\role{w_2}} \\
        =\ & \\ 
        & (\ses{s}[\role{c}] \hasT \type{S_\role{c}} \cSplit \ses{s}[\role{s}] \hasT \type{S_\role{s}}) \cSplit (\ses{s}[\role{w_1}] \hasT \type{S_\role{w_1}},\ \ses{s}[\role{w_2}] \hasT \type{S_\role{w_2}}) \\
        \redC\ & \\ 
        &   \ses{s}[\role{c}] \hasT \type{\role s {\&} \msgLabel{wrk}(\role \omega) .\, \role \omega \& \msgLabel{ans}(\texttt{str})}
        + \left(\ses{s}[\role{s}] \hasT \type{S_\role{s} \!\!\!\!\!\!\quad\biggm\lvert {\o} \left\{\begin{array}{l}
                \role{w_1} \msgLabel{fw}(\texttt{int}, \role \alpha) .\, \role \alpha {\o} \msgLabel{wrk}(\role{w_1}) \\
                \role{w_2} \msgLabel{fw}(\texttt{int}, \role \alpha) .\, \role \alpha {\o} \msgLabel{wrk}(\role{w_2})
            \end{array}\right.}\right)\!\sub{\role c}{\role \alpha} \\
        & +\ \ses{s}[\role{w_1}] \hasT \type{S_\role{w_1}},\ \ses{s}[\role{w_2}] \hasT \type{S_\role{w_2}} \\
        =\ & \\
        &   \ses{s}[\role{c}] \hasT \type{\role s {\&} \msgLabel{wrk}(\role \omega) .\, \role \omega \& \msgLabel{ans}(\texttt{str})}, \\
        &   \ses{s}[\role{s}] \hasT \type{S_\role{s} \!\!\!\!\quad\biggm\lvert {\o} \left\{\begin{array}{l}
                \role{w_1} \msgLabel{fw}(\texttt{int}, \role c) .\, \role c {\o} \msgLabel{wrk}(\role{w_1}) \\
                \role{w_2} \msgLabel{fw}(\texttt{int}, \role c) .\, \role c {\o} \msgLabel{wrk}(\role{w_2})
            \end{array}\right.},\ 
            \ses{s}[\role{w_1}] \hasT \type{S_\role{w_1}},\ \ses{s}[\role{w_2}] \hasT \type{S_\role{w_2}}
    \end{array}\]}
    
    \noindent
    It is key to note the role substitution for the type of $\ses{s}[\role s]$
    above; specifically, how the substitution affects the parallel type
    extracted through communication, but does not affect the replicated type $\type{S_\role{s}}$.
    From here, there are multiple reduction paths, but let us observe the
    reduction path in which $\role{s}$ communicates with $\role{w_1}$.
    {\normalfont\small\[\begin{array}{r l}
        =\ & \\
        &   \ses{s}[\role{c}] \hasT \type{\role s {\&} \msgLabel{wrk}(\role \omega) .\, \role \omega \& \msgLabel{ans}(\texttt{str})},\ \ses{s}[\role{w_2}] \hasT \type{S_\role{w_2}},\ \ses{s}[\role{s}] \hasT \type{S_\role{s}} \\
        &   \cSplit \left(\ses{s}[\role{s}] \hasT \type{{\o} \left\{\begin{array}{l}
                \role{w_1} \msgLabel{fw}(\texttt{int}, \role c) .\, \role c {\o} \msgLabel{wrk}(\role{w_1}) \\
                \role{w_2} \msgLabel{fw}(\texttt{int}, \role c) .\, \role c {\o} \msgLabel{wrk}(\role{w_2})
            \end{array}\right.} \cSplit
            \ses{s}[\role{w_1}] \hasT \type{S_\role{w_1}} \right) \\
        \redC\ & \\ 
        &   \ses{s}[\role{c}] \hasT \type{\role s {\&} \msgLabel{wrk}(\role \omega) .\, \role \omega \& \msgLabel{ans}(\texttt{str})},\ \ses{s}[\role{w_2}] \hasT \type{S_\role{w_2}},\ \ses{s}[\role{s}] \hasT \type{S_\role{s}} \\
        & + \left(\ses{s}[\role{s}] \hasT \type{\role c {\o} \msgLabel{wrk}(\role{w_1})} + (\ses{s}[\role{w_1}] \hasT \type{S_\role{w_1} \,\|\, \role \gamma {\o} \msgLabel{ans}(\texttt{str})}) \sub{\role c}{\role \gamma} \right)\\
        =\ & \\
        &   \ses{s}[\role{c}] \hasT \type{\role s {\&} \msgLabel{wrk}(\role \omega) .\, \role \omega \& \msgLabel{ans}(\texttt{str})},\ \ses{s}[\role{w_2}] \hasT \type{S_\role{w_2}},\\ 
        &   \ses{s}[\role{s}] \hasT \type{S_\role{s} \| \role c {\o} \msgLabel{wrk}(\role{w_1})},\ \ses{s}[\role{w_1}] \hasT \type{S_\role{w_1} \,\|\, \role c {\o} \msgLabel{ans}(\texttt{str})}
    \end{array}\]}

    \noindent
    Note how reduction is possible because the context split allows the server's parallel type to be extracted into its own context.
    Then, reduction occurs via \Cref{rule:g-cong} and \Cref{rule:g-com1}.
    From here, the context reduces in a similar fashion: first the server communicates a role with the client; followed by the final communication between the client and worker.
    {\normalfont\small\[\begin{array}{r l}
        \redC\ & \\
        &   \ses{s}[\role{c}] \hasT \type{\role{w_1} \& \msgLabel{ans}(\texttt{str})},\ \ses{s}[\role{s}] \hasT \type{S_\role{s} \| \tEnd},\ \ses{s}[\role{w_2}] \hasT \type{S_\role{w_2}},\ 
            \ses{s}[\role{w_1}] \hasT \type{S_\role{w_1} \,\|\, \role c {\o} \msgLabel{ans}(\texttt{str})} \\
        \redC\ & \\ 
        &   \ses{s}[\role{c}] \hasT \type{\tEnd},\ \ses{s}[\role{s}] \hasT \type{S_\role{s} \| \tEnd},\ \ses{s}[\role{w_2}] \hasT \type{S_\role{w_2}},\ 
            \ses{s}[\role{w_1}] \hasT \type{S_\role{w_1}  \| \tEnd} \\
        \type{\equiv}\ & \\
        &   \ses{s}[\role{c}] \hasT \type{\tEnd},\ \ses{s}[\role{s}] \hasT \type{S_\role{s}},\ \ses{s}[\role{w_2}] \hasT \type{S_\role{w_2}},\ 
            \ses{s}[\role{w_1}] \hasT \type{S_\role{w_1}} \\
    \end{array}\]}
\end{example}

\subsubsection{Safety.}

In order to type a session restriction, rule \Cref{rule:t-new} in
\Cref{fig:typing-rules} requires that the session's protocol of the session must
obey a \emph{safety property} $\varphi$. Informally, a safety property requires
that processes exchange values of compatible types and that a sender only ever
selects available branches. Safety is the weakest typing context property required
in order to prove subject reduction.

%This is what makes the type system generalised. 
%
%Instead of syntactically enforcing structure on typable protocols, types can be written more liberally, as long as they conform to the safety requirements. 
%
\begin{definition}[Safety]\label{def:prop-safe}
    $\varphi$ is a safety property on type environment $\Gamma$ iff:
    {\normalfont\small\begin{mathpar}\mprset{flushleft}
        \infer{S-$\type{\o\&}$}{safe-com}{}
        { 
        \varphi\left(\Gamma \cSplit \ses{s}[\p]: \type{\o_{i \in I}\, \role{\rho_i}\, \m_i (\poly{S_i},\role{\poly{r_i}}) . S_i^\prime} \cSplit \ses{s}[\q]: \type{\p\ \oldAnd_{j \in J}\, \m_j (\poly{S_j^{\prime\prime}},\role{\poly{\alpha_j}}) . S_j^{\prime\prime\prime}}\right)\\\\
        \text{\tab{and} } \exists K \subseteq I \text{ s.t. } \forall k \in K : \role{\rho_k} = \role{q}\\\\
        \text{\tab{implies} } K \subseteq J \text{ and } \forall i \in K : \type{\poly{S_i} \subT \poly{S_i^{\prime\prime}}}  \text{ and } \sizeof{\role{\poly{r_i}}} = \sizeof{\role{\poly{\alpha_i}}}
        }

        \infer{S-$\type{{!}{\o}{\&}}$}{safe-bang}{}
        { 
        \varphi\left(\Gamma \cSplit \ses{s}[\p]: \type{\o_{i \in I}\, \role{\rho_i}\, \m_i (\poly{S_i},\role{\poly{r_i}}) . S_i^\prime} \cSplit \ses{s}[\q]: \type{{!}\role{\rho_0}\ \oldAnd_{j \in J}\, \m_j (\poly{S_j^{\prime\prime}},\role{\poly{\alpha_j}}) . S_j^{\prime\prime\prime}}\right)\\\\
        \text{\tab{and} } \exists K \subseteq I \text{ s.t. } \forall k \in K : \role{\rho_k} = \role{q} \text{\tab{and}\tab} \role{\rho_0} \text{ is either a variable or } \p\\\\
        \text{\tab{implies} } K \subseteq J \text{ and } \forall i \in K : \type{\poly{S_i} \subT \poly{S_i^{\prime\prime}}}  \text{ and } \sizeof{\role{\poly{r_i}}} = \sizeof{\role{\poly{\alpha_i}}}
        }

        \infer{S-$\type{\mu}$}{safe-mu}{}
        {\varphi(\Gamma \cSplit \ses{s}[\q]\hasT \type{\mu \recVar{t}.S}) \tab\text{implies}\tab \varphi(\Gamma \cSplit\ses{s}[\q]\hasT \type{S}\sub{\type{\mu \recVar{t}.S}}{\type{\recVar{t}}})}

        \infer{S-$\role{\alpha}$}{safe-role}{}
        {\varphi(\Gamma) \tab\text{implies}\tab \frv(\Gamma) = \emptyset}

        \infer{S-$\redC$}{safe-r}{}
        {\varphi(\Gamma) \text{ and } \Gamma\ \redC\ \Gammap \tab\text{implies}\tab \varphi(\Gammap)}
    \end{mathpar}}
\end{definition}

A property $\varphi$ is considered \emph{safe} iff it conforms to all conditions specified in \Cref{def:prop-safe}.
Conditions \Cref{rule:safe-com} and \Cref{rule:safe-bang} are concerned with communication.
They state that if two roles in the same session have an output and input type respectively which point at each other, then:
\begin{enumerate*}[label=\textit{(\roman*)}]
    \item they should have at least one common message label;
    \item for each common label, their payloads should be equal in length; and
    \item for each common label, all session types in the payload of the sender should be subtypes of what is expected by the receiver.
\end{enumerate*}

Condition \Cref{rule:safe-mu} requires safety to hold after the unfolding of recursive binders; \Cref{rule:safe-role} requires all role variables used in a context to be bound by that same context; and \Cref{rule:safe-r} requires safety to hold after context reduction.

Users of the type system can re-instantiate $\varphi$ with custom properties (\eg, \emph{termination}), as long as the property used meets the safety requirements.

\subsection{Metatheory}\label{sec:meta}

Unlike most session type theories, \emph{generalised} MPST do not syntactically guarantee any properties on the processes they type.
Rather, they provide a framework for verifying runtime properties on the type context, from which process-level properties may be inferred---the benefit being that these properties are undecidable to check on processes, yet decidable in the realm of the type system.
Furthermore, its generalised nature allows for fine-tuning based on specific requirements of its applications.
Informally, generalisation of the type system works by proving the metatheory parametric of the largest safety property captured by $\varphi$ in \Cref{def:prop-safe}; \ie\ all theorems proved and presented which involve a type context $\Gamma$, assume that $\varphi(\Gamma)$ holds, or ``$\Gamma$ \emph{is safe}''.
Essentially, $\varphi$ describes the minimum (and most general) \emph{safety} requirements made for \emph{subject reduction} to hold. 
This proof technique allows $\varphi$ to be re-instantiated with more specific properties without having to reprove any of the base metatheory.
(We occasionally reference properties other than \emph{safety} in examples.)

\subsection{Subject Reduction and Session Fidelity}

The main results of a generalised MPST system are \emph{subject reduction} (\Cref{thm:sr}) and \emph{session fidelity} (\Cref{thm:sf}).
These theorems allow the type system to be used as a framework for verifying custom properties by re-instantiating $\varphi$.
We note that discussing \emph{how} the generalised type system can be used as a verification framework is \emph{not} the focus of this paper (to this end, we address the interested reader to Scalas and Yoshida~\cite[Section 5]{DBLP:journals/pacmpl/ScalasY19}); 
rather, we build on generalised MPST theory as a means of investigating the expressiveness of replication and first-class roles in MPST.
Hence, the following presents the two theorems---highlighting key differences
with what is standard---but we do not demonstrate the verification of
runtime properties (which is standard). We give the proofs in~\Cref{ap:sr}.

\begin{theorem}[Subject Reduction]\label{thm:sr}
    If $\Theta\cons\Gamma \vdash P$ with $\varphi(\Gamma)$ and $P \redP P^\prime$, then $\exists \Gammap$ s.t. $\Gamma\redC^*\Gammap$ and $\Theta\cons\Gammap \vdash P^\prime$ with $\varphi(\Gammap)$. 
\end{theorem}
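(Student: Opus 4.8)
The plan is to proceed by induction on the derivation of $P \redP P'$, with a case analysis on the last rule applied. Before the induction I would assemble the usual auxiliary lemmas. First, an \emph{inversion} lemma for each typing rule, so that from $\Theta \cons \Gamma \vdash P$ I can recover the shape of $\Gamma$ forced by the top-level constructor of $P$, modulo the bookkeeping rules \Cref{rule:t-weak} and \Cref{rule:t-sub}. Second, a \emph{substitution} lemma stating that replacing binders $\poly{b}$ by concrete values $\poly{d}$ of the appropriate (sub)types preserves typing, including the special case of substituting a concrete role $\q$ for a role variable $\role{\alpha}$, which is needed for universal receives. Third, a lemma that \emph{structural congruence preserves typing}: $P \equiv Q$ and $\Theta \cons \Gamma \vdash P$ imply $\Theta \cons \Gamma \vdash Q$. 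Finally, a \emph{bridging} lemma connecting process communication to context reduction: whenever two endpoints in $\Gamma$ are typed by a matching selection/branching (or replicated-branching) pair, $\Gamma$ can fire the corresponding communication action via \Cref{rule:g-com1}/\Cref{rule:g-com2} (using \Cref{rule:g-bang} in the replicated case).

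The heart of the argument is the three communication base cases. For \Cref{rule:r-c} the redex is a parallel composition, so inversion through \Cref{rule:t-par} splits $\Gamma = \type{\Gamma_1} \cSplit \type{\Gamma_2}$; further inversion on the sender (through \Cref{rule:t-choice}/\Cref{rule:t-send}) and on the receiver (\Cref{rule:t-rcv}) exposes that $\ses{s}[\q]$ carries a selection type and $\ses{s}[\p]$ a branching type pointing at each other. I then invoke the bridging lemma to obtain $\Gamma \redC \Gammap$ via \Cref{rule:g-com1}, where $\Gammap$ replaces both endpoints by their continuations, and re-type $P_k \sub{\poly{d}}{\poly{b_k}} \| Q$ under $\Gammap$ using the substitution lemma (appealing to the payload subtyping $\poly{S} \subT \poly{S}'$ recorded by \Cref{rule:g-com1}) together with \Cref{rule:t-par}. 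The replicated cases \Cref{rule:r-bang1} and \Cref{rule:r-bang2} are analogous but use \Cref{rule:g-bang}: firing the replicated input does not consume the endpoint but yields $\type{R \| S_k}$, matching the process that keeps $R$ in parallel with a fresh copy of the continuation; for \Cref{rule:r-bang2} I additionally use \Cref{rule:g-com2} and the role-substitution clause of the substitution lemma to account for $\sub{\q}{\role{\alpha}}$ at both the process and type level. The remaining base cases are cheaper: \Cref{rule:r-choice} and \Cref{rule:r-proc-call} perform no communication, so I take the reflexive case $\Gamma \redCstar \Gamma$ and re-type the residual by inversion on \Cref{rule:t-choice} (respectively by combining \Cref{rule:t-call}, \Cref{rule:t-def} and the substitution lemma).

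The inductive cases are \Cref{rule:r-cong} and \Cref{rule:r-ctx}. For \Cref{rule:r-cong} I push the two congruences through with the congruence-preserves-typing lemma and apply the induction hypothesis to the inner reduction. For \Cref{rule:r-ctx} I do a secondary induction on the reduction context $\rc$: the $\rc \| P$ and $\definD{D\;}\rc$ frames are discharged by inversion on \Cref{rule:t-par}/\Cref{rule:t-def} and reassembly, while the $\newnt{s}\,\rc$ frame needs the most care. There, inversion on \Cref{rule:t-new} types the body under $\Gamma + \extract_\ses{s}(\Psi)$ with $\varphi(\extract_\ses{s}(\Psi))$; after arguing that $\varphi$ holds of the combined context (it composes because $\ses{s}$ is disjoint from the sessions of $\Gamma$), I apply the induction hypothesis and then \emph{factor} the resulting context reduction across the disjoint sessions, so that reductions confined to $\ses{s}$ are re-hidden under a new protocol $\Psi'$ while reductions on external sessions are exposed as $\Gamma \redCstar \Gammap$, allowing \Cref{rule:t-new} to be re-applied.

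Safety of the final context is then immediate: in every case I have exhibited a chain $\Gamma \redCstar \Gammap$, so $\varphi(\Gammap)$ follows by iterating the closure condition \Cref{rule:safe-r} from the hypothesis $\varphi(\Gamma)$. I expect the genuine difficulty to be concentrated in two places. First, the replicated and universal-receive cases, where the interplay of the parallel type $\type{R \| S_k}$, the context split/addition operations, and role-variable substitution must be reconciled so that the pulled-out copy is typed \emph{exactly} by the continuation that \Cref{rule:g-bang} produces. Second, the restriction frame of \Cref{rule:r-ctx}, which relies on communication being confined to a single session together with a matching decomposition of $\redC^{*}$ and on re-establishing $\varphi(\extract_\ses{s}(\Psi'))$ in order to reconstruct the typing via \Cref{rule:t-new}.
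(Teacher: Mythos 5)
Your high-level architecture coincides with the paper's: induction on the derivation of $P \redP P'$, inversion of the typing rules, substitution and subject-congruence lemmas, and an inner induction on $\rc$ for the \Cref{rule:r-ctx} case. However, there is a genuine gap in the communication cases, which are the heart of the proof: your ``bridging lemma'' is not provable from typing alone, and you never invoke the safety hypothesis where it is actually needed. After inversion (through \Cref{rule:t-sub} and \Cref{rule:t-weak}), the receiver's entry in $\Gamma$ is only a \emph{subtype} of the branching type appearing in \Cref{rule:t-rcv}, and by \Cref{def:subtyping} a branching subtype may have \emph{fewer} branches than the process offers; dually, nothing in the two separate typing derivations relates the sender's payload types to the receiver's. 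So the transmitted label $\m_k$ need not occur in the receiver's context-level type, and the payload subtyping $\type{\poly{S} \subT \poly{S}^\prime}$ --- a \emph{premise} of \Cref{rule:g-com1}/\Cref{rule:g-com2}, not something ``recorded'' by them --- need not hold; your appeal to it is circular. Concretely, $\Gamma = \ses{s}[\q] \hasT \type{\p{\o}\m(\intT).\tEnd},\ \ses{s}[\p] \hasT \type{\q{\&}\m(\stringT).\tEnd}$ types a sender and receiver whose composition reduces via \Cref{rule:r-c}, yet $\Gamma$ cannot fire any communication action and the substituted continuation is ill-typed; subject reduction survives only because $\varphi(\Gamma)$ excludes such contexts.

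This is exactly where the paper spends the hypothesis $\varphi(\Gamma)$: conditions \Cref{rule:safe-com} and \Cref{rule:safe-bang} are invoked \emph{mid-proof} to obtain (i) that the sender's labels directed at the receiver are contained in the receiver's branch set, so that \Cref{rule:g-com1}/\Cref{rule:g-com2} can fire at all, and (ii) the payload-length equality and payload subtyping that discharge those rules' premises and feed the substitution lemma when re-typing $P_k\sub{\poly{d}}{\poly{b_k}}$ (the paper additionally needs \cref{lem:red-commutes-subt}, that context reduction commutes with subtyping, to transfer the reduction from the supertypes used in the derivation to the actual, smaller entries of $\Gamma$). Your proposal defers all use of $\varphi(\Gamma)$ to the final step, via closure under \Cref{rule:safe-r}; that part is fine, but the lemma you rely on in the middle must take $\varphi(\Gamma)$ as a hypothesis and derive its premises from \Cref{rule:safe-com}/\Cref{rule:safe-bang}. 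A smaller repair: your subject-congruence lemma should conclude $\Theta \cons \Gammap \vdash Q$ for some $\Gammap \type{\equiv} \Gamma$ rather than under $\Gamma$ itself, since discharging, e.g., $P \| \0 \equiv P$ can leave a residue $\type{U \| \tEnd} \type{\equiv} \type{U}$ in the context, which subtyping alone cannot remove.
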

%
%   \begin{proof}% [Sketch]
%       By induction on the derivation of $P \redP P^\prime$; see~\Cref{ap:sr}.
%   \end{proof}
    %
%   Cases \Cref{rule:r-c}, \Cref{rule:r-bang1}, \Cref{rule:r-bang2} are similar.
%   %
%   For each of these we infer the shape of $\Gamma$ by inversion of typing rules, observing that $\Gamma$ reduces via communication actions, and rebuilding a typing derivation for the reduced process using the reduced context.
%   %
%   Cases \Cref{rule:r-choice} and \Cref{rule:r-proc-call} are straight forward.
%   %
%   Case \Cref{rule:r-cong} holds from subject congruence, and case \Cref{rule:r-ctx} holds by a further proof by induction on the structure of $\rc$.

Intuitively, subject reduction states that any \emph{safe} and \emph{well-typed} process remains safe and well-typed after process reduction.
More formally, the theorem asserts that if a process is typed under a safe context, then the context can match any process reduction to type its continuation whilst retaining safety.
%
%This result confirms that the type semantics correctly models the calculus semantics.

\emph{Session fidelity} states that if a context can reduce, then a process it types can observe \emph{at least one} of its reductions.
By virtue of subtyping, a context is allowed to specify paths which need not be followed by a process it types; the key point is that session fidelity requires that there is \emph{at least one} observable reduction.
Using session fidelity, one can prove properties about communication occurring within a single session.
It does not, however, provide grounds for showing such properties on interleaved session communication.
Hence, as is standard in generalised MPST, we define additional assumptions on processes required for session fidelity to hold.

\begin{definition}[Only plays one role]\label{def:one-role}
    (The following is a slight adaptation of \cite[definition 5.3]{DBLP:journals/pacmpl/ScalasY19}.)
    Assuming $\emptyset\cons\Gamma \vdash P$, we say $P$:
    \begin{enumerate}
        \item\label{item:guarded-def} \textbf{has guarded definitions} iff each subterm of P with the form 
        \[\defin{X}{(x_i \hasT \type{S_i})_{i \in 1..n}}{Q\;} P^\prime\]
        we have: \textnormal{$\forall i \in 1..n: \type{S_i \centernot{\subT} \tEnd}$} implies a process call $\recVar{Y}\langle\dots,x_i,\dots\rangle$ can only occur in $Q$ as a subterm of a communication action over channel $x_i$.

        \item\label{item:plays-one-role} \textbf{only plays role $\role p$ in $\ses{s}$, by $\Gamma$} iff 
        \begin{enumerate*}[label=\textit{(\roman*)}]
            \item \Cref{item:guarded-def} holds for P;
            \item $\fv(P)$ = $\emptyset$;
            \item $\Gamma = \Gammab{0},\ses{s}[\p] \hasT \type{U}$ with \textnormal{$\type{U \centernot{\subT} \tEnd}$} and $\pEnd(\Gammab{0})$;
            \item in all subterms $\new{s}{\Gammap} P^\prime$ of $P$, we have $\pEnd(\Gammap)$.
        \end{enumerate*}
    \end{enumerate}
\end{definition}

The purpose of~\Cref{item:guarded-def} is to prevent processes from infinitely
reducing via \Cref{rule:r-proc-call} without communicating,
and~\Cref{item:plays-one-role} identifies a typical application of MPST where a
number of processes $P_\role{p}$ communicate over a multiparty session
$\ses{s}$, with each process playing a \emph{single} role.
The difference in our definition to the standard is that processes $P_\role{p}$
should play a \emph{single} role and not a \emph{unique} role.
This is due to the introduction of replication in our language; note how
reduction with a replicated process is guaranteed to produce multiple processes
playing the same role and is reflected in our definition in condition
\emph{(iii)} of \Cref{item:plays-one-role}, where a channel is mapped to
\emph{runtime type} $\type{U}$, allowing for parallel composition.

Informally, the session fidelity theorem states that, given a safe context that
types a process of a particular structure---\ie\ one governed by the session
fidelity assumptions of \Cref{def:one-role}---then if the context can reduce,
the typed process can match at least one reduction.
Furthermore, after the process matches the context reduction, it remains within
the session fidelity assumption structure.

\begin{theorem}[Session Fidelity]\label{thm:sf}
    Assume $\emptyset\cons\Gamma \vdash P$ with $\varphi(\Gamma)$ and $P \equiv
    {\bigm|_{\p \in I}\! P_\p}$ where each $P_\p$ is either $\0$ (up-to
    $\equiv$), or only plays role $\role p$ in $\ses{s}$. 
    Then, $\Gamma \redC$ implies $\exists \Gammap,P^\prime$ s.t. $\Gamma \redC \Gammap$, $P \redP^* P^\prime$ and $\Gammap \vdash P^\prime$, where $P^\prime \equiv {\bigm|_{\p \in I}\! P_\p^\prime}$ and each $P_\p^\prime$ is either $\0$ (up-to $\equiv$), or only plays role $\role p$ in $\ses{s}$.
\end{theorem}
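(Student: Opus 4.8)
The plan is to reduce the claim to subject reduction (\Cref{thm:sr}) by exhibiting a process reduction sequence whose single communication step is the image of one context communication step. Since $\Gamma \redC$ holds, by definition $\Gamma \redS{\aCom{s}{\p}{\q}{\m}} \Gammap$ for some $\ses{s},\p,\q,\m,\Gammap$, derived by \Cref{rule:g-com1} or \Cref{rule:g-com2}. Inverting it, I split $\Gamma = \type{\Gamma_1} \cSplit \type{\Gamma_2}$ where $\type{\Gamma_1}$ fires an output on $\ses{s}[\p]$ toward $\q$ (via \Cref{rule:g-send}, up to \Cref{rule:g-role}, \Cref{rule:g-cong} and unfolding \Cref{rule:g-rec}) and $\type{\Gamma_2}$ fires a matching input on $\ses{s}[\q]$ — either a linear receive (\Cref{rule:g-rcv}) or, via \Cref{rule:g-bang}, a replicated one whose subject may be a role variable in the \Cref{rule:g-com2} case. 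This identifies the session, the sender role $\p$, the receiver role $\q$, and the candidate label.

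Next I locate the relevant components of $P$. Writing $P \equiv \bigm|_{\p \in I} P_\p$ and using that each $P_\p$ only plays its role in $\ses{s}$ (\Cref{def:one-role}), condition~(iii) gives that $\ses{s}[\p]$ is the sole non-$\tEnd$ channel of one component (the sender candidate) and $\ses{s}[\q]$ of another (the receiver candidate), both non-trivial because their $\Gamma$-entries are not end-typed. These heads need not be communication prefixes: they may be process calls or output-guarded sums. Using guardedness of definitions (\Cref{def:one-role}, item~1) I argue that unfolding via \Cref{rule:r-proc-call} is productive, so after finitely many silent steps the $\p$-component exposes an output-guarded sum on $\ses{s}[\p]$ and the $\q$-component a branching receive on $\ses{s}[\q]$. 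On the receiver side this is unproblematic: branching subtyping (\Cref{def:subtyping}) forces the process to offer at least the labels tracked by its $\Gamma$-entry, so the receive is always ready for whatever the context can input.

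I then commit the sender's sum to a summand with \Cref{rule:r-choice}, bring the two components adjacent with \Cref{rule:r-cong}, and fire the synchronisation: \Cref{rule:r-c} in the linear case, or \Cref{rule:r-bang1} / \Cref{rule:r-bang2} in the replicated case. Safety (\Cref{def:prop-safe}, conditions \Cref{rule:safe-com} / \Cref{rule:safe-bang}) guarantees that the committed label is answered by the peer with matching payload arities and subtypes, any residual role variable having already been resolved to a concrete role by a prior \Cref{rule:r-bang2} step (consistently with $\frv(\Gamma)=\emptyset$ from \Cref{rule:safe-role}). The replicated rules leave the replicated input in place and spawn a parallel copy of the continuation, and \Cref{rule:r-bang2} performs the substitution $\sub{\q}{\role{\alpha}}$ — precisely mirroring the parallel spread $\type{R \| S_k}$ of \Cref{rule:g-bang} and the role substitution of \Cref{rule:g-com2}. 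This yields $P \redP^* P'$ with exactly one communication; since the silent steps contribute zero context communications, iterating \Cref{thm:sr} collapses $\Gamma \redC^* \Gammap$ to the single step $\Gamma \redC \Gammap$ and delivers $\Gammap \vdash P'$ with $\varphi(\Gammap)$. Finally I check that $P'$ retains the shape $\bigm|_{\p \in I} P_\p'$: the sender and linear-receiver continuations still play $\p$ and $\q$, and in the replicated case the receiver component becomes the replicated input in parallel with the spawned continuation, still a single process playing $\q$ but now typed by a runtime parallel type — exactly as permitted by condition~(iii) of \Cref{def:one-role}, which is why we weaken \emph{unique} role to \emph{single} role.

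The main obstacle is the sender side of the matching: I must guarantee that the typed process can realise \emph{some} communication whenever $\Gamma \redC$, despite the fact that subtyping lets the context over-approximate a selection with branches the process never actually sends. Concretely, the receiver always offers a superset of the context's input labels, but the sender's sum offers only a \emph{subset} of the context's selection labels, so the particular label the context happens to fire need not be one the process can send. The delicate part — and the reason session fidelity only promises \emph{at least one} reduction — is to show that among the sender's actually-offered summands there is one whose target is a currently-ready receiver, and to match \emph{that} communication; I expect to establish this through an inversion/readiness lemma relating a single-role, selection-typed component (up to recursion unfolding and choice resolution) to the context actions it can perform, combined with safety to supply a compatible partner, and to discharge it uniformly across the linear, replicated-concrete, and replicated-universal cases.
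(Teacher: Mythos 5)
Your decomposition is the same as the paper's own proof: invert $\Gamma\redC$ into a selection/(replicated-)branching split, expose the corresponding components of $P$ via the one-role assumptions and session inversion, discharge the administrative steps (\textsc{R-$\recVar{X}$}, \textsc{R-+}, \textsc{R-$\equiv$}, \textsc{R-$\rc$}), fire \textsc{R-C}/\textsc{R-!C$_1$}/\textsc{R-!C$_2$}, and check that the parallel one-role shape survives, with the replicated receiver becoming a parallel composition typed by a runtime type. However, one step is invalid as written and one genuine gap remains. The invalid step is the black-box use of \Cref{thm:sr}: its statement only yields \emph{some} $\Gammap$ with $\Gamma\redCstar\Gammap$, so neither ``the silent steps contribute zero context communications'' nor ``the synchronisation contributes exactly one'' is a consequence of it; those are facts about the \emph{cases of its proof} (the administrative cases preserve the context, the communication cases take exactly one step), which you would have to inline. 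The paper does exactly that: it reconstructs typability of the reductum directly, mirroring the subject-reduction case analysis, and this is what delivers the single-step $\Gamma\redC\Gammap$ that the theorem requires.

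The genuine gap is the lemma you defer in your last paragraph, and it is not a side condition---it is the heart of the theorem. Because selection in \MPSTb\ is \emph{undirected}, session inversion only gives that the sender's process offers a subset $J'\subseteq J$ of its context entry's branches, and those summands may \emph{all} target roles other than the $\q$ witnessing $\Gamma\redC$. Safety does not repair this: conditions \Cref{rule:safe-com}/\Cref{rule:safe-bang} constrain only pairs of entries that already point at each other, so they say nothing about an offered branch whose target's entry is not a branching from $\p$. Concretely, take $\Gamma = \ses{s}[\p]\hasT\type{\o\{\q\,\msgLabel{m}_1.\tEnd,\ \role{r}\,\msgLabel{m}_2.\tEnd\}},\ \ses{s}[\q]\hasT\type{\p\,\&\,\msgLabel{m}_1.\tEnd},\ \ses{s}[\role{r}]\hasT\type{\q\,\&\,\msgLabel{m}_3.\tEnd}$, with the $\p$-component $\ses{s}[\p][\role{r}]\o\msgLabel{m}_2\langle\rangle\then\0$ (typable, since its entry is a subtype of $\type{\role{r}\,\o\,\msgLabel{m}_2.\tEnd}$) and $\q,\role{r}$ receiving exactly as typed: under the literal reading of \Cref{def:prop-safe} this $\Gamma$ is safe (the pair $\p,\role r$ never matches the premise pattern of \Cref{rule:safe-com}) and $\Gamma\redC$ on $\msgLabel{m}_1$, yet the process has no communication available. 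So the ``readiness'' lemma you postulate cannot follow from safety plus inversion alone; establishing it requires reading \Cref{rule:safe-com} strongly enough that \emph{every} selected branch has a matching receiver (as the paper's prose ``a sender only ever selects available branches'' suggests), and that argument is precisely where the proof's content lies. Your proposal stops where it must begin---though, in fairness, the paper's own sketch compresses the very same step into ``it follows that $P$ can reduce via \textsc{R-C}\dots''.
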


We now turn our attention to the main focus of this paper, \ie\ exploring the expressiveness and decidability of replication and first-class roles in MPST.

\section{Expressivity and Decidability}\label{sec:disc}

This section discusses, and shows by example, the benefits and limitations of \MPSTb.
\Cref{sec:exp} demonstrates the exressivity gained by using replication and first-class roles in MPST,
whilst \Cref{sec:decidability} presents our decidability results.

\subsection{Expressivity}\label{sec:exp}

We begin by demonstrating a common design pattern used for describing protocols, which we call \emph{services}.
We build a number of services to showcase language features, and to describe protocols which---to the best of our knowledge---were previously untypable in any MPST theory.
Specifically, using the increased expressiveness of replication and first-class roles, we define types for \emph{binary trees}, the \emph{dining philosophers problem}, and an \emph{auction}.
Importantly, all examples shown adhere to the decidability requirements discussed later (\cf\ \Cref{sec:decidability}).

\subsubsection{Services.} 

A \emph{service} is a building-block of a protocol, involving some universal receive, with the aim of \emph{offering} a specific interaction.
A \emph{client} interacts with a service to achieve the communication pattern it offers.
Importantly, services may be clients of other services, promoting a modular design of protocols in \MPSTb.

\begin{example}[Ping]\label{ex:ping-service}
    The \emph{ping service} simply responds to a \msgLabel{ping} with a \msgLabel{pong}.
    {\small\[
        \role P \hasT \type{! \role \alpha \& \ping \,.\, \role \alpha {\o} \pong \,.\, \tEnd}
    \]}
\end{example}

A basic yet useful service is given in \Cref{ex:ping-service}, where role $\role P$ offers a \emph{ping} service.
As a convention, we will use capitals for naming services.
We highlight the importance of both replication and free role names in types to be able to design modular components of a protocol---both are integral to designing a sub-protocol agnostic of the larger scope in which it is used.
% SF: Don't think this adds anything
%Let us demonstrate how a service can be used as a building-block for larger protocols, whilst also showcasing the increased expressiveness gained in \MPSTb.

\subsubsection{Context-Free MPST.}

Context-free session
types~\cite{DBLP:conf/icfp/ThiemannV16,DBLP:journals/iandc/AlmeidaMTV22,DBLP:conf/esop/PocasCMV23}
are a formalism that replace prefix-style session types with individual actions,
along with a sequencing operator \textbf{;} with neutral element \skipy.
The goal of this line of work is to express communication protocols that are not
possible under tail-recursive session types, given their restriction to regular
languages.
The classic example is that of communicating a serialised binary tree.

\begin{example}[Binary tree in standard context-free STs]\label{ex:binary-tree-CF}
    Consider a binary \tree\ data type described by the following context-free grammar.
    {\small\[\tree ::= (\node, \tree, \tree) \midspace \leaf\]}%
    We could attempt to represent a protocol that serialises this data type as
    follows:
    {\small\[\type{\mu \recVar{t}. \o \{\leaf : \skipy,\ \node : \recVar{t}\}}\]}%
    However, this type is not precise enough---it does not guarantee that the correct structure of a binary tree is maintained.
    Work on context-free session types solves this by proposing type systems allowing the following specification:
    {\small\[\type{\mu \recVar{t}. \o \{\leaf : \skipy,\ \node : \skipy;\recVar{t};\recVar{t}\}}\]}%
    Selecting the \type{\node} label now guarantees that \emph{two} sub-trees will follow.
\end{example}

The parallel types presented in \Cref{sec:types}, although not exposed directly to users, lift expressiveness of types in \MPSTb.
In fact, since replicated branches are \emph{permanently available} (by composing continuation types in parallel), we can simulate the sequencing operator \textbf{;} using type-level parallel composition.
%
% (Note how the grammar for runtime type $\type{U}$ in \Cref{def:type-syntax} is context-free.)
%
% It should therefore follow that a type describing a binary tree can be expressed in \MPSTb.

\begin{example}[Binary Tree Service]\label{ex:tree-service}
    Recall the ping service $\role P$ from \Cref{ex:ping-service}.
    We build a \emph{binary tree service} $\role T$ using $\role P$ as shown below:
    {\small\[
        \role T \hasT \type{! \role \beta \& \tree \,.\, \role P {\o} \ping \,.\, {!} \role P \& \pong \,.\, \role \beta \& \left\{\leaf \,.\, \tEnd,\ \node \,.\, \role P {\o} \ping \,.\, \role P {\o} \ping \,.\, \tEnd\right\}}
    \]}%
    The service begins by receiving a request for a \type{\tree} from a client.
    It then sends a \type{\ping} to the ping service, exposing a replicated branch waiting to receive the \type{\pong} reply.
    The client is now free to build the binary tree.
    It is key to note that any \type{\node} sent to the service will subsequently forward \emph{two} \type{\ping} requests to $\role P$.
    In turn, this communication will pull out two copies of the type continuation $\type{\role \beta \& \{\leaf \,.\, \tEnd,\ \node \,.\, \role P {\o} \ping \,.\, \role P {\o} \ping \,.\, \tEnd\}}$, forcing the client to maintain the appropriate binary tree structure.
    For example, if a client $\role t$ wishes to build a tree consisting of one root node and two leaf nodes, its type would be defined as:
    {\small\[
        \role t \hasT \type{\role{T} {\o} \tree \,.\, \role{T} {\o} \node \,.\, \role{T} {\o} \leaf \,.\, \role{T} {\o} \leaf \,.\, \tEnd}
    \]}%
    The metatheoretic framework can be used to determine that any protocol failing to abide by the binary tree structure will result in a \emph{deadlock};
    whilst any safe protocol that obeys the correct structure is \emph{terminating}, \eg\ the protocol $\{\role t, \role T, \role P\}$.
\end{example}

% \subsubsection{Schemas.}

An obscure limitation of the tree service in \Cref{ex:tree-service} is that it can only be used by a single client. 
Consider, for example, two separate clients sending a \type{\node} message to $\role T$.
Since both tree service types communicate with $\role P$ to unroll the replicated branch $\type{! \role P \& \pong}$, the protocol becomes non-deterministic in a non-confluent manner and can result in deadlocked behaviour.
To resolve this issue, we amend the tree service to accept a payload role which should act as a personal ping service for the client; this guarantees that the tree type is only unrolled by the client that made the initial request.
%
% As a means of ensuring clients provide well-defined ping services, we introduce \emph{schemas}. 

% \begin{definition}[Schema]\label{def:schema}
%     A schema is a lambda abstraction over a session type which can be applied to role singletons to re-obtain a valid type.
%     %
%     It forms part of a protocol (\Cref{def:protocol}), the extended definition given below:
%     %
%     \[ \Psi ::= \cdots \!\!\mid \type{\Psi, \underline{\role{q}} \hasT \lambda \role{\poly{x}} . S}
%     \tab
%     \type{(\lambda \role{\poly{x}} . S) (\role{\poly{\rho}})} := \type{S\sub{\role{\poly{\rho}}}{\role{\poly{x}}}} 
%     \tab
%     \extract_\ses{s}(\type{\Psi,\underline{\q} \hasT \lambda \role{\poly{x}} . S}) := \extract_\ses{s}(\Psi)\]

%     \noindent
%     A session type under a lambda abstraction is a purely syntactic object---all binding constructs inside \type{$S$} are ignored in schema application.
% \end{definition}

\begin{example}[Multi-Client Binary Tree]\label{ex:multi-tree-service}
    We now redesign the binary tree service, this time capable of concurrently building multiple trees for different clients.
    The key difference here being that the new service, $\role{M}$, accepts a role as a payload on the initial request to which it will issue its \type{\ping}s.
    {\small\[\role{M} \hasT \type{{!} \role \alpha \& \tree(\role \beta) \,.\, \role \beta {\o} \ping \,.\, {!}\role \beta \& \pong \,.\, \role \alpha \& \left\{
            \leaf \,.\, \tEnd,
            \node \,.\, \role \beta {\o} \ping \,.\, \role \beta {\o} \ping \,.\, \tEnd
    \right\}}\]}%
    %
    % We now define a schema for the ping service.
    % %
    % This should be invoked to produce the types of any role substituting $\role \beta$ in $\role{M}$.
    %
    {\small\[\type{S_\p} = \type{! \role{M} \& \ping \,.\, \role{M} {\o} \pong \,.\, \tEnd}\]}%
    Multiple clients can now issue concurrent requests to the tree service whilst maintaining safety.
    A sample (terminating) protocol is that of $\{\role{t_1}, \role{t_2}, \role{p_1}, \role{p_2}, \role M\}$, where $\role{p_1},\role{p_2} \hasT \type{S_\p}$, and the types for $\role{t_1}, \role{t_2}$ are given by:
    {\small\begin{align*}
        \role{t_1} \hasT& \type{\role{M} {\o} \tree(\role{p_1}) \,.\, \role{M} {\o} \node \,.\, \role{M} {\o} \leaf \,.\, \role{M} {\o} \leaf \,.\, \tEnd}\\
        \role{t_2} \hasT& \type{\role{M} {\o} \tree(\role{p_2}) \,.\, \role{M} {\o} \node \,.\, \role{M} {\o} \leaf \,.\, \role{M} {\o} \node \,.\, \role{M} {\o} \leaf \,.\, \role{M} {\o} \leaf \,.\, \tEnd}
    \end{align*}}
\end{example}

\subsubsection{Replication vs.\ Recursion.}

We have seen that replication and parallel composition increases the expressive power of MPST beyond that of tail-recursion.
Naturally, one might ask, ``\textit{is recursion still needed?}''
We find replication and recursion in MPST to be mutually non-inclusive---\ie\ both can produce protocols which \emph{cannot} be typed under the other construct.
We have already demonstrated this in one direction with the binary tree examples; below we showcase how recursion cannot be replaced by replication.

\begin{example}[Lock Service]\label{ex:lock-service}
    The lock service provides clients with a \emph{mutex lock}. 
    {\small\[\role{L} \hasT \type{! \role{\theta} \& \lock \,.\, \mu \recVar{t} .\, \role{\theta} \& \left\{\acquire \,.\, \role \theta \& \release \,.\, \recVar{t},\ \done \,.\, \tEnd\right\}}\]}%
    When a client requests a lock from $\role L$, a copy of the recursive continuation is exposed.
    The recursive definition allows sequences of \type{\acquire} and \type{\release} messages to be received.
    It is key to note that, whilst replication maintains a top-level branch that is permanently available to receive a message, the top-level action in a recursive definition is \emph{not} fixed. 
\end{example}

Copies of the continuation type of a replicated receive are executed concurrently.
\Cref{ex:lock-service} provides a service for roles to enter race-sensitive portions of a protocol, as if it were an atomic action.
We demonstrate its use by typing the dining philosophers problem.

\begin{example}[Dining Philosophers]\label{ex:dining-philosophers}
    A number of philosophers gather to eat on a round table. 
    Each plate is separated by a single chopstick, and a philosopher needs two chopsticks to eat.
    The dining philosophers problem requires the philosophers to employ an algorithm to ensure the table does not get deadlocked.
    In such a setting, we can view \emph{chopsticks as services} and \emph{philosophers as clients}.
    Assuming a size of $n$-philosophers, we define the type for a chopstick as:
    {\small\[(\role{C_i})_{1..n} \hasT \type{\role L {\o} \lock \,.\, {!} \role \eta \& \left\{
        \begin{array}{l}
            \take \,.\, \role L {\o} \acquire \,.\, \role \eta {\o} \ok \,.\, \role \eta \& \give \,.\, \role L {\o} \release \,.\, \tEnd \\
            \done \,.\, \role L {\o} \done \,.\, \tEnd
        \end{array}
    \right\}}\]}%
    Before offering its service, a chopstick requests a lock from $\role L$.
    This ensures that every chopstick has its own lock that it may \type{\acquire} and \type{\release}.
    The lock is used to guarantee that a chopstick is only ever taken by a single philosopher at a time.
    A chopstick then waits for a \type{\take} request from a philosopher; receiving one will result in it attempting to \type{\acquire} the lock.
    This acquisition is only successful if the same role has not already requested it in some other parallel composition.
    If the lock was already acquired, then the $\type{\role L {\o} \acquire}$ will block until the lock is released.
    Acquiring the lock sends an $\type{\msgLabel{ok}}$ back to the philosopher,
    symbolising that they have successfully obtained the chopstick.
    When done from eating, the philosopher may then send back a \type{\give} message, which in turn releases the lock, as the chopstick is now available to be taken by a different philosopher.
    
    We can now write an algorithm for philosophers.
    First, a naive approach:
    {\small\begin{align*}
        (\role{p_i})_{i \in 1..n} :&\, \type{\role{C_i} {\o} \take .\, \role{C_{i\smallplus1}} {\o} \take .\, \role{C_i} \& \ok .\, \role{C_{i\smallplus1}} \& \ok .\, \role{C_i} {\o} \give .\, \role{C_{i\smallplus1}} {\o} \give .\, \q {\o} \fin .\, \tEnd}\\
        % \role{p_n} :&\, \type{\role{C_n} {\o} \take .\, \role{C_1} {\o} \take .\, \role{C_n} \& \ok .\, \role{C_1} \& \ok .\, \role{C_n} {\o} \give .\, \role{C_1} {\o} \give .\, \q {\o} \fin .\, \tEnd}\\
        \role{q} :&\, \type{\role{p_1} \& \fin .\, \cdots .\, \role{p_n} \& \fin .\, \role{C_1} {\o} \done .\, \cdots .\, \role{C_n} {\o} \done .\, \tEnd}
    \end{align*}}
    
    \noindent
    Every philosopher $\role{p_i}$ has a similar type.
    They begin by requesting to take the chopsticks to their left and right---note that this results in every chopstick receiving two \type{\take} requests.
    Receiving both \type{\ok} messages means the philosopher can eat, and subsequently \type{\give} back the chopsticks.
    Finally, when finished, philosophers send a \type{\fin} to role $\q$, acting as a clean-up for the protocol.
    The protocol $\{\role{p_i}, \role{q}, \role{C_i}, \role{L}\}_{i\in 1..n}$ is
    safe, but fails typechecking for $\varphi = \textit{terminating}$.
    In fact, the na\"ive protocol allows for scenarios in which all philosophers take a single chopstick, resulting in a deadlock.
    This problem has many solutions; we present the simplest in which philosophers take turns to eat.
    (Key changes are underlined.)
    {\small\begin{align*}
        \type{S_1} &\,\smallequals\, \type{\role{C_1} {\o} \take .\, \role{C_2} {\o} \take .\, \role{C_1} \& \ok .\, \role{C_2} \& \ok .\, \role{C_1} {\o} \give .\, \role{C_2} {\o} \give .\, \underline{\role{p_2} {\o} \fin} .\, \tEnd}\\
        \type{S_2} &\,\smallequals\, \type{\underline{\role{p_{i\smallminus1}} \& \fin} . \role{C_i} {\o} \take . \role{C_{i\smallplus1}} {\o} \take . \role{C_i} \& \ok . \role{C_{i\smallplus1}} \& \ok . \role{C_i} {\o} \give . \role{C_{i\smallplus1}} {\o} \give . \underline{\role{p_{i\smallplus1}} {\o} \fin} . \tEnd}\\
        \type{S_3} &\,\smallequals\, \type{\underline{\role{p_{n\smallminus1}} \& \fin} . \role{C_n} {\o} \take . \role{C_1} {\o} \take . \role{C_n} \& \ok . \role{C_1} \& \ok . \role{C_n} {\o} \give . \role{C_1} {\o} \give . \role{q} {\o} \fin . \tEnd}\\
        \role{q^\prime} &\hasT\, \type{\role{p_n} \& \fin .\, \role{C_1} {\o} \done .\, \cdots .\, \role{C_n} {\o} \done .\, \tEnd}
    \end{align*}}%
    Here, all philosophers other than the first must wait for the previous to \type{\fin}ish eating before they can request to take their chopsticks.
    The updated protocol $\{\type{\role{p_1} \hasT S_1}, \role{p_i} \hasT \type{S_2}, \role{p_n} \hasT \type{S_3}, \role{q^\prime}, \role{C_j}, \role{L}\}^{i \in 2..n\smallminus1}_{j\in 1..n}$ now typechecks for $\varphi =$ \emph{terminating}.
\end{example}

The previous examples demonstrate how recursion hidden by a universal receive can be used to mimic changes in state.
Our final example does the inverse, \ie\ we show how a universal receive hidden by a recursive binder can be used to model resources which eventually reach some permanent state.
In addition, we show that universal receives model \emph{fair races}, since they do not impose an order on how communication is handled.

\begin{example}[Auction]\label{ex:barter-store}
    A merchant $\role{m}$ sets up an auction $\role{A}$ to accept bids from some buyers $\role{b}$.
    A merchant can employ different mechanisms for choosing who to sell to (\eg, first come first served, highest bid, biased selling, \etc); but must always respond to buyers with either a \type{\yes}, \type{\no}, or \type{\notavail} message.
    {\small\[\type{{\role{A}}} \hasT \type{{!}\role{\alpha} \& \bid(\intT) .\, \role{m} {\o} \bid(\intT, \role \alpha).\, \tEnd}\]
    \[\role{m} \hasT \type{\mu \recVar{t} .\, \role{A} \& \bid(\intT, \role{\beta}) .\, \role{\beta} {\o} \left\{
        \begin{array}{l}
            \yes .\, {!} \role{A} \& \bid(\intT, \role \kappa) .\, \role \kappa {\o} \notavail .\, \tEnd \\
            \no  .\, \recVar{t}
        \end{array}
    \right\}}\]
    \[(\role{b_i})_{i \in 1..n} \hasT \type{\mu \recVar{t} .\, \role{A} {\o} \bid(\intT).\, \role{m} \& \left\{\yes.\, \tEnd,\ \no.\, \recVar{t},\ \notavail.\, \tEnd\right\}}\]}
    Buyers $\role{b_i}$ race to send \type{\bid}s to the auction service.
    In turn, the auction forwards bids and buyer role identifiers to the merchant, who processes bids sequentially (but still in an arbitrary order).
    If the merchant declines a bid, then the client is offered another chance; if the merchant accepts a bid, then it exposes a replicated receive which informs any further buyers that the product is no longer available.
    It is key to note that, unlike in \Cref{ex:dining-philosophers} where we used locks to avoid race conditions, races here are not only allowed but are integral to the protocol.
    Additionally, by uncovering a replicated receive, the merchant enters a \emph{permanent} state.
    These two characteristics guarantee that, no matter the selling algorithm employed by the merchant: 
    \emph{(i)} bids always arrive in a fair arbitrary order; and
    \emph{(ii)} the product can only be sold once.
\end{example}

\subsubsection{Discussion.}

We have now shown that \emph{replication} and \emph{recursion} are mutually
non-inclusive, and that our extension increases the expressiveness of 
MPST.
It is important to understand the dependencies between added features and 
the expressiveness gained.
Since \MPSTb\ is a \emph{conservative} extension, it is guaranteed that 
the increase of expressiveness derives from our two extensions: 
\emph{1.} the addition of \emph{replication}; and
\emph{2.} the addition of \emph{first-class roles}.

Replication alone is enough to increase the expressiveness of MPSTs w.r.t. 
the Chomsky hierarchy. 
We note that, \eg, \Cref{ex:tree-service} could be easily re-written 
without the universal receive, especially since it should not be used 
by multiple clients to uphold deadlock-freedom---thus, 
\emph{replication in MPSTs increases their expressiveness to that of 
context-free languages}.

First-class roles in our formalism refers to: 
\emph{(i)} \emph{universal receives} acting as binders on role variables; and
\emph{(ii)} the ability to \emph{pass roles} as payloads in messages.
Universal receives allow \emph{protocols to be designed agnostic of the client pool} 
(consider \Cref{ex:lb-intro,ex:ping-service,ex:multi-tree-service,ex:lock-service});
and also act as a \emph{fair way of introducing races}---\eg\ \Cref{ex:lb-intro} 
describes a load balancer that responds to requests in \emph{no particular order}.
Role passing allows for \emph{safe distributed choice}. 
In a load balancer (in general), it is \emph{impossible} for a client to know 
which worker will service its request. 
In \Cref{ex:lb-intro}, role passing allows the server to inform clients of its choice, 
and also informs the worker of the identity of the client.
Role passing increases expressiveness by introducing dependencies into a protocol. 
For instance, \Cref{ex:barter-store} uses role passing to ensure the merchant correctly 
services the right buyers; without it, a merchant could not respond to requests without bias.

As a final note, first-class roles are different to, \eg, \emph{delegation} or \emph{multiple sessions} (both supported in \MPSTb) since they act \emph{inside} a session. 
Therefore our system can still be used to check for properties such as deadlock-freedom, 
which is not possible with interleaved sessions without other mechanisms such as an 
interaction typing system~\cite{DBLP:conf/concur/BettiniCDLDY08} or 
priorities~\cite{DBLP:conf/fossacs/DardhaG18}.

\subsection{Decidability}\label{sec:decidability}

As one may expect, the added expressiveness of replication and first-class roles into types does not come without a cost.
Unlike the base theory that this work extends, even though our language models synchronous communication, typechecking may be \emph{undecidable} in the general case.
In the following, we discuss decidability of typechecking in detail. 
We show that typechecking is only as decidable as the safety property; we provide examples of types that make typechecking problematic; and we provide strategies for determining whether a protocol is captured by a decidable subset of \MPSTb.

\begin{theorem}[Decidablility of type checking]\label{thm:dec-gen}
    If $\varphi$ is decidable, then typechecking is decidable.
\end{theorem}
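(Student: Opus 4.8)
The plan is to prove \Cref{thm:dec-gen} by reducing the decidability of typechecking to the decidability of the safety property $\varphi$. The key observation is that the typing rules in \Cref{fig:typing-rules} are almost entirely syntax-directed: each process constructor corresponds to at most one ``real'' typing rule, with the only non-structural obligation being the safety check $\varphi(\extract_\ses{s}(\Psi))$ that appears as a premise of \Cref{rule:t-new}. My approach is to argue that, modulo the safety premise, the typing derivation for a closed process is effectively determined by its syntax, and every other side-condition (subtyping, context splitting, context addition, the $\pEnd$ predicate, role insertion) is independently decidable. Consequently, if $\varphi$ is decidable, the whole derivation can be mechanically searched and checked, giving decidability of typechecking.

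\textbf{Key steps.} First I would establish that each of the auxiliary judgements and predicates used in the typing rules is decidable: the $\pEnd(\Gamma)$ predicate (\Cref{def:end}) is a straightforward structural check; context splitting $\cSplit$ and context addition $+$ (\Cref{fig:context-operations}) are decidable since contexts are finite and the operations are syntax-directed on their domains; role insertion $\insnew{\role{\rho}}$ is immediate; and subtyping $\subT$ (\Cref{def:subtyping}), being a coinductively-defined relation on regular (equi-recursive, guarded) types, is decidable by the standard algorithm for checking regular-tree subtyping in the style of Gay and Hole. Second, I would show the typing rules are syntax-directed up to the administrative handling of \Cref{rule:t-weak} and \Cref{rule:t-sub}: given a process $P$ and a context $\Gamma$, the outermost constructor of $P$ selects a unique process-typing rule, and one can compute the required premises (and the finitely many ways to split $\Gamma$) effectively. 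The recursive structure of the derivation mirrors the syntactic structure of $P$, so the search tree is finite. Third, I would isolate the single place where an oracle beyond pure syntax is needed---the premise $\varphi(\extract_\ses{s}(\Psi))$ of \Cref{rule:t-new}---and observe that $\extract_\ses{s}(\Psi)$ is computable from the annotation $\Psi$, so by hypothesis the check $\varphi(\extract_\ses{s}(\Psi))$ is decidable. Assembling these, a recursive procedure over the structure of $P$ terminates and correctly decides $\Theta \cons \Gamma \vdash P$.

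\textbf{Main obstacle.} The principal subtlety I expect is managing the non-syntax-directed nature of \Cref{rule:t-weak} and \Cref{rule:t-sub} together with the choice of how to split the context in rules such as \Cref{rule:t-par}, \Cref{rule:t-bang}, \Cref{rule:t-rcv}, and \Cref{rule:t-send}. In principle a context admits many splits, and weakening/subtyping can be inserted at many points, so a completely naive proof search need not obviously terminate or need not obviously explore only finitely many candidates. The clean way to handle this is to argue that these rules can be made \emph{algorithmic}: weakening and subtyping can be folded into the leaves (the value-typing rules) and applied lazily, and the relevant context splits are bounded by the finite structure of the parallel types $\type{U_1 \| U_2}$ appearing in $\Gamma$, so only finitely many splits are relevant at each node. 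Once one shows that an equivalent algorithmic presentation of the type system exists in which splitting is deterministic (or at worst finitely branching) and subtyping/weakening are absorbed, termination of the search is immediate and the only genuinely semantic check remaining is $\varphi$. I would therefore frame the argument as: typechecking reduces to finitely many decidable subchecks plus finitely many invocations of the $\varphi$ oracle, whence decidability of $\varphi$ yields decidability of typechecking.
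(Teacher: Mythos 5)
Your proposal is correct and follows essentially the same route as the paper's proof: both arguments reduce typechecking to syntax-directed rule application, finitely many candidate context splits (bounded by the parallel-type structure of $\Gamma$), decidability of subtyping \emph{\`a la} Gay and Hole, and finitely many invocations of the decidable $\varphi$. Your treatment of \Cref{rule:t-weak} and \Cref{rule:t-sub} via an algorithmic presentation is a more careful elaboration of what the paper compresses into ``deterministically applied,'' but it is the same argument, not a different one.
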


\begin{proof}
   Since typing rules in \Cref{fig:typing-rules} can be deterministically applied
   based on the structure of a process $P$, and a typing context need only be
   split a finite number of times to separate all linear types,
   there are a finite number of contexts that can be tried for each rule that requires a context split. 
    % SJF: Obvious
    %Moreover, we observe that predicate $\pEnd$ and function $\extract$ are
    %decidable, as they are syntactically defined on a finite context.
    %
    Lastly, subtyping is decidable~\cite{DBLP:journals/acta/GayH05} (decidability of subtyping replicated types is equivalent to regular branching types, and of parallel types is equivalent to checking multiple session types); and $\varphi$ is decidable by assumption. \qed
\end{proof}

\Cref{thm:dec-gen} states that decidability of type checking is only as decidable as property $\varphi$.
In \Cref{ex:infinite-beh}, we will demonstrate why $\varphi$ may not necessarily be {decidable} in the general case for the type semantics presented in \Cref{fig:type-sem}.
To do this, we first define \emph{behavioural sets} of type contexts (as in \cite[appendix K]{DBLP:journals/pacmpl/ScalasY19tech}).

\begin{definition}[Behavioural set]\label{def:beh}
    The behavioural set of a type context, written $\beh(\Gamma)$, is given by $\beh(\Gamma) = \unf^*(\{\Gammap \| \Gamma \redC^* \Gammap\})$; where $\unf^*$ is the closure of $\unf$---a function that unfolds all top-level recursive binders in a set of contexts.
    (Full definitions of $\unf$ and $\unf^*$ are standard and given in \Cref{app:dec}.)
\end{definition}

Informally, the behavioural set of a context $\Gamma$ is the set of \emph{(i)} its reductions; and \emph{(ii)} its reductions' unfoldings.
The benefit of $\beh$ is that it mechanically abides by conditions \Cref{rule:safe-mu} and \Cref{rule:safe-r} from \Cref{def:prop-safe}.
Therefore, to determine whether $\beh(\Gamma)$ is a safety property, all that is required is to exhaustively check the contexts that inhabit $\beh(\Gamma)$ against the remaining conditions of \Cref{def:prop-safe}.

\begin{example}\label{ex:beh}
    Consider a context $\Gamma = \{\ses{s}[\p] \hasT \type{\mu \recVar{t} . \q {\o} \m . \recVar{t}}, \ses{s}[\q] \hasT \type{\mu \recVar{t$^\prime$} . \p \& \m . \recVar{t$^\prime$}}\}$.
    The behavioural set of $\Gamma$ is given by:
    \[\beh(\Gamma) = \left\{\left\{
        \begin{array}{l}
            \ses{s}[\p] \hasT \type{\mu \recVar{t} . \q {\o} \m . \recVar{t}}, \\
            \ses{s}[\q] \hasT \type{\mu \recVar{t$^\prime$} . \p \& \m . \recVar{t$^\prime$}}
        \end{array}
    \right\},\left\{
        \begin{array}{l}
            \ses{s}[\p] \hasT \type{\q {\o} \m . \mu \recVar{t} . \q {\o} \m . \recVar{t}}, \\
            \ses{s}[\q] \hasT \type{\p \& \m . \mu \recVar{t$^\prime$} . \p \& \m . \recVar{t$^\prime$}}
        \end{array}
    \right\}\right\}\]
    Notice that the left element is the original context after $0$ reduction
    steps, whereas the right element is the unfolding of $\Gamma$.
    Moreover, any further reductions only yield contexts (and unfoldings) already captured by these two elements.
\end{example}

The next example context is problematic for typechecking.

\begin{example}\label{ex:infinite-beh}
    Consider a context $\Gamma = \{\ses{s}[\p] \hasT \type{\mu \recVar{t} . \q {\o} \m . \recVar{t}}, \ses{s}[\q] \hasT \type{! \p \& \m . \role r {\o} \m}\}$.
    The behavioural set of $\Gamma$ is given by:
    \[\beh(\Gamma) = \left\{
    \begin{array}{l}
        \left\{
        \begin{array}{l}
            \ses{s}[\p] \hasT \type{\mu \recVar{t} . \q {\o} \m . \recVar{t}}, \\
            \ses{s}[\q] \hasT \type{! \p \& \m . \role r {\o} \m}
        \end{array}
        \right\},\left\{
            \begin{array}{l}
                \ses{s}[\p] \hasT \type{\q {\o} \m . \mu \recVar{t} . \q {\o} \m . \recVar{t}}, \\
                \ses{s}[\q] \hasT \type{! \p \& \m . \role r {\o} \m }
            \end{array}
        \right\}, 
        \\
        \left\{
            \begin{array}{l}
                \ses{s}[\p] \hasT \type{\mu \recVar{t} . \q {\o} \m . \recVar{t}}, \\
                \ses{s}[\q] \hasT \type{! \p \& \m . \role r {\o} \m \| \role r {\o} \m}
            \end{array}
        \right\},\left\{
            \begin{array}{l}
                \ses{s}[\p] \hasT \type{\q {\o} \m . \mu \recVar{t} . \q {\o} \m . \recVar{t}}, \\
                \ses{s}[\q] \hasT \type{! \p \& \m . \role r {\o} \m \| \role r {\o} \m}
            \end{array}
        \right\}, \cdots
    \end{array}
    \right\}\]
    Indeed, $\beh(\Gamma)$ is \emph{infinite}.
    This is a result of how replication and parallel composition are modelled in \Cref{fig:type-sem}.
    In fact, the type semantics for replicated communication allows for context reduction to yield \emph{larger} types.
    Note how in this example, the contexts that inhabit $\beh(\Gamma)$ get infinitely larger by pulling out infinitely many copies of type $\type{\role r {\o} \m}$.

    Furthermore, we point out that infinite behavioural sets are not only a result of recursive communication with replicated branches.
    Consider, \eg\ a $\Gammap$ = $\{\ses{s}[\role p] \hasT \type{!\role \alpha \& \m . \role \alpha {\o} \m^\prime . \role r {\o} \m} ,
    \ses{s}[\role q] \hasT \type{\p {\o} \m . !\role \beta \& \m^\prime . \role \beta {\o} \m}\}$.
    Such a context will also pull out infinitely many copies of type $\type{\role r {\o} \m}$, because the replicated communication forms an infinite loop.
    Lastly, it is key to note that $\beh(\Gammapp)$ is finite for any $\Gammapp$ that does not contain replicated branches, since there is no other way for a context reduction to yield a larger type.
\end{example}

Knowing whether $\beh(\Gamma)$ is (in-)finite is key for our main decidability result.

\begin{theorem}[Decidability of $\beh$]\label{thm:dec-beh}
    Let $\varphi = \beh(\Gamma)$. If $\beh(\Gamma)$ is finite, then $\varphi$ is decidable.
\end{theorem}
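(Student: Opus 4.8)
The plan is to show that finiteness of $\beh(\Gamma)$ makes the candidate property $\varphi = \beh(\Gamma)$ both effectively constructible and checkable, so that the predicate $\varphi(\Gammap)$ — which, reading $\varphi$ as the set $\beh(\Gamma)$, just asserts $\Gammap \in \beh(\Gamma)$ — is decidable, which is exactly what \Cref{thm:dec-gen} consumes. First I would compute $\beh(\Gamma)$ by saturation under the two operations appearing in \Cref{def:beh}: start from $\{\Gamma\}$ and repeatedly add every single-step $\redC$-successor and every top-level unfolding $\unf$ of a context already in the working set, quotienting throughout by type congruence $\equiv$ (\Cref{def:type-cong}), which is decidable since it is merely commutativity and associativity of $\|$ with unit $\tEnd$.

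For this saturation to terminate and be detectable I would argue three things. First, each context has only finitely many $\redC$-successors: a communication step is fixed by choosing a pair of endpoints of the same session together with one of the finitely many branch/selection labels they share, and unfolding under \Cref{rule:g-rec} to expose a redex terminates because recursion is guarded. Second, $\unf$ is a computable, terminating operation for the same guardedness reason. Third, since $\beh(\Gamma)$ is finite by hypothesis and the saturation only ever adds elements of $\beh(\Gamma)$ (up to $\equiv$), the working set forms a monotonically increasing chain of finite sets bounded above by $\beh(\Gamma)$, so it stabilises after finitely many rounds; stabilisation is recognisable precisely because each round produces only finitely many candidate contexts and $\equiv$ is decidable. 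Hence $\beh(\Gamma)$ is effectively computed as a finite, explicit set, and membership $\Gammap \in \beh(\Gamma)$ is decidable.

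To confirm that this decidable predicate is a legitimate instantiation of the safety framework, I would also check it is genuinely a safety property. By construction $\beh(\Gamma)$ is closed under unfolding and reduction, so conditions \Cref{rule:safe-mu} and \Cref{rule:safe-r} hold mechanically, as already noted below \Cref{def:beh}. The remaining conditions \Cref{rule:safe-com}, \Cref{rule:safe-bang} and \Cref{rule:safe-role} are purely structural, first-order checks on a single context — matching top-level input/output roles, comparing label sets and payload arities, testing payload subtyping (decidable by the subtyping result cited in \Cref{thm:dec-gen}), and checking freedom of role variables — each of which is decidable. As $\beh(\Gamma)$ is finite, I can run all of them on every element and thereby decide whether the whole set is a safety property.

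The hard part will be the termination-and-detection argument. The delicate point is that context reduction with replicated branches can manufacture ever-larger parallel types (witness \Cref{ex:infinite-beh}), so termination cannot follow from any syntactic bound on individual contexts: it rests entirely on the external hypothesis that $\beh(\Gamma)$ is finite, combined with working modulo $\equiv$ so that the infinitely many congruent rearrangements of a parallel type collapse to one representative. Getting the quotient and the finite-branching bookkeeping exactly right — so that the fixpoint provably stabilises and we can recognise when it has — is the crux of the proof.
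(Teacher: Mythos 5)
Your proposal is correct, and its final step --- conditions \Cref{rule:safe-mu} and \Cref{rule:safe-r} hold mechanically by closure, while \Cref{rule:safe-com}, \Cref{rule:safe-bang} and \Cref{rule:safe-role} are checked exhaustively on the finitely many inhabitants using decidability of subtyping and $\frv$ --- is, essentially verbatim, the paper's entire proof. Where you genuinely go further is in the first two paragraphs: the paper's proof says ``exhaustively check all inhabitants of $\beh(\Gamma)$'' without ever explaining how those inhabitants are obtained, implicitly treating finiteness as if it were effectiveness. Your saturation argument fills exactly that hole: you compute $\beh(\Gamma)$ as a fixpoint, using finite branching of $\redC$ (finitely many endpoint pairs and labels, with \Cref{rule:g-rec} terminating by guardedness), computability of $\unf$, and the external finiteness hypothesis as the bound that forces stabilisation modulo the decidable congruence $\type{\equiv}$. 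This also yields decidability of the membership predicate $\Gammap \in \beh(\Gamma)$, which is what the premise $\varphi(\extract_\ses{s}(\Psi))$ of rule \Cref{rule:t-new} actually requires when \Cref{thm:dec-gen} is invoked; the paper's proof only addresses the ``$\varphi$ is a safety property'' premise. So the paper buys brevity by taking enumeration for granted, while your version correctly identifies that enumeration is the delicate point --- as \Cref{ex:infinite-beh} shows, no syntactic bound on context size exists, so termination of the saturation rests solely on the finiteness hypothesis. One small bookkeeping caveat: \Cref{def:beh} applies $\unf^*$ to the set of all reducts, whereas your saturation interleaves unfolding with reduction; you should either saturate under $\redC$ first and close under $\unf$ afterwards, or note that the two orders coincide because \Cref{rule:g-rec} lets a folded context simulate any reduction of its unfolding. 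This is a minor adjustment, not a gap.
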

\begin{proof}
    Since $\beh(\Gamma)$ contains all possible reductums and unfoldings of
    $\Gamma$, then conditions \Cref{rule:safe-r} and \Cref{rule:safe-mu} are
    satisfied immediately.
    Therefore, to determine whether $\beh(\Gamma)$ is a safety property, we may
    exhaustively check all inhabitants of $\beh(\Gamma)$ against conditions
\Cref{rule:safe-com}, \Cref{rule:safe-bang}, \Cref{rule:safe-role}, which is
decidable since $\beh(\Gamma)$ is finite (by assumption); and since subtyping
and $\frv$ are decidable. \qed \end{proof}

\Cref{thm:dec-beh} states that $\varphi$ is \emph{decidable} for any $\varphi = \beh(\Gamma)$ where $\beh(\Gamma)$ is a finite set.
In other words, if a protocol can be shown to have a finite behavioural set, then typechecking for that protocol is \emph{decidable}.
This could be done manually for each protocol; 
however, to further increase the practicality of our type system, 
we present two strategies for restricting protocols into a subset 
of \MPSTb\ with finite behavioural sets.

\subsubsection{Decidability Strategies.}

The strategies we present for restricting protocols to decidable subsets of \MPSTb\ all follow a similar blueprint.
Essentially, we wish to establish properties on $\Gamma$ with decidable
approximations that imply that $\beh(\Gamma)$ is finite. 
Then, by Theorems~\ref{thm:dec-gen} and~\ref{thm:dec-beh} we obtain decidable typechecking.

The following defines, and gives examples, of each strategy; then, we show these strategies are sound and discuss how they can be approximated.

\Cref{def:triv-fin} prevents types like $\Gamma$ in \Cref{ex:infinite-beh} using
a na\"ive approach; put simply, $\tf$ captures protocols where all clients of a 
replicated server are intrinsically non-recursive and non-replicated.

\begin{definition}[Trivially finite]\label{def:triv-fin}
    A context $\Gamma$ is trivially finite, $\tf(\Gamma)$, iff:
    \begin{enumerate}
        \item\label{item:tf-2} no type in the body of a recursive binder sends to a replicated branch; and
        \item\label{item:tf-1} continuations of replicated branches do not send to other replicated branches.
    \end{enumerate}
\end{definition}

% captures contexts guaranteed to have
% finite behavioural sets since communication with a replicated branch cannot
% occur from any source that may possibly be part of a loop.

\begin{example}
    The protocols modelling the dining philosophers problem in \Cref{ex:dining-philosophers} are \emph{trivially finite}.
    Note how the chopstick services make the initial request to the lock service \emph{before} they offer their replicated branch.
\end{example}

For other protocols we need a more nuanced strategy.
\Cref{def:cf} formalises ``loops'' in a protocol which may result from replicated 
servers infinitely bouncing messages amongst each other (such as $\Gammap$ in \Cref{ex:infinite-beh}).

%
% A context $\Gamma$ is loop free, written $\lf(\Gamma)$ iff $\Gamma$ does not contain any CRCPs and $\Gamma$ does not contain any recursive binders.
% %
% We have previously demonstrated that replicated communication can form a loop ($\Gammap$ in \Cref{ex:infinite-beh}).
%

\begin{definition}[Loop free]\label{def:cf}
    Given a protocol $\Psi$, and a context $\Gamma$ derived from $\Psi$ (possibly after a number of reductions), 
    a \textbf{cycle} in the LTS of $\Gamma$ is defined as the series of transitions s.t., 
    for $\Gamma = \Gammap\cSplit\ses{s}[\p] \hasT \type{!\role \rho \&_{i\in I} \m_i (\poly{T_i}).S_i}$
    % for $\Gamma = \Gammap\cSplit\ses{s}[\p] \hasT \type{!\role \rho \&_{i\in I} \m_i (\poly{T_i}).S_i \| {\o_{j \in J}} \role{\rho^\prime_j} \m^\prime_j (\poly{T_j^\prime}) . S^\prime_j }$
    \[
        \Gamma
        \redS{\aCom{s}{\q}{\p}{\m_k}}\!\!
        \left(\!\redS{\aCom{s}{\role{p^\prime_j}}{\role{q^\prime_j}}{\m^{\prime}_j}}\!\right)_{j \in 1..n}\!\!\!\!\!\!\!\!\!\!\!\!\!\!\!
        \redS{\aCom{s}{\q}{\p}{\m_k}}
        \Gammapp
    \]
    % \[
    %     \Gamma
    %     \redS{\aCom{s}{\p}{\q}{\m_k^\prime}}\!\!
    %     \left(\!\redS{\aCom{s}{\role{p^\prime_l}}{\role{q^\prime_l}}{\m^{\prime\prime}_l}}\!\right)_{l \in 1..m}\!\!\!\!\!\!\!\!\!\!\!\!\!
    %     \redS{\aCom{s}{\role{q^{\prime\prime}}}{\p}{\m_w}}
    %     \left(\!\redS{\aCom{s}{\role{p^\prime_l}}{\role{q^\prime_l}}{\m^{\prime\prime}_l}}\!\right)_{l \in 1..n}\!\!\!\!\!\!\!\!\!\!\!\!\!
    %     \redS{\aCom{s}{\p}{\q}{\m_k^\prime}}
    %     \Gammapp
    % \]
    where $k \in I$, and for any $\role{p^\prime}, \role{q^\prime}, \m^\prime, n, \Gammapp$.
    A cyclic replicated communication path (\textbf{CRCP}) is defined as a cycle with these added conditions:
    \begin{enumerate}
        \item $\Gamma(\ses{s}[\q]) = \type{S_\q}$ s.t. either $ \type{S_\q {\equiv}\, !S^\& \| U}$ or 
              $\type{S_\q}$ appears after a recursive binder in $\Psi(\q)$, for any $\type{S^\&},\type{U}$; and
        \item $\forall x \in 1..n : $ $\Gamma \redS{\aCom{s}{\q}{\p}{\m_k}} \left(\!\redS{\aCom{s}{\role{p^\prime_l}}{\role{q^\prime_l}}{\m^{\prime}_l}}\!\right)_{l \in 1..x-1}\!\!\!\!\!\!\!\!\!\!\!\!\!\!\!\!
              \type{\Gamma^{\prime\prime\prime}} \redS{\aCom{s}{\role{p^\prime_x}}{\role{q^\prime_x}}{\m^{\prime}_x}}$ 
              we have: \\
              $\type{\Gamma^{\prime\prime\prime}}(\ses{s}[\role{q^\prime_x}]) = \type{S_\role{q^\prime_x}}$ s.t. either $\type{S_\role{q^\prime_x} {\equiv}\, !S^\& \| U}$ or $\type{S_\role{q^\prime_x}}$ appears after a recursive binder in $\Psi(\role{q^\prime_x})$.
%
        % \item $\forall x \in 1..n : $ 
        %       $\Gamma \redS{\aCom{s}{\p}{\q}{\m_k^\prime}}\!\!
        %       \left(\!\redS{\aCom{s}{\role{p^\prime_l}}{\role{q^\prime_l}}{\m^{\prime\prime}_l}}\!\right)_{l \in 1..m}\!\!\!\!\!\!\!\!\!\!\!\!\!\!\!\!
        %       \redS{\aCom{s}{\role{q^{\prime\prime}}}{\p}{\m_w}}\!\!
        %       \left(\!\redS{\aCom{s}{\role{p^\prime_l}}{\role{q^\prime_l}}{\m^{\prime\prime}_l}}\!\right)_{l \in 1..x-1}\!\!\!\!\!\!\!\!\!\!\!\!\!\!\!\!\!\!\!
        %       \Gammapp \!\redS{\aCom{s}{\role{p^\prime_x}}{\role{q^\prime_x}}{\m^{\prime\prime}_x}}
        %       $ 
        %       we have: 
        %       $\Gammapp(\ses{s}[\role{q^\prime_x}]) = \type{S_\role{q^\prime_x}}$ s.t. either $\type{S_\role{q^\prime_x} {\equiv}\, !S^\& \| U}$ or $\type{S_\role{q^\prime_x}}$ appears after a recursive binder in $\Psi(\role{q^\prime_x})$.
    \end{enumerate}
    We say $\Gamma$ is \textbf{loop free}, written $\lf(\Gamma)$, iff the LTS of $\Gamma$ 
    does not contain a CRCP.
\end{definition}

Essentially, a \emph{cycle} in the LTS of a context $\Gamma$: 
\emph{(i)} starts with an incoming communication action into a replicated type; 
\emph{(ii)} performs some intermediary transitions; and
\emph{(iii)} ends with the transition that began the cycle.
A CRCP is a special case of a cycle, where all intermediary transitions must also be 
between roles that have a replicated type; or form part of the body of a recursive type.
Finally, a context is \emph{loop free} iff its LTS does not produce any CRCPs.

\begin{example}\label{ex:crcp}
    Contexts $\Gamma$ and $\Gammap$ from \Cref{ex:infinite-beh} contain CRCPs: 
    $\Gamma$ contains a CRCP at $\q$ with $0$ intermediary transitions; and
    $\Gammap$ contains two CRCPs, at $\q$ and $\p$, both with $1$ intermediary transition
    forming part of a replicated type.
\end{example}

\begin{example}\label{ex:cf}
    The protocols in \Cref{ex:lb-intro,ex:tree-service,ex:multi-tree-service} are \emph{loop free}: \Cref{ex:lb-intro} because there are no cycles; \Cref{ex:tree-service,ex:multi-tree-service} because the cycle between the \msgLabel{pong} branch on $\role{T}$ (resp. $\role M$) and the \msgLabel{ping} branch on $\role P$ (resp. $\role{p_1},\role{p_2}$) includes communication with the (non-replicated/-recursive) client; breaking the CRCP.
\end{example}

\begin{proposition}\label{prop:ioc}
    If $\beh(\Gamma)$ is infinite, then $\Gamma$ contains a CRCP.
\end{proposition}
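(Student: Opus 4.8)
The statement is precisely the soundness direction of the \emph{loop free} criterion: combined with its contrapositive, \Cref{thm:dec-beh} then gives decidable typechecking for loop-free protocols. The entire argument rests on one structural observation about \Cref{fig:type-sem}: the only transition that increases the ``size'' of a context is \Cref{rule:g-bang}, which places a fresh copy of a replicated continuation $S_k$ in parallel; every other communication (an application of \Cref{rule:g-com1} or \Cref{rule:g-com2} in which the receiver fires \Cref{rule:g-rcv} rather than \Cref{rule:g-bang}) merely advances existing components and leaves the total parallel width unchanged, up to the $\tEnd$ identity of \Cref{def:type-cong}. I would make this precise with a measure $|\Gamma|$ counting the total number of parallel session-type components across all endpoints, and show $|\Gamma'| \le |\Gamma| + 1$ whenever $\Gamma \redC \Gamma'$, with the increment occurring exactly when a replicated receive fires.

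First I would bound the ``small'' part of $\beh(\Gamma)$. Because types are equi-recursive and guarded, the set of session-type components reachable by unfolding, by advancing to continuations, and by substituting the (finitely many) concrete roles for role variables is finite, and there are finitely many endpoints $\ses{s}[\p]$; hence for each bound $n$ only finitely many contexts $\Gammap$ with $|\Gammap| \le n$ can occur in $\beh(\Gamma)$. Since $\beh(\Gamma)$ is infinite by assumption, it must contain contexts of unbounded size, and by the size bound above reaching a context of size $m$ requires a finite reduction sequence from $\Gamma$ with at least $m - |\Gamma|$ firings of \Cref{rule:g-bang}. Thus for every $m$ there is a finite reduction path carrying more than $m$ replicated-receive firings.

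Next I would isolate the recurring communication. A non-recursive and non-replicated endpoint carries a finite tree of actions, so it supplies only a bounded number of outputs before reducing to $\tEnd$; summing over all endpoints and over the outputs already present in $\Gamma$ gives a constant $B$ bounding the total supply from \emph{non-regenerative} sources. Each replicated firing consumes one output, so along a path with more than $B$ such firings at least one consumed output must have been \emph{produced} during the run --- either by unfolding a recursive binder (via \Cref{rule:g-rec}) or by a copy pulled out of an earlier replicated firing. Tracing the provenance of the outputs feeding each replicated receive, an unbounded number of firings forces this causal chain to repeat: some replicated input $\aCom{s}{\q}{\p}{\m_k}$ into a replicated endpoint $\ses{s}[\p]$ occurs twice with its feeding chain regenerated in between, and the intervening communications $\bigl(\redS{\aCom{s}{\role{p'_j}}{\role{q'_j}}{\m^{\prime}_j}}\bigr)_{j \in 1..n}$ are exactly those along the chain. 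This delimits a segment $\Gamma_0 \redS{\aCom{s}{\q}{\p}{\m_k}} \bigl(\redS{\aCom{s}{\role{p'_j}}{\role{q'_j}}{\m^{\prime}_j}}\bigr)_{j \in 1..n} \redS{\aCom{s}{\q}{\p}{\m_k}} \Gammapp$, which is a \emph{cycle} in the sense of \Cref{def:cf}.

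It remains to verify that this cycle is a CRCP, i.e.\ that every regenerating participant is replicated or recursive; this is the step I expect to be the main obstacle. The key lemma is the exhaustibility fact from the previous paragraph in contrapositive form: an endpoint whose feeding of the cycle recurs cannot be a plain finite type, for such a type would be consumed after finitely many actions and could not sustain the repetition. Applying this to the sender $\ses{s}[\q]$ of the bracketing action shows that, at the point it acts, its type is either $\type{{!}S^\& \| U}$ or sits after a recursive binder of $\Psi(\q)$, giving condition (1) of \Cref{def:cf}; applying it to the receiver $\ses{s}[\role{q'_x}]$ of each intermediary communication gives condition (2). The delicate part is arguing that it is the \emph{sustained} repetition, rather than a single traversal, that forces these conditions at \emph{every} $x$: I would choose the two bracketing firings from a path long enough (by the unbounded-supply counting above) that the provenance chain linking them consists only of outputs that are themselves regenerated, so that each endpoint active between them is one that recurs and is therefore replicated or recursive. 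This produces a CRCP and closes the proof, with the contexts of \Cref{ex:crcp} as the prototypical witnesses.
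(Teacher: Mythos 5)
Your proposal is correct and takes essentially the same approach as the paper, whose own proof is a terse three-sentence sketch of exactly this argument: only communications with replicated branches can yield larger types, so an infinite behavioural set forces a recurring communication on a replicated branch whose intermediary transitions are all non-finite, which is then identified with the definition of a CRCP. Your version simply fills in what the paper asserts without detail --- the size measure, the count showing only finitely many contexts of bounded size can occur, and the provenance argument justifying that every intermediary participant in the repeated segment must be replicated or recursive.
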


\begin{proof}
    From the type semantics (\Cref{fig:type-sem}), we observe that the only reductions that can yield \emph{larger} types are communications with replicated branches.
    Therefore, it follows that for $\beh(\Gamma)$ to be infinite, there must be some 
    reoccurring transitions in the LTS of $\Gamma$ that repeatedly communicates with 
    a replicated branch---\ie\ there is a communication action on a replicated branch, 
    followed by any number of intermediary transitions which then end with the initial 
    communication action on the replicated branch; where all intermediary transitions 
    must be non-finite.
    This is the definition of a CRCP (\Cref{def:cf}). \qed
\end{proof}

\begin{theorem}[Strategy soundness]\label{thm:strat-sound}
    Given a context $\Gamma$, $\Phi(\Gamma)$ implies $\beh(\Gamma)$ is finite, for $\Phi \in \{\tf,\lf\}$
\end{theorem}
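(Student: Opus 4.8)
The plan is to treat the two properties separately and reduce both to Proposition~\ref{prop:ioc}. The $\lf$ case requires almost no work: by \Cref{def:cf}, $\lf(\Gamma)$ means exactly that the LTS of $\Gamma$ contains no CRCP, while Proposition~\ref{prop:ioc} asserts that an infinite $\beh(\Gamma)$ forces the existence of a CRCP. Reading the proposition contrapositively, the absence of a CRCP gives that $\beh(\Gamma)$ is finite, which is precisely the claim for $\Phi = \lf$.

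For the $\tf$ case my plan is to establish the implication $\tf(\Gamma) \Rightarrow \lf(\Gamma)$ and then appeal to the $\lf$ case just settled. I would argue by contradiction: suppose $\tf(\Gamma)$ holds yet the LTS of $\Gamma$ contains a CRCP. By \Cref{def:cf} every CRCP opens and closes with a communication $\aCom{s}{\q}{\p}{\m_k}$ whose receiver $\p$ holds a replicated branch, and its first clause constrains the sender's type $\type{S_\q}$ to be either of parallel form $\type{!S^\& \| U}$ or to occur beneath a recursive binder in $\Psi(\q)$. These are exactly the two mechanisms by which the closing output into $\p$'s replicated branch can be re-enabled, so I would split on them. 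If $\type{S_\q}$ occurs beneath a recursive binder, then a type in the body of a recursive binder sends to the replicated branch at $\p$, contradicting clause~\ref{item:tf-2} of \Cref{def:triv-fin}. If instead $\type{S_\q \equiv\, !S^\& \| U}$, then---since well-formed protocols contain no parallel types and, by \Cref{rule:g-bang}, a parallel component can arise only by pulling out a copy of a replicated continuation---the output into $\p$ must live in $U$, i.e.\ in a continuation of $\q$'s own replicated branch, which thus sends to the distinct replicated branch at $\p$ (note $\p \neq \q$); this contradicts clause~\ref{item:tf-1}. Both cases are impossible, so no CRCP exists, giving $\lf(\Gamma)$ and hence, via the $\lf$ case, finiteness of $\beh(\Gamma)$.

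The step I expect to be the main obstacle is reconciling the essentially syntactic reading of \Cref{def:triv-fin}, stated over the types appearing in $\Gamma$ and its originating protocol $\Psi$, with the dynamic notion of CRCP from \Cref{def:cf}, which ranges over contexts reachable in the LTS after arbitrarily many reductions and role-variable substitutions. Concretely, a replicated continuation may syntactically send to a role \emph{variable} $\role \alpha$ that becomes bound to a replicated role $\p$ only at runtime---exactly the situation of the context $\Gammap$ in \Cref{ex:infinite-beh}---so clauses~\ref{item:tf-2} and~\ref{item:tf-1} must be understood as forbidding any send whose target can be instantiated to a replicated branch, not merely sends to a syntactically replicated role. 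I would discharge this by carrying, along the cycle, the accumulated substitution on role variables and checking that it preserves the property ``sends to a replicated branch''; once this bookkeeping is in place, the case split above and the final invocation of Proposition~\ref{prop:ioc} are routine.
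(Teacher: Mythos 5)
Your proposal is correct and follows essentially the same route as the paper: both cases reduce to \Cref{prop:ioc} (read contrapositively or by contradiction), with the $\tf$ case resting on the observation that a CRCP must violate one of the two clauses of \Cref{def:triv-fin}. The only differences are cosmetic---you route $\tf$ through $\lf$ rather than invoking \Cref{prop:ioc} directly, and you spell out the case split on the CRCP sender's type (parallel replicated form vs.\ under a recursive binder) that the paper's proof leaves implicit in the phrase ``a CRCP will violate at least one of the conditions for $\tf$''.
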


\begin{proof}
    % First, we observe that for any $\Gammap$ s.t. $\beh(\Gammap)$ is \emph{infinite}, then $\Gammap$ models an \emph{infinitely occurring cycle}---\ie\ the cycle's transitions include some combination of communication with recursive types and replicated branches (including their continuations).
    % %
    % The proof now proceeds by contradiction for each property $\Phi$.
    % \\
    \emph{Case} \tf. By contradiction: assume $\beh(\Gamma)$ is infinite, then, by \Cref{prop:ioc}, $\Gamma$ contains a CRCP;
    but, by \Cref{def:cf}, a CRCP will violate at least one of the conditions for \tf\ in \Cref{def:triv-fin}---contradiction.
    % Therefore $\beh(\Gamma)$ is \emph{finite}. 
    \\
    \emph{Case} \lf. By contradiction: assume $\beh(\Gamma)$ is infinite, then, by \Cref{prop:ioc}, $\Gamma$ contains a CRCP---contradiction; 
    therefore $\beh(\Gamma)$ is \emph{finite}.
    \qed
\end{proof}

\noindent
\emph{Approximations.} 
Properties \tf\ and \lf\ are both \emph{decidable} for all protocols \emph{without} 
first-class roles:
\tf\ can be determined via a linear traversal of a type context; and 
\lf\ can be checked by constructing a directed graph of visited replicated 
branches in a context, then checking that the graph is acyclic (which is 
decidable).
An approximation is only required for protocols using role variables, since their 
values can only be known at runtime.
This approximation would treat any role variable in a selection type to have the 
capability of reaching \emph{any other role}.

\begin{example}[Approximation false negative]\label{ex:approx}
    Consider the following protocol:
    \[
    \p \hasT \type{!\role{\alpha} \& \m . \role \alpha {\o} \m^\prime }
    \tab\tab
    \q \hasT \type{\p {\o} \m . \p \& \m^\prime . \role r {\o} \m^\prime . \role r \& \m}
    \tab\tab
    \role r \hasT \type{!\role{\beta} \& \m^\prime . \role \beta {\o} \m}
    \]%
    Although the above is \emph{trivially finite} ($\p$ and $\role r$ do not communicate), 
    it would be \emph{falsely} flagged as \emph{not} \tf\ because $\role \alpha$ is over-approximated to include $\role r$.
\end{example}

It is key to note that false negatives of the approximation are avoidable by 
\emph{requiring unique branching labels on replicated types}.
Furthermore, even with the approximation, all examples presented in this paper (except \Cref{ex:barter-store}) are captured by either \lf\ or \tf.
With respect to~\Cref{ex:barter-store}, the presented protocol still yields a
finite behavioural set, and thus by \Cref{thm:dec-beh} and \Cref{thm:dec-gen},
typechecking it is decidable.
We aim to continue exploring further strategies (especially ones which can capture protocols such as \Cref{ex:barter-store}) in future work. 

% Decidable approximations of $\lf$ and $\tf$ can be obtained by over-approximating the behaviour of role variables in selection types of the form: $\type{{\o_{i \in I}} \role{\alpha_i} \m_i (\poly{T_i}) . S_i}$.
% %
% For example, an approximation of $\lf$ can be obtained by traversing the syntactic representation of a protocol and building over-approximated paths of communication between replicated branches;
% assuming that a type $\type{S_\p \,\subT\, \role \alpha {\o} \m (\poly{T}) . S}$ could potentially communicate to any replicated branch $\type{S_\q \supT {!} \role \rho \& \m (\poly{T^\prime}) . S^\prime}$
% for \emph{any} role $\role{q}$ and either $\role \rho = \p$ or $\role \rho = \role \beta$ for any $\role \beta$.
% %
% The same technique can be applied for implementing an approximation of \tf.
% %
% We note that even with the over-approximations of selection types, all examples presented in this paper (except \Cref{ex:barter-store}) are captured by either \lf\ or \tf.
% %
% With respect to~\Cref{ex:barter-store}, the presented protocol still yields a
% finite behavioural set, and thus by \Cref{thm:dec-beh} and \Cref{thm:dec-gen},
% typechecking it is decidable.
% %
% We aim to continue exploring further strategies (especially ones which can capture protocols such as \Cref{ex:barter-store}) in future work. 

\section{Related Work}\label{sec:related}
The \MAGpi calculus~\cite{DBLP:conf/esop/BrunD23} makes use of generalised MPST
theory in order to type \emph{failure-prone} communications
(i.e., message loss,
reordering, and delays). Key to their approach is the use of timeouts to
detect and handle message loss; as with related approaches (e.g., Barwell et
al.~\cite{DBLP:conf/concur/BarwellSY022}), this often means that session types
are made more complex in order to handle each potential failure point.
Our approach is most closely related to that of Le Brun \&
Dardha~\cite{DBLP:conf/forte/BrunD24}, who introduce \MAGpib as a modification
of \MAGpi to incorporate type-level replication: this has the advantage of
simplifying client-server interactions by only requiring \emph{clients} to
handle potential failures.
However, the aims of our work and that of Le Brun \&
Dardha~\cite{DBLP:conf/forte/BrunD24} are significantly different: while their
aim is specifically to use replication as a methodology to simplify failure
handling, our work is a more fundamental study of the consequences of
type-level replication on expressiveness and decidability.
In particular we make use of a more standard base MPST calculus (i.e., a
calculus that does not include undirected receives, nor rules that
model failures and message reordering), and we make use of \emph{synchronous}
communication semantics. Nevertheless our calculus allows for more
interesting use of replication: unlike \MAGpib we allow \emph{nested
replication} and \emph{recursion}, whereas \MAGpib only allows replication at
the top-level and processes must be finite.
Furthermore, as a result of our inclusion of \emph{first-class roles}, as well
as using replication \emph{in tandem with} recursion, we can type 
protocols that make non-trivial use of mutual exclusion and races,
all of which would be \emph{inexpressible in \MAGpib.}

Toninho \& Yoshida~\cite{DBLP:journals/toplas/ToninhoY18} assess the relative
expressiveness of a multiparty session calculus and a process calculus inspired
by classical linear logic, showing that MPST calculi allow strictly more
expressive process networks (i.e., those that can include cycles).  As part of
this investigation they explore a limited form of type-level replication that
permits a liveness property.  However, their system does not consider first-class
roles and pre-dates generalised MPST so is guided primarily by global types, and
is therefore less expressive than $\MPSTb$.

Replicated session types have been used to a limited extent in a
wide variety of works on binary session types
(e.g.,~\cite{DBLP:journals/mscs/CairesPT16,DBLP:conf/fossacs/DardhaG18,DBLP:conf/esop/CairesP17,DBLP:journals/jfp/Wadler14}),
primarily in works pertaining to Curry-Howard interpretations of propositions as session types, where the exponential modality from linear
logic $!A$ is typically linked to replication from the $\pi$-calculus.
Several further lines of work investigate client-server communication following this
correspondence.
Kokke et al.~\cite{KokkeMW20} investigate an extension of the logically-inspired
HCP calculus~\cite{KokkeMP19} with two dual modalities $!_{n}A$ and $?_{n}A$ to
type a pool of $n$ clients and a replicated server that can service $n$ requests
respectively, and show that their calculus allows nondeterministic behaviour
while still preventing deadlocks and ensuring termination.
Qian et al.~\cite{DBLP:journals/pacmpl/QianKB21} develop \textsf{CSLL}
(client-server linear logic) that uses the dual \emph{coexponential} modalities
\textexclamdown$A$ and \textquestiondown$A$ to type servers and client pools
respectively, along with rules to merge client pools.
The subtle difference between the \textexclamdown$A$ modality and the
exponential $!A$ being that the former (informally) serves type $A$ \emph{only as many
times as required according to client requests}.
This is similar to how our type system operates, given that replicated receives
only pull out copies of continuations upon communication.
Multiple requests induce non-determinism into further reductions; 
in our work this is observed through parallel types, which in the work of Qian et
al.~\cite{DBLP:journals/pacmpl/QianKB21} is observed through
hyperenvironments~\cite{KokkeMP19,FowlerKDLM23}.
%
% These works primarily concentrate on \emph{binary} session types.
% For all these works, interacting with a replicated channel 
% creates a fresh channel and a fresh copy of a server process.
%
Unlike all these works, we focus on \emph{multiparty} session types, 
where interacting with a replicated channel 
spawns a process that \emph{remains in the same session}.
This results in  our key novelty, \ie\ our account of replication in the
\emph{type semantics} with the use of parallel types.

Marshall \& Orchard~\cite{DBLP:journals/corr/abs-2203-12875} investigate the
effects of adding a \emph{semiring graded necessity} modality (a generalisation
of the $!$ modality) to a session-typed $\lambda$-calculus,
showing interesting consequences such as replicated servers and multicast
communication. 
% Again a significant difference is that our work concentrates on
% multiparty interactions, but more specifically 
We posit that the two systems can
type different protocols: 
while \MPSTb cannot straightforwardly encode multicast
communication, it is difficult to see how their approach would scale to the
examples we describe in~\Cref{sec:disc}.

Rocha \& Caires~\cite{RochaC23} introduce CLASS, a process calculus with a
correspondence to Differential Linear Logic~\cite{Ehrhard18}. CLASS integrates
session-typed communication, reference cells with mutual exclusion, and
replication. Their calculus guarantees preservation and progress, the proof of
the latter property requiring a logical relation. CLASS can encode the dining
philosophers problem, making essential use of shared state; in contrast our
implementation relies on the interplay between replication and recursion.
%Finally, \MPSTb is not prescriptive about the behavioural properties (e.g.,
%progress or termination) satisfied by protocols, with the system being
%parametric in the particular safety property used.

Deni{\'{e}}lou et al.~\cite{DBLP:journals/corr/abs-1208-6483} introduce 
parameterised MPST as a means of designing protocols for parallel 
algorithms. 
Their formalism allows for parameterisation of participants in the 
form of $\role{client}[i]$, representing the $i^{\text{th}}$ client from some 
group of $n$ clients, for a bound $n$.
%
% Looping constructs can then be used to represent the iteration over 
% elements in a group.
%
The key difference between this formalism and \MPSTb\ is that our 
approach preserves, and allows for the fair handling of, \emph{races}.
Parameterised MPST enforce a predetermined prioritisation on the order 
of communication (thus, \Cref{ex:barter-store} cannot be expressed in 
that system).

\section{Conclusion}\label{sec:conc}

We presented \MPSTb, a conservative extension of the standard
multiparty session $\pi$-calculus which introduces for the first time
\emph{replication} and \emph{first-class roles}, and proved its metatheory.
We have shown that the interplay between replication and recursion
allows us to describe interesting and previously inexpressible MPST protocols
such as those that rely on races and mutual exclusion, as well as giving a
method by which we can express context-free protocols.
Although replication can have implications for decidability of typechecking, we
have identified sufficient conditions that can determine decidability and
provided syntactic approximations for decidability.
For future work, it would be interesting to investigate an extension of \MPSTb\
with polymorphism, as this would improve on the modular design of protocols
already promoted by the type system.
Furthermore, we wish to continue exploring the decidability of typechecking 
to find more general approximations.

\paragraph{Acknowledgements.} 
We thank the anonymous reviewers for their detailed and insightful comments.
This work was supported by EPSRC Grants EP/X027309/1 (Uni-pi) and  EP/T014628/1
(STARDUST).

%%% BIB
\clearpage
\bibliographystyle{splncs04}
\bibliography{my}

% Appendix
\clearpage
%% changing the margins of appendix
\changetext{}{10em}{-5em}{-5em}{}
\appendix

\section{Structural Congruence}\label{app:cong}

\begin{definition}[Structural Congruence]\label{def:cong}
    {\small\begin{mathpar}
    \inferrule{}{P\|Q \equiv Q\|P}
    
    \inferrule{}{(P\|Q)\|R \equiv P\|(Q\|R)} 

    \inferrule{}{P\|\0 \equiv P}

    \inferrule{}{\newnt{s}\,\0 \equiv \0} 
    
    \inferrule{}{\newnt{s}\newnt{s^\prime}\, P \equiv \newnt{s^\prime}\newnt{s}\, P }

    \inferrule{\ses{s}\not\in \fs(P)}{\newnt{s}\,(P\|Q) \equiv P\|\newnt{s}\,Q}

    \inferrule{}{\defin{X}{\poly{x}}{P\;} \0 \equiv \0}

    \inferrule{\ses{s} \not\in \fs(P)}{\defin{X}{\poly{x}}{P\;} \newnt{s}\,Q \equiv \newnt{s}\,(\defin{X}{\poly{x}}{P\;} Q)}

    \inferrule
    {\dpv(D)\cap\fpv(Q) = \emptyset}
    {\definD{D\;} (P\|Q) \equiv (\definD{D\;}P)\| Q}

    \inferrule
    {(\dpv(D)\cup\fpv(D))\cap\dpv(D^\prime) = (\dpv(D^\prime)\cup\fpv(D^\prime))\cap\dpv(D) = \emptyset}
    {\definD{D\;} (\definD{D^\prime\;} P) \equiv \definD{D^\prime\;}(\definD{D\;} P)}
    \end{mathpar}}
\end{definition}
\section{Lemmata}\label{app:lemmata}

\subsection{Context Operations}

\begin{lemma}
    $\Gamma = \Gammab{1} \cSplit \Gammab{2} \ \textit{implies} \  \Gammab{1} + \Gammab{2} = \Gamma$.
\end{lemma}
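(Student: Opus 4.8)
The plan is to proceed by structural induction on the derivation of the splitting judgement $\Gamma = \Gammab{1} \cSplit \Gammab{2}$. The guiding observation is that the five splitting rules stand in one-to-one correspondence with the addition rules: \Cref{rule:split-0} pairs with \Cref{rule:+-empty}, \Cref{rule:split-l1} with \Cref{rule:+-Vl}, \Cref{rule:split-r1} with \Cref{rule:+-Vr}, \Cref{rule:split-role} with \Cref{rule:+-role}, and \Cref{rule:split-par} with \Cref{rule:+-par}. In each inductive step I would apply the induction hypothesis to the sub-derivation (which splits strictly smaller contexts) and then reassemble the binding just removed using the matching addition rule.

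The easy cases are immediate. The base case \Cref{rule:split-0} gives $\Gamma = \Gammab{1} = \Gammab{2} = \emptyset$, and \Cref{rule:+-empty} yields $\emptyset + \emptyset = \emptyset$. For \Cref{rule:split-par}, the conclusion is $\Gamma, c \hasT \type{U_1 \| U_2} = \Gammab{1}, c \hasT \type{U_1} \cSplit \Gammab{2}, c \hasT \type{U_2}$ from the premise $\Gamma = \Gammab{1} \cSplit \Gammab{2}$; the induction hypothesis gives $\Gammab{1} + \Gammab{2} = \Gamma$, and \Cref{rule:+-par} then delivers $\Gammab{1}, c \hasT \type{U_1} + \Gammab{2}, c \hasT \type{U_2} = \Gamma, c \hasT \type{U_1 \| U_2}$. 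The \Cref{rule:split-role} case is handled identically via \Cref{rule:+-role}, and neither of these rules carries a side condition.

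The only cases that require care are \Cref{rule:split-l1} and \Cref{rule:split-r1}, since the corresponding addition rules \Cref{rule:+-Vl} and \Cref{rule:+-Vr} impose freshness side conditions. In \Cref{rule:split-l1} the conclusion is $\Gamma, c \hasT \type{U} = \Gammab{1}, c \hasT \type{U} \cSplit \Gammab{2}$ from $\Gamma = \Gammab{1} \cSplit \Gammab{2}$, the induction hypothesis gives $\Gammab{1} + \Gammab{2} = \Gamma$, and to invoke \Cref{rule:+-Vl} I must discharge $c \notin \dom(\Gammab{2})$. This is the one genuinely non-mechanical point, and I expect it to be the main (if minor) obstacle. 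I would resolve it with an auxiliary lemma, proved by a routine parallel induction on the splitting derivation, stating that splitting only distributes domains: $\dom(\Gammab{1}) \subseteq \dom(\Gamma)$ and $\dom(\Gammab{2}) \subseteq \dom(\Gamma)$ whenever $\Gamma = \Gammab{1} \cSplit \Gammab{2}$. Since the composed context $\Gamma, c \hasT \type{U}$ is well-formed, context composition forces $c \notin \dom(\Gamma)$, and hence $c \notin \dom(\Gammab{2})$ by the auxiliary lemma; \Cref{rule:+-Vl} then gives $\Gammab{1}, c \hasT \type{U} + \Gammab{2} = \Gamma, c \hasT \type{U}$, as required. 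The \Cref{rule:split-r1} case is entirely symmetric, using \Cref{rule:+-Vr} together with $c \notin \dom(\Gammab{1})$. Apart from this domain-distribution fact, the argument is just the rule-by-rule correspondence above.
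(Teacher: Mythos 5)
Your proposal is correct and follows essentially the same route as the paper: induction on the derivation of $\Gamma = \Gammab{1} \cSplit \Gammab{2}$, with the key observation that well-formedness of context composition `,' forces $c \not\in \dom(\Gamma)$ and hence discharges the freshness side conditions of the addition rules. The paper leaves the domain-inclusion fact ($\dom(\Gammab{i}) \subseteq \dom(\Gamma)$) implicit where you state it as an explicit auxiliary lemma, but this is a difference of detail, not of approach.
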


\begin{proof}
    Holds by induction on the derivation of $\Gamma = \Gammab{1} \cSplit \Gammab{2}$.
    Notably, in the cases where $\Gamma = \Gammap,c\hasT\type{U}$, we observe that context composition `$,$' implies that $c \not\in \dom(\Gammap)$.
    Thus, we may directly apply the context addition rules to obtain the original context. \qed
\end{proof}

\subsection{Substitution}

\begin{lemma}\label{lem:split-subs}
    If $\Gamma = \Gammab{1} \cSplit \Gammab{2}$, then $\Gamma\sub{\role q}{\role \alpha} = \Gammab{1}\sub{\role q}{\role \alpha}  \cSplit \Gammab{2}\sub{\role q}{\role \alpha} $.
\end{lemma}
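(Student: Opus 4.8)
The plan is to proceed by induction on the derivation of the context split $\Gamma = \Gammab{1} \cSplit \Gammab{2}$, using the rules in \Cref{fig:context-operations}. First I would record two routine facts about role substitution that follow directly from its definition on contexts and types: it distributes over context concatenation, i.e.\ $(\Gamma, e)\sub{\role q}{\role \alpha} = \Gamma\sub{\role q}{\role \alpha}, e\sub{\role q}{\role \alpha}$ for a trailing entry $e$, and it distributes over parallel type composition, i.e.\ $(\type{U_1 \| U_2})\sub{\role q}{\role \alpha} = \type{U_1}\sub{\role q}{\role \alpha} \| \type{U_2}\sub{\role q}{\role \alpha}$. These two facts are what let me push a substitution through the last entry of a split and re-apply the corresponding split rule.

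The base case \Cref{rule:split-0} is immediate, since $\emptyset\sub{\role q}{\role \alpha} = \emptyset = \emptyset \cSplit \emptyset$. For each inductive case the recipe is the same: apply $\sub{\role q}{\role \alpha}$ to the conclusion, use the distribution facts to expose the substituted trailing entry, invoke the induction hypothesis on the premise $\Gamma = \Gammab{1} \cSplit \Gammab{2}$, and then re-apply the very same split rule to the substituted entry. In \Cref{rule:split-l1} and \Cref{rule:split-r1} the entry $c \hasT \type{U}$ becomes $c \hasT \type{U}\sub{\role q}{\role \alpha}$ and stays on the same side; in \Cref{rule:split-par} the entry $c \hasT \type{U_1 \| U_2}$ becomes $c \hasT \type{U_1}\sub{\role q}{\role \alpha} \| \type{U_2}\sub{\role q}{\role \alpha}$ by the second distribution fact, and \Cref{rule:split-par} re-splits it across the two substituted subcontexts. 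Each of these cases then closes by matching the definition of substitution on the concatenated contexts on the right-hand side.

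The only case needing genuine care is \Cref{rule:split-role}, where the trailing entry is a role-variable singleton $\role{\beta} \hasT \role{\beta}$ shared by both sides. When $\role{\beta} \neq \role{\alpha}$ the substitution leaves this entry untouched and the argument is exactly as above. I expect the coincidence $\role{\beta} = \role{\alpha}$ to be the main obstacle, since a naive substitution would produce $\role q \hasT \role q$, which is not in the grammar of $\Gamma$. I would resolve this by appeal to the intended usage of the lemma: the variable $\role{\alpha}$ being substituted is the one bound by a universal receive, so by the freshness convention on bound role variables it does not already occur as a singleton binding in the context being split, ruling out this subcase. Equivalently, one can fix the definition of substitution so that it drops the singleton entry for the substituted variable uniformly on $\Gamma$, $\Gammab{1}$, and $\Gammab{2}$ at once, which again keeps \Cref{rule:split-role} applicable. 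Once this convention is pinned down, the role case closes like the others and the induction is complete.
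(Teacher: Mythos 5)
Your induction and your handling of \Cref{rule:split-0}, \Cref{rule:split-l1}, \Cref{rule:split-r1}, and \Cref{rule:split-par} coincide with the paper's proof, which spells out the left-channel case and declares the remaining channel cases similar. The gap is in \Cref{rule:split-role}. You dismiss the subcase $\role{\beta} = \role{\alpha}$ by a freshness convention, claiming the substituted variable ``does not already occur as a singleton binding in the context being split.'' That assumption fails exactly where this lemma is needed. The substitution $\sub{\q}{\role{\alpha}}$ arises in subject reduction (case \Cref{rule:r-bang2}) when \Cref{lem:role-subs-proc} is applied to the typing derivation of a replicated branch's continuation; by \Cref{rule:t-bang}, the context of that derivation contains precisely the singleton $\role{\alpha} \hasT \role{\alpha}$ introduced by $\insnew{\role{\alpha}}$, and every use of \Cref{rule:t-par} inside that derivation splits such a context, with \Cref{rule:split-role} copying $\role{\alpha} \hasT \role{\alpha}$ into both halves. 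So $\role{\beta} = \role{\alpha}$ is the typical situation, not an excluded one, and under your freshness convention the lemma could not be invoked in the \Cref{rule:t-par} case of \Cref{lem:role-subs-proc}, which is its whole purpose.

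The paper closes this case differently and more simply: role substitution on contexts acts only on the types of channel entries and leaves singleton entries untouched---it never produces $\role{q} \hasT \role{q}$---so the \Cref{rule:split-role} case follows directly from the induction hypothesis. That this is the intended definition is confirmed elsewhere in the development: \Cref{lem:end-subs} states $\Gamma\sub{\q}{\role{\alpha}} = \Gamma$ for end-typed $\Gamma$, and end-typed contexts may contain singletons; and the role-variable case of \Cref{lem:role-subs-val} proves the substituted judgement in the form $\role{\alpha} \hasT \role{\alpha} \vdash \role{q} \hasT \role{q}$, i.e., the singleton survives substitution and the goal is closed by \Cref{rule:t-weak} and \Cref{rule:t-roleVal}. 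Your second fallback---redefining substitution to delete the singleton entry---is not equivalent to the freshness appeal; it does validate this particular lemma, but it contradicts \Cref{lem:end-subs} as stated, since the end-typed context $\role{\alpha} \hasT \role{\alpha}$ would no longer be a fixed point of $\sub{\q}{\role{\alpha}}$. So keep your induction and your distribution facts, but amend the first fact so that substitution is the identity on singleton entries, and close \Cref{rule:split-role} by that convention rather than by freshness or deletion.
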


\begin{proof}
    By induction on the derivation of $\Gamma = \Gammab{1} \cSplit \Gammab{2}$.

    Case $\emptyset$: $\emptyset = \emptyset \cSplit \emptyset \implies \emptyset\sub{\role q}{\role \alpha} = \emptyset\sub{\role q}{\role \alpha} \cSplit \emptyset\sub{\role q}{\role \alpha}$.

    Case \emph{cL}: $\Gamma = \Gammap,c:\type{U}$ and $\inferrule
    {\Gammap = \Gammab{1} \cSplit \Gammab{2}}
    {\Gammap,c \hasT \type{U} = \Gammab{1},c \hasT \type{U} \cSplit \Gammab{2}}$.
    We observe that:

    \begin{flalign}
        &(\Gammap, c \hasT \type{U})\sub{\role q}{\role \alpha} = \Gammap\sub{\role q}{\role \alpha}, c \hasT \type{U}\sub{\role q}{\role \alpha} = \Gammap\sub{\role q}{\role \alpha} \cSplit c \hasT \type{U}\sub{\role q}{\role \alpha} && \text{(by \frv\ and def of $\cSplit$)}\label{lem2-ln4} \\
        &\Gammap\sub{\role q}{\role \alpha} = \Gammab{1}\sub{\role q}{\role \alpha} \cSplit \Gammab{2}\sub{\role q}{\role \alpha} && \text{(by the ind. hyp.)} \label{lem2-ln5} \\
        &c\not\in \dom(\Gammap),\ \therefore c\not\in \dom(\Gammab{1}) && \text{(from hyp.)} \label{lem2-ln6}\\
        &(\Gammap, c \hasT \type{U})\sub{\role q}{\role \alpha} = \Gammab{1}\sub{\role q}{\role \alpha} \cSplit \Gammab{2}\sub{\role q}{\role \alpha} \cSplit c \hasT \type{U}\sub{\role q}{\role \alpha} && \text{(by (\ref{lem2-ln4}),(\ref{lem2-ln5}))} \\
        & = \Gammab{1}\sub{\role q}{\role \alpha} \cSplit c \hasT \type{U}\sub{\role q}{\role \alpha} \cSplit \Gammab{2}\sub{\role q}{\role \alpha} && \text{(since $\cSplit$ commutative)} \\
        & = \Gammab{1}\sub{\role q}{\role \alpha}, c \hasT \type{U}\sub{\role q}{\role \alpha} \cSplit \Gammab{2}\sub{\role q}{\role \alpha} && \text{(by (\ref{lem2-ln6}))}
    \end{flalign}

    Cases \emph{cR} and $\|$ are similar to case \emph{cL}.

    Case $\role \alpha$: Follows directly from the ind. hyp. since role substitution does not affect singletons. \qed
\end{proof}

\begin{lemma}\label{lem:end-subs}
    If \textnormal{$\pEnd(\Gamma)$}, then $\Gamma\sub{\q}{\role \alpha} = \Gamma$.
\end{lemma}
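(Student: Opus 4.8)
The plan is to prove this by induction on the derivation of $\pEnd(\Gamma)$ from \Cref{def:end}, keeping in mind throughout that the role substitution $\sub{\q}{\role\alpha}$ acts pointwise on a context: it rewrites only free occurrences of $\role\alpha$ inside session types and, as already observed in the proof of \Cref{lem:split-subs}, leaves role singletons $\role\beta \hasT \role\beta$ untouched. The base case $\pEnd(\emptyset)$ is immediate since $\emptyset\sub{\q}{\role\alpha} = \emptyset$. For the rule deriving $\pEnd(c \hasT \tEnd, \Gamma_0)$ from $\pEnd(\Gamma_0)$, the type $\tEnd$ contains no role variables, so $(c \hasT \tEnd)\sub{\q}{\role\alpha} = c \hasT \tEnd$, and the induction hypothesis gives $\Gamma_0\sub{\q}{\role\alpha} = \Gamma_0$; combining these yields the claim. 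For the rule deriving $\pEnd(\role\beta \hasT \role\beta, \Gamma_0)$ from $\pEnd(\Gamma_0)$, the singleton is substitution-invariant and the induction hypothesis discharges the tail.

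The main obstacle is the congruence-closure rule, which derives $\pEnd(\Gammap)$ from $\type{\Gamma \equiv \Gammap}$ and $\pEnd(\Gamma)$: here the induction hypothesis concerns $\Gamma$, yet the goal is the equality for $\Gammap$, and one cannot simply transport $=$ across $\type{\equiv}$. I would discharge this by first establishing a structural characterisation of the types that can occur: since type congruence (\Cref{def:type-cong}) only rearranges parallel compositions and absorbs $\tEnd$, a short induction on the congruence derivation shows that any runtime type $\type{U}$ with $\type{U \equiv \tEnd}$ must be a parallel composition of copies of $\tEnd$, and hence contains \emph{no} role variables at all. Lifting this to contexts, $\type{\Gamma \equiv \Gammap}$ forces every channel entry of $\Gammap$ to map to some $\type{U \equiv \tEnd}$ and leaves the role singletons in place, so no free $\role\alpha$ occurs in any position that $\sub{\q}{\role\alpha}$ could rewrite; thus $\Gammap\sub{\q}{\role\alpha} = \Gammap$ directly, sidestepping the transport across $\type{\equiv}$.

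In fact this characterisation suggests a cleaner, essentially non-inductive route that I would prefer to present. From $\pEnd(\Gamma)$ one concludes that every entry of $\Gamma$ is either a channel typed by some $\type{U \equiv \tEnd}$ (hence role-variable-free by the characterisation) or a role singleton $\role\beta \hasT \role\beta$ (hence substitution-invariant). In either case the entry is fixed by $\sub{\q}{\role\alpha}$, whence $\Gamma\sub{\q}{\role\alpha} = \Gamma$ follows immediately. The only auxiliary fact required is the observation that $\type{U \equiv \tEnd}$ implies $\type{U}$ is built solely from $\tEnd$ and $\|$, which I expect to be the one genuinely load-bearing step; everything else is bookkeeping over the shape of an end-typed context.
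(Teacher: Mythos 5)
Your proof is correct and follows essentially the same route as the paper's, which is a one-line rule induction on $\pEnd(\Gamma)$ observing that an end-typed context contains no elements the substitution could rewrite. Your extra care in the congruence-closure case of \Cref{def:end}---characterising every $\type{U}$ with $\type{U \equiv \tEnd}$ as a parallel composition of copies of $\tEnd$, hence role-variable-free---makes explicit a step the paper's proof glosses over, but it is a refinement of the same argument rather than a genuinely different approach.
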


\begin{proof}
    By rule induction on $\pEnd(\Gamma)$, observing that for any context which is end-typed, there are no elements that can be substituted. \qed
\end{proof}

\begin{lemma}\label{lem:role-subs-val}
    Assume $\Gamma \vdash V \hasT \type{T}$. 
    Then, $\Gamma \sub{\q}{\role \alpha} \vdash V\sub{\q}{\role \alpha} \hasT \type{T}\sub{\q}{\role \alpha}$.
\end{lemma}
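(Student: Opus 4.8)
The plan is to proceed by rule induction on the derivation of $\Gamma \vdash V \hasT \type{T}$, which has exactly four cases: \Cref{rule:t-weak}, \Cref{rule:t-roleVal}, \Cref{rule:t-sub}, and \Cref{rule:t-roleVar}. The guiding observation is that the role substitution $\sub{\q}{\role \alpha}$ acts as the identity on channels (endpoints $\ses{s}[\p]$ are indexed by \emph{concrete} roles only) and on all concrete roles, so the only syntactic occurrences it can touch are occurrences of the role variable $\role \alpha$ itself; this makes most cases bookkeeping once the two nontrivial cases are handled.

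First I would dispatch the base cases. For \Cref{rule:t-roleVal}, $\emptyset \vdash \p \hasT \p$, a concrete role is unaffected and $\emptyset\sub{\q}{\role \alpha} = \emptyset$, so the goal is again an instance of \Cref{rule:t-roleVal}. The conceptually interesting base case is \Cref{rule:t-roleVar}, $\role \beta \hasT \role \beta \vdash \role \beta \hasT \role \beta$: if $\role \beta \neq \role \alpha$ the substitution is vacuous and I reapply \Cref{rule:t-roleVar}; if $\role \beta = \role \alpha$, then the value, its type, and the single domain entry all become the concrete role $\q$, the context $\role \alpha \hasT \role \alpha$ collapses to $\emptyset$, and the goal becomes $\emptyset \vdash \q \hasT \q$, which holds by \Cref{rule:t-roleVal}. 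This case is precisely where a role-\emph{variable} judgement is turned into a concrete-role judgement, and is the crux of the lemma.

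Next the two inductive/premise-bearing cases. For \Cref{rule:t-sub}, $c \hasT \type{S} \vdash c \hasT \type{S^\prime}$ with $\type{S \subT S^\prime}$: since $c\sub{\q}{\role \alpha} = c$, the goal reduces to $c \hasT \type{S}\sub{\q}{\role \alpha} \vdash c \hasT \type{S^\prime}\sub{\q}{\role \alpha}$, which follows from \Cref{rule:t-sub} once I know $\type{S}\sub{\q}{\role \alpha} \subT \type{S^\prime}\sub{\q}{\role \alpha}$. I would discharge this with a short auxiliary lemma stating that $\subT$ is closed under role substitution, proved coinductively: the candidate relation $\{(\type{S}\sub{\q}{\role \alpha},\type{S^\prime}\sub{\q}{\role \alpha}) \mid \type{S \subT S^\prime}\}$ is backward-closed under the subtyping rules, because none of those rules constrain the identity of a role (branching and selection demand syntactically equal roles on both sides, which a uniform substitution preserves, and the singleton rule is pure reflexivity). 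For \Cref{rule:t-weak}, with $\Gammab{1} + \Gammab{2} = \Gamma$, $\Gammab{1} \vdash V \hasT \type{T}$, and $\pEnd(\Gammab{2})$, I apply the induction hypothesis to the premise and reassemble via \Cref{rule:t-weak}; this needs that substitution distributes over context addition (the $+$-analogue of \Cref{lem:split-subs}, established by the same induction on the addition derivation) together with $\pEnd(\Gammab{2}\sub{\q}{\role \alpha})$, which is \Cref{lem:end-subs}.

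The main obstacle is the bookkeeping around how $\sub{\q}{\role \alpha}$ treats the role binding $\role \alpha \hasT \role \alpha$ inside contexts, and ensuring this is consistent across the supporting lemmas: concretising $\role \alpha$ removes it from the domain, so I must verify that context addition and end-typedness behave compatibly with this collapse, in particular that \Cref{lem:end-subs} is invoked exactly as its proof prescribes. Once this substitution-on-contexts behaviour and the subtyping-closure sub-lemma are pinned down, the four cases all close routinely.
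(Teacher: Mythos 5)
Your proposal is correct and rests on the same crux as the paper's proof: closure of $\subT$ under role substitution, proved coinductively --- this is exactly the paper's \Cref{lem:subt-preserves-subs}, which you re-derive with the same candidate-relation argument. The organisational difference is that you do rule induction over the four value-typing rules, whereas the paper case-splits on how $\role\alpha$ occurs in $\type{T}$ (not at all, $\type{T}=\role\alpha$, or free in a session type) and invokes inversion of \Cref{rule:t-roleVar} and \Cref{rule:t-sub}; your route is the more systematic one, since the paper silently skips the \Cref{rule:t-weak} case that you discharge explicitly via distribution of substitution over context addition. The one genuine divergence is your convention that substitution collapses the singleton entry $\role\alpha \hasT \role\alpha$ to $\emptyset$. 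The paper's (implicit) convention is the opposite --- its proof of \Cref{lem:split-subs} states that role substitution does not affect singletons --- so in the $\type{T}=\role\alpha$ case the paper's goal is $\role\alpha \hasT \role\alpha \vdash \q \hasT \q$, discharged by \Cref{rule:t-weak} together with \Cref{rule:t-roleVal}, rather than your $\emptyset \vdash \q \hasT \q$. Your collapsing convention is internally coherent for this lemma, but note that it falsifies \Cref{lem:end-subs} as stated: that lemma asserts the \emph{equality} $\Gamma\sub{\q}{\role\alpha} = \Gamma$ for end-typed $\Gamma$, and end-typed contexts may contain role singletons (\Cref{def:end}), which your convention would delete. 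You correctly flag this interaction but do not resolve it; the repair is either to adopt the paper's retain-singletons convention (and add the weakening step in the \Cref{rule:t-roleVar} case), or to weaken \Cref{lem:end-subs} to preservation of $\pEnd$, which is all your \Cref{rule:t-weak} case actually needs.
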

\begin{proof}
    The lemma holds trivially if $\role \alpha$ does not occur free in $\type{T}$.
    If $\role \alpha$ is free in $\type{T}$, then there are two cases to consider.
    First, when $\type T = \role \alpha$. \\
    From the assumption and rule \cref{rule:t-roleVar} we obtain $\role \alpha \hasT \role \alpha \vdash \role \alpha \hasT \role \alpha$.
    We must show that $\role \alpha \hasT \role \alpha \vdash \role q \hasT \role q$, which holds by rules \cref{rule:t-weak} and \cref{rule:t-roleVal}. \\
    The second case is when $\type T = \St$ for some $\St$ in which $\role \alpha$ occurs free.
    From assumption and inversion of rule \cref{rule:t-sub}, we know that $\Gamma = c \hasT \type{S^\prime}$, $V = c$, and $\type{S^\prime \subT S}$.
    Hence, we must prove that $(c \hasT \type{S^\prime})\sub{\q}{\role \alpha} \vdash c \hasT \type{S} \sub{\q}{\role \alpha}$.
    By the definition of \frv, this is the same as proving $c \hasT (\type{S^\prime}\sub{\q}{\role \alpha}) \vdash c \hasT \type{S} \sub{\q}{\role \alpha}$.
    In turn, by rule \cref{rule:t-sub}, we must prove that $\type{S^\prime \subT S}$ implies $\type{S^\prime}\sub{\q}{\role \alpha} \subT \type{S} \sub{\q}{\role \alpha}$, which holds by \cref{lem:subt-preserves-subs}. \qed
\end{proof}
\begin{lemma}\label{lem:role-subs-proc}
    Assume $\Theta \cons \Gamma \vdash P$. 
    Then, $\Theta \cons \Gamma \sub{\q}{\role \alpha} \vdash P\sub{\q}{\role \alpha}$.
\end{lemma}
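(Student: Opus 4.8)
The plan is to proceed by induction on the derivation of $\Theta \cons \Gamma \vdash P$, analysing the last typing rule applied. Throughout I would lean on the substitution lemmas already established: \Cref{lem:split-subs} (context split commutes with $\sub{\q}{\role \alpha}$), \Cref{lem:end-subs} (end-typed contexts are invariant under $\sub{\q}{\role \alpha}$), and \Cref{lem:role-subs-val} (value typing is preserved), together with the companion fact that context addition commutes with substitution, $(\Gammab{1} + \Gammab{2})\sub{\q}{\role \alpha} = \Gammab{1}\sub{\q}{\role \alpha} + \Gammab{2}\sub{\q}{\role \alpha}$, which is proved exactly as \Cref{lem:split-subs}. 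Two structural observations make most cases mechanical: role substitution leaves channels $c$ and session names $\ses{s}$ untouched (endpoints are built from concrete roles), and it commutes with every process constructor, so e.g. $\0\sub{\q}{\role \alpha} = \0$ and $(P_1 \| P_2)\sub{\q}{\role \alpha} = P_1\sub{\q}{\role \alpha} \| P_2\sub{\q}{\role \alpha}$.

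The straightforward cases go as follows. For \Cref{rule:t-0}, \Cref{lem:end-subs} gives $\Gamma\sub{\q}{\role \alpha} = \Gamma$, so $\pEnd$ is preserved; for \Cref{rule:t-par} I would split the substitution with \Cref{lem:split-subs} and apply the induction hypothesis to each conjunct; \Cref{rule:t-choice} follows by applying the hypothesis to each branch. The communication rules \Cref{rule:t-send}, \Cref{rule:t-rcv} and \Cref{rule:t-bang} are the interesting ones, since the object $\role{\rho}$ of the action may be exactly the variable $\role{\alpha}$ being substituted. I would apply \Cref{lem:role-subs-val} to the channel, role and payload premises---this is precisely the lemma that turns a derivation of $\role{\alpha} \hasT \role{\alpha}$ into one of $\q \hasT \q$ when $\role{\rho} = \role{\alpha}$---use the induction hypothesis on the continuations, and reassemble with the split and addition lemmas. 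In the two receive rules the role and the payload binders $\poly{b_i}$ occur in binding position, so by the usual $\alpha$-convention I would assume the bound role variables are distinct from $\role{\alpha}$; then the binder extension and the role insertion $\insnew{\role{\rho}}$ commute with $\sub{\q}{\role \alpha}$.

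For \Cref{rule:t-new} the restriction $\new{s}{\Psi}$ binds only a session name, so it is unaffected by role substitution. The crucial point is that the premise of the rule is a \emph{safety} premise $\varphi(\extract_\ses{s}(\Psi))$, whence by condition \Cref{rule:safe-role} we have $\frv(\extract_\ses{s}(\Psi)) = \emptyset$; hence $\extract_\ses{s}(\Psi)\sub{\q}{\role \alpha} = \extract_\ses{s}(\Psi)$, the safety side-condition is preserved verbatim, and the induction hypothesis together with the addition lemma closes the case. For \Cref{rule:t-def} and \Cref{rule:t-call} the guarding context is end-typed and thus fixed by \Cref{lem:end-subs}, and since substitution leaves the call target $\recVar{X}$ and its channel arguments unchanged, both cases reduce to an application of the induction hypothesis on the sub-derivations.

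The main obstacle I anticipate is bookkeeping around binders and the recursion environment rather than any deep argument. In the receive cases I must ensure $\role{\alpha}$ is genuinely free---\ie\ not one of the role variables bound by the very receive being typed---which the $\alpha$-convention secures. The more delicate point is \Cref{rule:t-def}/\Cref{rule:t-call}: applying the hypothesis with the enlarged environment $\Theta,\recVar{X}\hasT\type{\poly{S}}$ held fixed is only sound if the recursion signatures carry no free occurrence of $\role{\alpha}$, i.e. $\type{\poly{S}}\sub{\q}{\role \alpha} = \type{\poly{S}}$. The clean resolution is to generalise the statement so that $\sub{\q}{\role \alpha}$ is applied to $\Theta$ as well (strengthening the induction hypothesis accordingly), or equivalently to record as an invariant that recursion parameter types are role-closed. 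Once this freshness condition is pinned down, every remaining case is a routine recombination of the substitution lemmas above.
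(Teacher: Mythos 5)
Your proposal is correct and, for almost every case, is exactly the paper's proof: induction on the typing derivation, with \Cref{rule:t-0}, \Cref{rule:t-par}, \Cref{rule:t-choice} discharged via \Cref{lem:end-subs} and \Cref{lem:split-subs}, the communication rules \Cref{rule:t-send}, \Cref{rule:t-rcv}, \Cref{rule:t-bang} via \Cref{lem:role-subs-val}, and \Cref{rule:t-new} closed precisely as the paper does, by invoking condition \Cref{rule:safe-role} to conclude $\frv(\extract_\ses{s}(\Psi)) = \emptyset$ so that the restriction's protocol is fixed by the substitution. The one genuine divergence is \Cref{rule:t-def}/\Cref{rule:t-call}. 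The paper resolves this case \emph{by definition}: it stipulates that role substitution acts on the parameter annotations of a declaration, $(\defin{X}{\poly{x \hasT \type{S}}}{P\;} Q)\sub{\q}{\role \alpha} = \defin{X}{\poly{x \hasT \type{S}\sub{\q}{\role \alpha}}}{P\sub{\q}{\role \alpha}\;} Q\sub{\q}{\role \alpha}$, and then declares the case immediate from the induction hypothesis. Your worry is well founded: even with that convention, the induction hypothesis keeps the enlarged environment $\Theta, \recVar{X} \hasT \type{\poly{S}}$ untouched, whereas rebuilding \Cref{rule:t-def} for the substituted declaration (and \Cref{rule:t-call} for its call sites) requires $\recVar{X} \hasT \type{\poly{S}\sub{\q}{\role \alpha}}$, so the paper's one-line treatment silently carries the mismatch you identify. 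Your proposed strengthening---proving $\Theta\sub{\q}{\role \alpha} \cons \Gamma\sub{\q}{\role \alpha} \vdash P\sub{\q}{\role \alpha}$ so the substitution threads through the recursion environment---is the cleaner repair and subsumes the paper's annotation convention; your alternative invariant (role-closed recursion signatures) is weaker, since a declaration nested under a replicated receive binding $\role \alpha$ can legitimately mention $\role \alpha$ in its annotations. In short: same approach, same supporting lemmas, with your treatment of the recursion cases being more careful than the paper's.
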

\begin{proof}
    By induction on the derivation of $\Theta \cons \Gamma \vdash P$. 

    Case \cref{rule:t-0}: Follows from \cref{lem:end-subs}.

    Case \cref{rule:t-par}: Follows from the ind. hyp. and \cref{lem:split-subs}.

    Case \cref{rule:t-choice} follows directly from the ind. hyp..

    Case \cref{rule:t-new}: R.T.P. $\Theta \cons \Gammap \vdash \new{s}{\type{\Psi}} P^\prime$ implies $\Theta \cons \Gammap\sub{\q}{\role \alpha} \vdash \new{s}{\type{\Psi}} P^\prime\sub{\q}{\role \alpha}$.
    We know:
    \begin{flalign}
        &\inferrule{\varphi(\extract_\ses{s}(\Psi)) \\\\ \textnormal{ $\varphi$ is a \emph{safety} property} \\\\ \ses{s} \not\in \Gammap \\ \Theta \cons \Gammap + \extract_\ses{s}(\Psi) \vdash P^\prime}{\Theta \cons \Gammap \vdash \new{s}{\type{\Psi}} P^\prime} 
        && \text{(from assumption)} \label{lem5-ln10}\\
        & \Theta \cons (\Gammap + \extract_\ses{s}(\Psi))\sub{\q}{\role \alpha} \vdash P^\prime\sub{\q}{\role \alpha} && \text{(ind. hyp.)} \label{lem5-ln11}
    \end{flalign}
    But since $\varphi$ is a \emph{safety} property and $\varphi(\extract_\ses{s}(\Psi))$, then by condition \cref{rule:safe-role}, we know that there are \emph{no free role variables in} $\extract_\ses{s}(\Psi)$. Hence we infer the below:
    \begin{flalign}
        & \Theta \cons \Gammap\sub{\q}{\role \alpha} + \extract_\ses{s}(\Psi) \vdash P^\prime\sub{\q}{\role \alpha} && \text{(by (\ref{lem5-ln10}), (\ref{lem5-ln11}) and \cref{rule:safe-role})} \label{lem5-ln12} \\
        & \Theta \cons \Gammap\sub{\q}{\role \alpha} \vdash \new{s}{\type{\Psi}} P^\prime\sub{\q}{\role \alpha} && \text{(by (\ref{lem5-ln10}), (\ref{lem5-ln12}) and \cref{rule:t-new})}
    \end{flalign}

    Case \cref{rule:t-def}: This case follow directly from the ind. hyp., but we make a note that this is only the case because of how we define role substitution on process definition, \ie\ 
    \[
        (\defin{X}{\poly{x \hasT \type{S}}}{P\;} Q) \sub{\q}{\role \alpha} := \defin{X}{\poly{x \hasT \type{S}\sub{\q}{\role \alpha}}}{P\sub{\q}{\role \alpha}\;} Q\sub{\q}{\role \alpha}
    \]

    Cases for \cref{rule:t-bang,rule:t-rcv,rule:t-send,rule:t-call} are similar and all follow from some application of \cref{lem:role-subs-val}. We expand on case \cref{rule:t-send} below:

    \noindent
    Let $\Gamma = \Gammap \cSplit \Gammab{\o} \cSplit \Gammab{r} (\cSplit \Gammab{i})_{i \in 1..n}$.\\
    R.T.P. $\Theta \cons \Gamma \cSplit \type{\Gamma_\o} \cSplit \type{\Gamma_r} (\!\cSplit \type{\Gamma_i})_{i\in 1..n}
    \vdash 
    c[\role{\rho}] \o \m \langle (V_i)_{i \in 1..n} \rangle \,.\, P^\prime$ implies\\ 
    $\Theta \cons (\Gammap \cSplit \type{\Gamma_\o} \cSplit \type{\Gamma_r} (\!\cSplit \type{\Gamma_i})_{i\in 1..n})\sub{\q}{\role \alpha}
    \vdash 
    (c[\role{\rho}] \o \m \langle (V_i)_{i \in 1..n} \rangle \,.\, P^\prime)\sub{\q}{\role \alpha}$.

    We know:
    \begin{flalign}
        &\label{lem5-ln14}
        \inferrule
        {
        \type{\Gamma_\o} \vdash c \hasT \type{\role{\rho} {\o} \m (\poly{T}) . S^\prime} \\
        \type{\Gamma_r} \vdash \role{\rho} \hasT \role \rho \\\\
        (\type{\Gamma_i} \vdash V_i \hasT \type{T_i})_{i \in 1..n} \\
        \Theta \cons \Gammap + c \hasT \type{S^\prime}  \vdash P^\prime
        }
        {\Theta \cons \Gammap \cSplit \type{\Gamma_\o} \cSplit \type{\Gamma_r} (\!\cSplit \type{\Gamma_i})_{i\in 1..n}
        \vdash 
        c[\role{\rho}] \o \m \langle (V_i)_{i \in 1..n} \rangle \,.\, P^\prime}   && \text{(from assumption)}\\
        &\label{lem5-ln15} 
        \type{\Gamma_\o}\sub{\q}{\role \alpha} \vdash c \hasT \type{\role{\rho} {\o} \m (\poly{T}) . S^\prime}\sub{\q}{\role \alpha} && \text{(by (\ref{lem5-ln14}) and \cref{lem:role-subs-val})} \\
        & \type{\Gamma_r}\sub{\q}{\role \alpha} \vdash (\role{\rho} \hasT \role \rho)\sub{\q}{\role \alpha} && \text{(by (\ref{lem5-ln14}) and \cref{lem:role-subs-val})} \\
        & \forall i \in 1..n : \type{\Gamma_i}\sub{\q}{\role \alpha} \vdash V_i\sub{\q}{\role \alpha} \hasT \type{T_i} \sub{\q}{\role \alpha} && \text{(by (\ref{lem5-ln14}) and \cref{lem:role-subs-val})} \\
        &\label{lem5-ln18} 
         \Theta \cons (\Gammap + c \hasT \type{S^\prime})\sub{\q}{\role \alpha} \vdash P^\prime\sub{\q}{\role \alpha} && \text{(by (\ref{lem5-ln14}) and ind. hyp.)}
    \end{flalign}
    $\therefore$ from lines (\ref{lem5-ln15})--(\ref{lem5-ln18}) and rule \cref{rule:t-send}, we obtain the thesis. \qed

\end{proof}

% \begin{lemma}
%     Assume $\Gamma \cSplit x \hasT \St \vdash V \hasT \type{S^\prime}$ and $\Gammap \vdash \ses{s}[\p] \hasT \type{S^{\prime\prime}}$ where $\type{S^{\prime\prime} \subT S}$. 
%     %
%     Then, $\Gamma + \Gammap \vdash V\sub{\ses{s}[\p]}{x} \hasT \type{S^\prime}$.
% \end{lemma}
\begin{lemma}\label{lem:chan-subs}
    Assume $\Theta \cons \Gamma \cSplit x \hasT \St \vdash P$ and $\Gammap \vdash \ses{s}[\p] \hasT \type{S^{\prime\prime}}$ where $\type{S^{\prime\prime} \subT S}$. 
    Then, $\Theta \cons \Gamma + \Gammap \vdash P\sub{\ses{s}[\p]}{x}$.
\end{lemma}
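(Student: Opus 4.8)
The plan is to proceed by induction on the derivation of $\Theta \cons \Gamma \cSplit x \hasT \St \vdash P$, splitting into cases on the last rule applied and, within each, on how the linear variable $x$ is used: as the subject of a communication, as a transmitted value or call argument, or not at all. Since process typing is syntax-directed, this is effectively an induction on the structure of $P$. Two background ingredients are needed throughout: the (standard) transitivity of the subtyping relation of \Cref{def:subtyping}, used to compose $\St' \subT \St$ with whatever subtyping the original derivation used; and the context-manipulation lemmas relating $\cSplit$ and $+$ (in the spirit of \Cref{lem:split-subs}), used to redistribute contexts after appealing to the induction hypothesis.

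First I would dispatch the cases where $x$ is not consumed at the current node. For \Cref{rule:t-0}, end-typedness of $\Gamma \cSplit x \hasT \St$ forces $\St \equiv \tEnd$; as only end-congruent session types are subtypes of $\tEnd$, inverting $\Gammap \vdash \ses{s}[\p] \hasT \St'$ through \Cref{rule:t-weak} and \Cref{rule:t-sub} yields $\pEnd(\Gammap)$, hence $\pEnd(\Gamma + \Gammap)$ and the thesis. For \Cref{rule:t-par}, \Cref{rule:t-choice}, \Cref{rule:t-new} and \Cref{rule:t-def}, the entry $x \hasT \St$ (a linear session type) sits entirely inside one component of the relevant split; I apply the induction hypothesis to the sub-derivation that owns $x$, leave the remaining components unchanged, and recombine using commutativity and associativity of $\cSplit$ together with the distribution of $+$ over $\cSplit$. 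The sub-case $x \notin \fv(P)$ collapses to the same end-typedness-plus-weakening argument as \Cref{rule:t-0}.

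The substantive cases are those where $x$ is the subject channel (\Cref{rule:t-send}, \Cref{rule:t-rcv}, \Cref{rule:t-bang}) or occurs as a payload or call argument (\Cref{rule:t-send}, \Cref{rule:t-call}). When $x$ is the subject, the original derivation splits $x \hasT \St$ off (into $\type{\Gamma_\o}$, $\type{\Gamma_\&}$, or $\type{\Gamma_!}$) and upcasts it via \Cref{rule:t-sub} to the selection/branch type demanded by the prefix, so $\St$ is a subtype of that prefix type. After substitution the endpoint carries $\St' \subT \St$, and transitivity gives that $\St'$ is still a subtype of the prefix type, so \Cref{rule:t-sub} types the prefix on $\ses{s}[\p]$. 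For the continuation I invoke the induction hypothesis with the endpoint taken at its \emph{continuation} type $\type{S_i'}$, using the reflexive instance $\type{S_i'} \subT \type{S_i'}$ and $\Gammap = \ses{s}[\p] \hasT \type{S_i'}$; payload and argument occurrences of $x$ are retyped identically, composing $\St' \subT \St \subT \type{T_i}$ through \Cref{rule:t-sub}. The replicated rule \Cref{rule:t-bang} additionally demands that each branch context be end-typed; this is preserved because $\Gammap$ contributes only the (linear) endpoint together with end-typed and role material, which the insertion operator and weakening absorb.

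The step I expect to be the main obstacle is the bookkeeping required to reassemble the final context $\Gamma + \Gammap$ from the pieces the induction hypothesis returns. The environment $\Gammap$ may carry weakened end-typed entries and shared role-variable bindings introduced by \Cref{rule:t-weak}, while the subject channel is constrained simultaneously by the prefix's sub-context (at type $\St'$) and by the continuation's use (at the continuation type). Reconciling these forces me to show that context addition distributes correctly over the split produced by the communication rule, and that the surplus material in $\Gammap$ can be migrated by weakening without disturbing linearity. I would isolate these as small auxiliary equalities on $+$ and $\cSplit$, rather than grinding through them inline in every communication case.
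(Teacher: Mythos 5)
Your proposal is correct and follows essentially the same route as the paper: the paper's own proof is a one-line deferral to the standard substitution lemma of Scalas \& Yoshida (their Lemma B.1), "adapted to use the context operations", and that cited proof is precisely the induction on the typing derivation you spell out — transitivity of subtyping to retype the substituted endpoint at prefix, payload, and argument positions, reflexive instances of the lemma for continuations, and redistribution of $+$ over $\cSplit$ to reassemble $\Gamma + \Gammap$. The bookkeeping you flag as the main obstacle is exactly the adaptation the paper's remark alludes to, so there is no divergence in method, only in the level of detail.
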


\begin{proof}
    Similar to \cite[Lemma B.1]{DBLP:journals/pacmpl/ScalasY19tech}; in our case we make use of the context operations instead of directly using context compositionssss.
\end{proof}

\subsection{Subtyping}

\begin{lemma}\label{lem:subt-preserves-subs}
    $\type{S \subT S^\prime}$ implies $\type{S}\sub{\q}{\role \alpha} \subT \type{S^\prime} \sub{\q}{\role \alpha}$.
\end{lemma}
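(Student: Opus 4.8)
The plan is to prove this by coinduction on the definition of subtyping (\Cref{def:subtyping}). Since $\subT$ is defined coinductively, it is the largest relation closed under the displayed rules, so it suffices to exhibit a relation $R$ that is itself closed under those rules (a post-fixed point) and that contains every substituted pair. I would take
$R = \{(\type{S}\sub{\q}{\role{\alpha}},\, \type{S^\prime}\sub{\q}{\role{\alpha}}) \mid \type{S \subT S^\prime}\}$
and then, for an arbitrary element of $R$, case-split on the last rule used to derive the witnessing judgement $\type{S \subT S^\prime}$, showing in each case that the substituted conclusion is an instance of the \emph{same} rule whose premises lie again in $R$. By coinduction this gives $R \subseteq \subT$, which is exactly the thesis.

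For the branching, selection, replication, and parallel rules the argument is routine: role substitution is a homomorphism on these constructors—it leaves the message labels $\m_i$ and the index sets untouched and commutes with the session-type constructors ($\&$, $\o$, $!$, $\|$)—and it sends the shared role $\role{\rho}$ on both sides to the same $\role{\rho}\sub{\q}{\role{\alpha}}$. Hence the substituted conclusion is again an instance of the relevant rule, whose premises are the substituted continuation and payload subtypings, and these are precisely the pairs witnessed by the original premises, so they lie in $R$ (the contravariance of selection payloads is unproblematic, since the witness simply supplies the flipped pair that $R$ needs). The only subtlety is the payloads: a payload that is a session type recurses into $R$, whereas a payload that is a role singleton $\role{\rho}$ can only have been related by the reflexivity rule $\type{T \subT T}$, and reflexivity is preserved by \emph{any} substitution, so $\role{\rho}\sub{\q}{\role{\alpha}} \subT \role{\rho}\sub{\q}{\role{\alpha}}$ holds directly; the reflexivity rule itself is dispatched the same way.

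The hard part will be the two recursion rules, where I must interchange the role substitution $\sub{\q}{\role{\alpha}}$ with the recursion-variable unfolding $\sub{\type{\mu\recVar{t}.S}}{\type{\recVar{t}}}$ that appears in the premise. I would first establish the commutation identity
\[
    \left(\type{S}\sub{\type{\mu\recVar{t}.S}}{\type{\recVar{t}}}\right)\!\sub{\q}{\role{\alpha}}
    \;=\;
    \left(\type{S}\sub{\q}{\role{\alpha}}\right)\!\sub{\type{\mu\recVar{t}.S}\sub{\q}{\role{\alpha}}}{\type{\recVar{t}}},
\]
which holds because the two substitutions act on disjoint syntactic classes—role variables versus recursion variables—so (after renaming $\recVar{t}$ to keep it fresh for $\sub{\q}{\role{\alpha}}$) they permute freely, together with the fact that $\sub{\q}{\role{\alpha}}$ commutes with the $\mu$-binder. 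With this identity in hand, the left-recursion rule applied to $\type{\mu\recVar{t}.S}\sub{\q}{\role{\alpha}}$ has as its premise exactly the substituted unfolding related to $\type{S^\prime}\sub{\q}{\role{\alpha}}$, which is in $R$ by the witness of the original premise; the right-recursion rule is symmetric. This shows $R$ is closed under every rule of \Cref{def:subtyping}, and I expect the bookkeeping of this commutation step—rather than any of the constructor cases—to be the only genuine obstacle.
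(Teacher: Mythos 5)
Your proof is correct and follows essentially the same route as the paper's: coinduction on the derivation of $\type{S \subT S^\prime}$, with the branching/selection cases handled by observing that role substitution acts identically on both roles and both continuations, and that role-singleton payloads must be related by reflexivity, so substituting them uniformly preserves the relation. The paper expands only the branching case and dismisses the rest as ``similar reasoning''; your explicit post-fixed-point formulation and the commutation identity between role substitution and recursive unfolding for the $\type{\mu}$-rules fill in details the paper leaves implicit, but they do not constitute a different approach.
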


\begin{proof}
    By coinduction on the derivation of $\type{S \subT S^\prime}$.
    We expand on the branching case below; 
    other cases follow similar reasoning.

    Consider the case where $\St = \type{\role{\rho} \&_{i \in I}\m_i (\poly{T_i}) . S_i}$ and $\type{S^\prime} = \type{\role{\rho} \&_{i \in I \cup J}\m_i (\poly{T_i^\prime}) . S_i^\prime}$. We know:
    \begin{flalign}
        &\inferrule
        {\type{(\poly{T_i} \,\subT\, \poly{T_i^\prime})_{i \in I}} \\ \type{(S_i^{\prime\prime}\,\subT\, S_i^{\prime\prime\prime})_{i \in I}}}
        {\type{\role{\rho} \&_{i \in I}\m_i (\poly{T_i}) . S_i^{\prime\prime} \,\subT\, \role{\rho} \&_{i \in I \cup J}\m_i (\poly{T_i^\prime}) . S_i^{\prime\prime\prime}}}
        && \text{(by assumption)}
    \end{flalign}
    If $\role \rho = \role \alpha$, then the substitution takes place on both sides, and thus subtyping is preserved top-level.
    By the coinductive hyp. we know that the property holds for $\type{(S_i^{\prime\prime}\,\subT\, S_i^{\prime\prime\prime})_{i \in I}}$, and for all payload types that are session types.
    For any payload types that are \emph{not} session types, we know by definition of subtyping (\cref{def:subtyping}) that $\type{\poly{T_i}} = \type{\poly{T_i^\prime}}$.
    Then, if $\type{\poly{T_i}} = \role \alpha$, the substitution takes place on both sides, preserving the subtype relation.  \qed
\end{proof}

\begin{lemma}\label{lem:end-subt}
    $\pEnd(\Gamma)$ implies if $\Gammap \subT \Gamma $, then $ \Gammap = \Gamma$.
\end{lemma}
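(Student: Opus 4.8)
The plan is to reduce the statement to a single structural fact about the subtyping relation of \Cref{def:subtyping}, namely that $\tEnd$ is \emph{minimal}: the only runtime type $\type{U}$ with $\type{U} \subT \tEnd$ is $\tEnd$ itself (up to the equi-recursive identification), and likewise a role singleton $\role{\rho}$ satisfies $\role{\rho} \subT \role{\rho'}$ only when $\role{\rho} = \role{\rho'}$. Granting this, since $\Gammap \subT \Gamma$ is the pointwise lifting of subtyping over a common domain, and $\pEnd(\Gamma)$ (\Cref{def:end}) forces every channel of $\Gamma$ to carry $\tEnd$ and every role entry to be a singleton, each component of $\Gammap$ is pinned to the corresponding component of $\Gamma$, yielding $\Gammap = \Gamma$.

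First I would establish the minimality claim by inspecting which subtyping rules can conclude with $\tEnd$ on the right. The branching, selection, replicated-branching, and parallel rules all require the supertype to share the subtype's top-level constructor, none of which is $\tEnd$; the right-recursion rule requires the supertype to be a $\type{\mu}$-type. Hence only reflexivity (giving $\tEnd \subT \tEnd$) and left-recursion apply, and left-recursion reduces $\type{\mu\recVar{t}.S} \subT \tEnd$ to $\type{S}\sub{\type{\mu\recVar{t}.S}}{\type{\recVar{t}}} \subT \tEnd$. Since recursion is guarded and we take the equi-recursive view, the unfolding is identified with its recursion-free form, so this case bottoms out at reflexivity and forces the subtype to be $\tEnd$; in particular a top-level parallel type cannot be a subtype of $\tEnd$. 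The argument for role singletons is immediate, as reflexivity is the only rule deriving $\role{\rho} \subT \role{\rho'}$.

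Then I would run the induction on the derivation of $\pEnd(\Gamma)$. In the base case $\Gamma = \emptyset$, a subtype must have empty domain, so $\Gammap = \emptyset$. In the case $\Gamma = c \hasT \tEnd, \Gamma_0$ with $\pEnd(\Gamma_0)$, I split $\Gammap$ as $c \hasT \type{U'}, \Gammap_0$ with $\type{U'} \subT \tEnd$ and $\Gammap_0 \subT \Gamma_0$; the minimality claim gives $\type{U'} = \tEnd$ and the induction hypothesis gives $\Gammap_0 = \Gamma_0$, so $\Gammap = \Gamma$. The case $\Gamma = \role{\alpha} \hasT \role{\alpha}, \Gamma_0$ is identical using the singleton part of the claim.

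The main obstacle is the congruence case, where $\pEnd(\Gamma)$ is obtained from $\pEnd(\Gammapp)$ with $\Gammapp \equiv \Gamma$ (\Cref{def:type-cong}): here $\Gamma$ may carry types such as $\type{\tEnd \| \tEnd}$ that are congruent but not syntactically identical to $\tEnd$, so the equation $\Gammap = \Gamma$ must be read modulo $\equiv$ and the equi-recursive identification. I would discharge this by reducing $\Gamma$ to its canonical end-typed form via $\equiv$, applying the previous cases there, and transporting the conclusion back along the congruence, using that subtyping is compatible with $\equiv$. This compatibility is the only delicate point; every remaining step is routine.
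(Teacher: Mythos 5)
Your overall strategy is the same as the paper's: the paper proves this lemma by a one-line induction on the derivation of $\pEnd(\Gamma)$, declaring it immediate from \cref{def:end}. Your $\tEnd$-minimality claim and your treatment of the empty, channel, and role-singleton cases are correct and simply make explicit what the paper leaves implicit.

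The gap is in the congruence case, and it is twofold. First, the fact you lean on --- that subtyping is compatible with $\type{\equiv}$ --- is false for the relation of \cref{def:subtyping}: we have $\type{\tEnd \| \tEnd} \type{\equiv} \tEnd$ by \cref{def:type-cong} and $\type{\tEnd \| \tEnd} \subT \type{\tEnd \| \tEnd}$ by reflexivity, yet $\type{\tEnd \| \tEnd \centernot{\subT} \tEnd}$, because no subtyping rule has a parallel type on the left and a non-parallel type on the right (reflexivity demands syntactic identity, the parallel rule demands a parallel supertype, and the recursion rules demand a $\type{\mu}$-type on one side). Hence your transport step --- from $\Gammap \subT \Gamma$ and $\Gamma \type{\equiv} \Gammapp$ infer $\Gammap \subT \Gammapp$ in order to apply the induction hypothesis --- fails already on $\Gammap = \Gamma = c \hasT \type{\tEnd \| \tEnd}$ and $\Gammapp = c \hasT \tEnd$. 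Second, the conclusion you draw, namely that the lemma can only hold modulo $\type{\equiv}$, is the wrong one: precisely because subtyping is this rigid, the \emph{literal} equality $\Gammap = \Gamma$ holds. Every channel type occurring in an end-typed $\Gamma$ is (up to the equi-recursive identification) a parallel composition of $\tEnd$'s, and for such a type $\type{E}$ a structural induction shows that $\type{U} \subT \type{E}$ forces $\type{U} = \type{E}$: the only rules that can conclude $\type{U} \subT \type{E}$ are reflexivity and the parallel rule, in the parallel case both components are pinned down by the induction hypothesis together with your own $\tEnd$-minimality claim, and a session type (in particular any $\type{\mu}$-type after unfolding) can never sit below a parallel type. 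This direct characterisation replaces the transport along $\type{\equiv}$, discharges the congruence case, and proves the lemma exactly as stated --- which also matches how it is consumed in \cref{lem:narrowing-val}.
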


\begin{proof}
    By induction on $\pEnd(\Gamma)$, follows immediately from \cref{def:end}.
\end{proof}

\begin{lemma}\label{lem:split-subt}
    If $\Gamma = \Gammab{1} \cSplit\Gammab{2}$ and $\Gammap \subT \Gamma$, then $\Gammap = \Gammab{3} \cSplit \Gammab{4}$ where $\Gammab{3} \subT \Gammab{1}$ and $\Gammab{4} \subT \Gammab{2}$.
\end{lemma}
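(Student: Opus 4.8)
The plan is to proceed by induction on the derivation of the split $\Gamma = \Gammab{1} \cSplit \Gammab{2}$, exploiting the fact that context subtyping is lifted \emph{pointwise} over a fixed domain. Thus $\Gammap \subT \Gamma$ forces $\dom(\Gammap) = \dom(\Gamma)$ with each entry related by $\subT$, and in particular any role singleton $\role{\alpha} \hasT \role{\alpha}$ in $\Gamma$ must appear unchanged in $\Gammap$, since the only subtyping rule whose conclusion is a singleton $\role{\alpha}$ is reflexivity (\cref{def:subtyping}). This pointwise reading gives, in each inductive case, an \emph{inversion} step that peels the last entry off $\Gammap$ while preserving the relevant subtype relation.

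The base case \cref{rule:split-0} is immediate: $\Gammap \subT \emptyset$ forces $\Gammap = \emptyset$, so we take $\Gammab{3} = \Gammab{4} = \emptyset$. For \cref{rule:split-l1} (and symmetrically \cref{rule:split-r1}) the split has the form $\Delta, c \hasT \type{U} = (\Deltap, c\hasT\type{U}) \cSplit \Gammab{2}$ coming from a subderivation $\Delta = \Deltap \cSplit \Gammab{2}$. Inverting $\Gammap \subT \Delta, c\hasT\type{U}$ gives $\Gammap = \Gammapp, c\hasT\type{U^\prime}$ with $\type{U^\prime \subT U}$ and $\Gammapp \subT \Delta$; the induction hypothesis splits $\Gammapp = \Gammab{3} \cSplit \Gammab{4}$ with the required componentwise subtype relations, and re-applying \cref{rule:split-l1} returns $c\hasT\type{U^\prime}$ to the left component (noting $\type{U^\prime \subT U}$ preserves the subtyping). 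The case \cref{rule:split-role} is the same pattern: inversion forces the singleton to reappear identically in $\Gammap$, the hypothesis splits the remainder, and \cref{rule:split-role} adds $\role{\alpha}\hasT\role{\alpha}$ back to both components.

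The only delicate case is \cref{rule:split-par}, where the split is $\Delta, c\hasT\type{U_1 \| U_2} = (\Deltap, c\hasT\type{U_1}) \cSplit (\Gammab{2}, c\hasT\type{U_2})$. Inversion of $\Gammap \subT \Delta, c\hasT\type{U_1 \| U_2}$ gives $\Gammap = \Gammapp, c\hasT\type{U^\prime}$ with $\type{U^\prime \subT U_1 \| U_2}$. The crux is an \emph{inversion principle for parallel runtime types}: a subtype of $\type{U_1 \| U_2}$ is itself (congruent to) a parallel composition $\type{U_1^\prime \| U_2^\prime}$ with $\type{U_1^\prime \subT U_1}$ and $\type{U_2^\prime \subT U_2}$. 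This holds because, for a genuinely parallel target, the parallel rule of \cref{def:subtyping} is the only rule whose conclusion matches (the branching, selection, replication, and recursion rules all conclude session types, and reflexivity applies only to static types). Given this decomposition, the induction hypothesis splits $\Gammapp$, and \cref{rule:split-par} recombines the pieces with $c\hasT\type{U_1^\prime}$ on the left and $c\hasT\type{U_2^\prime}$ on the right.

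I expect this parallel-subtyping inversion to be the main obstacle. The parallel rule is stated for operands of the shape $\type{U \| S}$ with a \emph{session} type on the right, so to handle arbitrary nested $\type{U_1}, \type{U_2}$ I would read both $\subT$ and the split relation modulo type congruence (\cref{def:type-cong}): using associativity, commutativity, and the unit $\tEnd$, every runtime type is congruent to a flat parallel composition of session types, and a subtype of it is a parallel composition of subtypes of those factors, which can be regrouped to match $\type{U_1 \| U_2}$. I would isolate this as a small auxiliary inversion lemma proved before the main induction. All remaining bookkeeping — disjointness of the reconstructed domains and the trivial matching of end/singleton entries — is routine.
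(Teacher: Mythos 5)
Your proof is correct and follows essentially the same route as the paper, which proves this lemma by induction on the derivation of $\Gamma = \Gammab{1} \cSplit \Gammab{2}$, observing that subtyping preserves the domain of a context so the split structure can be mirrored in $\Gammap$. Your elaboration of the \cref{rule:split-par} case via an inversion principle for subtypes of parallel runtime types (read modulo type congruence) is a sound filling-in of a detail the paper's one-line proof leaves implicit.
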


\begin{proof}
    By induction on $\Gamma = \Gammab{1} \cSplit\Gammab{2}$, observing that subtyping does not affect the domain of a context, hence subtyping over splits can be preserved.
\end{proof}

\begin{lemma}\label{lem:add-subt}
    If $\Gammab{1} + \Gammab{2} = \Gamma$ and $\Gammap \subT \Gamma$, then $\exists \Gammab{3} $ and $\Gammab{4}$ s.t. $\Gammab{3} + \Gammab{4} = \Gammap$ and $\Gammab{3} \subT \Gammab{1}$ and $\Gammab{4} \subT \Gammab{2}$.
\end{lemma}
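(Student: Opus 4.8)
The plan is to proceed by induction on the derivation of $\Gammab{1} + \Gammab{2} = \Gamma$, dually to \Cref{lem:split-subt}. Throughout I rely on the observation already used for \Cref{lem:split-subt}, namely that context subtyping acts pointwise and hence preserves domains: $\dom(\Gammap) = \dom(\Gamma)$, so the final entry added to $\Gamma$ in the last rule of the derivation can be peeled off $\Gammap$ in lockstep.

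The base case \Cref{rule:+-empty} has $\Gamma = \Gammab{1}$ and $\Gammab{2} = \emptyset$; I simply take $\Gammab{3} = \Gammap$ and $\Gammab{4} = \emptyset$, which satisfies all three requirements immediately. For the inductive cases \Cref{rule:+-Vl} and \Cref{rule:+-Vr}, the final entry $a \hasT \type{T}$ lives in exactly one of the two summands; since subtyping is pointwise, $\Gammap$ ends in $a \hasT \type{T^\prime}$ with $\type{T^\prime \subT T}$, and its prefix is a subtype of the correspondingly smaller context. I apply the induction hypothesis to that prefix, then re-attach $a \hasT \type{T^\prime}$ to the appropriate side; the side condition $a \notin \dom(\cdot)$ transfers precisely because subtyping preserves domains. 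Case \Cref{rule:+-role} is similar: the shared entry is a role singleton $\role{\alpha} \hasT \role{\alpha}$, whose only subtype is itself by reflexivity in \Cref{def:subtyping}, so $\Gammap$ carries the same entry, which I re-attach to both reconstructed halves via \Cref{rule:+-role}.

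The interesting case is \Cref{rule:+-par}, where the final entry of the sum is $c \hasT \type{U_1 \| U_2}$, obtained from $c \hasT \type{U_1}$ on the left summand and $c \hasT \type{U_2}$ on the right. Here $\Gammap$ ends in $c \hasT \type{U^\prime}$ with $\type{U^\prime \subT U_1 \| U_2}$, and I must split $\type{U^\prime}$ into parallel components respecting subtyping. This is the crux and needs an inversion property of parallel subtyping: whenever $\type{U^\prime \subT U_1 \| U_2}$, there exist $\type{U_1^\prime, U_2^\prime}$ with $\type{U^\prime \equiv U_1^\prime \| U_2^\prime}$, $\type{U_1^\prime \subT U_1}$ and $\type{U_2^\prime \subT U_2}$. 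I would establish this by noting that, modulo type congruence (\Cref{def:type-cong}), every runtime type is congruent to a parallel composition of session types, and that the only subtyping rule of \Cref{def:subtyping} concluding a judgement whose right-hand side is a genuine parallel type is the parallel rule (after any recursive unfolding on the left); hence the derivation of $\type{U^\prime \subT U_1 \| U_2}$ must itself decompose both sides into matching parallel components, using $\tEnd$ as padding where one side is a single session type. With this decomposition I apply the induction hypothesis to the prefix of $\Gammap$, extend the two resulting halves with $c \hasT \type{U_1^\prime}$ and $c \hasT \type{U_2^\prime}$ respectively, and conclude $\Gammab{3} + \Gammab{4} = \Gammap$ via \Cref{rule:+-par} --- working up to $\equiv$, which is legitimate since both context addition and subtyping are taken modulo type congruence.

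The main obstacle is exactly this parallel-subtyping inversion in the \Cref{rule:+-par} case; all other cases are routine lockstep peeling. I expect the inversion itself to be unproblematic given the rigid shape of the subtyping rules, but some care is needed with the congruence closure (associativity, commutativity, and the $\tEnd$ identity) so that the degenerate situation of a single session type subtyping a parallel type is correctly accounted for.
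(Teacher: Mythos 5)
Your proposal is correct and takes essentially the same approach as the paper: the paper proves this lemma by induction on the derivation of $\Gammab{1} + \Gammab{2} = \Gamma$, mirroring \Cref{lem:split-subt} and relying on exactly your observation that pointwise subtyping preserves context domains, so entries can be peeled off in lockstep. Your detailed treatment of the \Cref{rule:+-par} case via parallel-subtyping inversion (up to type congruence) fills in a detail the paper's one-line proof leaves implicit, but it is the same argument.
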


\begin{proof}
    Similar to \cref{lem:split-subt}, this time by induction on $\Gammab{1} + \Gammab{2} = \Gamma$.
\end{proof}

\begin{lemma}\label{lem:narrowing-val}
    Assume $\Gamma \vdash V \hasT \type T$ and $\Gammap \subT \Gamma$.
    Then, $\Gammap \vdash V \hasT \type T$.
\end{lemma}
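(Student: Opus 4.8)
The plan is to proceed by induction on the derivation of $\Gamma \vdash V \hasT \type T$, with one case per value typing rule. Throughout I rely on the fact that lifting $\subT$ to contexts preserves domains and relates entries pointwise, so that $\Gammap \subT \Gamma$ forces $\Gammap$ and $\Gamma$ to share the same domain.

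The two axiom cases are immediate. For \cref{rule:t-roleVal} we have $\Gamma = \emptyset$, $V = \role{q}$ and $\type T = \q$; since $\Gammap \subT \emptyset$ forces $\Gammap = \emptyset$, the same rule gives $\Gammap \vdash \role{q} \hasTT \q$. For \cref{rule:t-roleVar} we have $\Gamma = \role{\alpha} \hasT \role{\alpha}$; because the only subtype of the singleton $\role{\alpha}$ is $\role{\alpha}$ itself (by reflexivity of $\subT$ in \cref{def:subtyping}), $\Gammap \subT \Gamma$ forces $\Gammap = \role{\alpha} \hasT \role{\alpha}$ and the rule reapplies directly.

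The remaining non-weakening case is \cref{rule:t-sub}, where $\Gamma = c \hasTT \type S$, $V = c$, $\type T = \type{S^\prime}$ and $\type{S \subT S^\prime}$. Pointwise subtyping gives $\Gammap = c \hasTT \type{S^{\prime\prime}}$ with $\type{S^{\prime\prime} \subT S}$; composing with the premise via transitivity of $\subT$ (a standard property of the coinductively defined relation) yields $\type{S^{\prime\prime} \subT S^\prime}$, and one application of \cref{rule:t-sub} concludes $\Gammap \vdash c \hasTT \type{S^\prime}$.

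I expect the main obstacle to be the weakening case \cref{rule:t-weak}, where $\Gamma = \Gammab{1} + \Gammab{2}$ with $\Gammab{1} \vdash V \hasT \type T$ and $\pEnd(\Gammab{2})$. Here I would first invoke \cref{lem:add-subt} on $\Gammab{1} + \Gammab{2} = \Gamma$ and $\Gammap \subT \Gamma$ to obtain a splitting $\Gammab{3} + \Gammab{4} = \Gammap$ with $\Gammab{3} \subT \Gammab{1}$ and $\Gammab{4} \subT \Gammab{2}$. The induction hypothesis applied to $\Gammab{1} \vdash V \hasT \type T$ and $\Gammab{3} \subT \Gammab{1}$ gives $\Gammab{3} \vdash V \hasT \type T$. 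Since $\pEnd(\Gammab{2})$ and $\Gammab{4} \subT \Gammab{2}$, \cref{lem:end-subt} forces $\Gammab{4} = \Gammab{2}$, hence $\pEnd(\Gammab{4})$. Reapplying \cref{rule:t-weak} to $\Gammab{3} + \Gammab{4} = \Gammap$, the typing of $V$ under $\Gammab{3}$, and $\pEnd(\Gammab{4})$ then delivers $\Gammap \vdash V \hasT \type T$. The only real subtlety is that the decomposition of the context subtyping must commute with context addition, which is precisely the content of \cref{lem:add-subt}, so no reasoning beyond the cited lemmata is required.
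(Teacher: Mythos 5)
Your proposal is correct and follows essentially the same route as the paper's own proof: induction on the value typing derivation, with the axiom cases closed by the rigidity of $\subT$ on $\emptyset$ and role singletons, the \cref{rule:t-sub} case by transitivity of subtyping, and the \cref{rule:t-weak} case by \cref{lem:add-subt} followed by the induction hypothesis and \cref{lem:end-subt}. No gaps to report.
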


\begin{proof}
    By induction on the typing derivation.

    Case \cref{rule:t-roleVal}: $\emptyset \vdash \q \hasT \q$ and only $\emptyset \subT \emptyset$.

    Case \cref{rule:t-roleVar}: $\role \alpha \hasT \role \alpha \vdash \role \alpha \hasT \role \alpha$ but only $\role \alpha \hasT \role \alpha \subT \role \alpha \hasT \role \alpha$.

    Case \cref{rule:t-sub}: $\inferrule{\type{S}\subT\type{S^\prime}}{c \hasT \St \vdash c \hasT \type{S^\prime}}$. 
    Consider $c \hasT \type{S^{\prime\prime}} \subT c \hasT \type{S}$, then by transitivity of subtyping, $\inferrule{\type{S^{\prime\prime}}\subT\type{S^\prime}}{c \hasT \type{S^{\prime\prime}} \vdash c \hasT \type{S^\prime}}$.

    Case \cref{rule:t-weak}: $\inferrule{\Gammab{1} + \Gammab{2} = \Gammap \\\\ \Gammab{1} \vdash V \hasT \type{T} \\ \pEnd(\Gammab{2}) }
    {\Gammap \vdash V \hasT \type{T}}$.
    Consider $\Gammapp \subT \Gammap$, then by \cref{lem:add-subt}, $\type{\Gammab{1}^\prime} + \type{\Gammab{2}^\prime} = \Gammapp$ and $\type{\Gammab{1}^\prime} \subT \Gammab{1}$ and $\type{\Gammab{2}^\prime} \subT \Gammab{2}$.
    By the ind. hyp. we know $\type{\Gammab{1}^\prime} \vdash V \hasT \type{T}$; and by \cref{lem:end-subt} we know $\pEnd(\type{\Gammab{2}^\prime})$. 
    Therefore, we conclude by $\inferrule*[right={T-Wkn}]{\type{\Gammab{1}^\prime} + \type{\Gammab{2}^\prime} = \Gammapp \\\\ \type{\Gammab{1}^\prime} \vdash V \hasT \type{T} \\ \pEnd(\type{\Gammab{2}^\prime}) }
    {\Gammapp \vdash V \hasT \type{T}}$ \qed
\end{proof}

\begin{lemma}\label{lem:narrowing-proc}
    Assume $\Theta \cons \Gamma \vdash P$ and $\Gammap \subT \Gamma$.
    Then, $\Theta \cons \Gammap \vdash P$.
\end{lemma}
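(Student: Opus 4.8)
The plan is to proceed by induction on the derivation of $\Theta \cons \Gamma \vdash P$, exactly as in \cref{lem:narrowing-val} but now for the process judgement, threading the hypothesis $\Gammap \subT \Gamma$ through each typing rule. Since process typing is syntax-directed, in every case I invert the final rule, split $\Gammap$ so that each component is a subtype of the matching component of $\Gamma$ (using \cref{lem:split-subt} for $\cSplit$ and \cref{lem:add-subt} for $+$), re-establish the value-typing premises with \cref{lem:narrowing-val}, apply the induction hypothesis to the continuation(s), and reassemble the derivation. The crucial feature I exploit is that \cref{lem:narrowing-val} preserves the \emph{assigned} type while only shrinking the context, so the selection/branching type used by a communication rule is unchanged after narrowing, and there is no need to reconcile the subtyping clauses of \cref{def:subtyping} with the rule side conditions.

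The base and structural cases are immediate. For \cref{rule:t-0}, the side condition $\pEnd(\Gamma)$ together with \cref{lem:end-subt} forces $\Gammap = \Gamma$, so the conclusion is unchanged. For \cref{rule:t-par}, \cref{lem:split-subt} gives $\Gammap = \Gammab{3} \cSplit \Gammab{4}$ with $\Gammab{3} \subT \Gammab{1}$ and $\Gammab{4} \subT \Gammab{2}$, and the induction hypothesis on each component recovers the premises. For \cref{rule:t-choice}, every branch is typed under the \emph{same} context $\Gamma$, so the induction hypothesis applies directly to each send-prefixed subderivation with the common $\Gammap$.

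The communication and recursion rules are the substantive cases. For \cref{rule:t-send}, \cref{lem:split-subt} decomposes $\Gammap$ into components below $\type{\Gamma_\o}$, $\type{\Gamma_r}$, each $\type{\Gamma_i}$, and the shared continuation context; \cref{lem:narrowing-val} re-derives $c$ at the \emph{same} selection type, $\role\rho$ at $\role\rho$, and each $V_i$ at $\type{T_i}$; and for the continuation I add the fixed mapping $c \hasT \type{S^\prime}$ back to the narrowed shared context before invoking the induction hypothesis. \cref{rule:t-rcv} is analogous, narrowing each branch's context (extended by the fixed continuation type and the received binders). \cref{rule:t-bang} additionally carries $\pEnd(\Gamma)$ on the context shared by all branches: by \cref{lem:end-subt} that part is forced equal after narrowing, so the branches and the insertion $\insnew{\role\rho}$ are untouched and only the $\type{\Gamma_!}$ component shrinks. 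The recursion rules \cref{rule:t-def} and \cref{rule:t-call} are handled by collapsing their end-typed subcontexts to equality with \cref{lem:end-subt} and narrowing the remaining part via the induction hypothesis or \cref{lem:narrowing-val}. Finally \cref{rule:t-new} narrows the extended context $\Gammap + \extract_\ses{s}(\Psi)$, and since subtyping preserves domains the side conditions $\ses{s}\notin\Gamma$ and safety of $\extract_\ses{s}(\Psi)$ transfer verbatim.

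The hard part will be the bookkeeping underlying the continuation steps, namely verifying that context addition is monotone under subtyping: from $\Gammab{1} \subT \Gammab{2}$ I need $\Gammab{1} + c\hasT\type{S^\prime} \subT \Gammab{2} + c\hasT\type{S^\prime}$, and likewise when the added mapping merges into a parallel type. This reduces to reflexivity of $\subT$ on the added mapping together with the pointwise (and parallel-type) lifting of subtyping to contexts, and it is the one place where the interplay between $+$, the parallel-type clause of \cref{def:subtyping}, and the fixed continuation type must be checked with care; every other step is routine disassembly-and-reassembly guided by the context-operation lemmas.
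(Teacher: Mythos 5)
Your proposal is correct and takes essentially the same route as the paper, whose entire proof reads ``By induction on the typing derivation. Follows from \cref{lem:narrowing-val}.''; your case analysis is a faithful elaboration of that sketch using the paper's own context-operation lemmas (\cref{lem:split-subt}, \cref{lem:add-subt}, \cref{lem:end-subt}). The one point you flag as needing care---monotonicity of context addition under subtyping, discharged via reflexivity of $\subT$ and the parallel-type clause of \cref{def:subtyping}---is indeed the only bookkeeping the paper leaves implicit, and your treatment of it is sound.
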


\begin{proof}
    By induction on the typing derivation. Follows from \cref{lem:narrowing-val}. \qed
\end{proof}

\subsection{Congruence}

% \begin{lemma}
%     Assume $\Gamma \subT \Gammap$, and $\Gammap \type{\equiv} \type{\Gamma^{\prime\prime}}$.
%     %
%     Then, $\exists \type{\Gamma^{\prime\prime\prime}}$ s.t. $\Gamma \type{\equiv} \type{\Gamma^{\prime\prime\prime}}$ with $\type{\Gamma^{\prime\prime\prime}} \subT \type{\Gamma^{\prime\prime}}$.
% \end{lemma}

% \begin{lemma}
%     Assume $\Gamma \type{\equiv} \Gammap$, and $\Gammap \subT \type{\Gamma^{\prime\prime}}$.
%     %
%     Then, $\exists \type{\Gamma^{\prime\prime\prime}}$ s.t. $\Gamma \subT \type{\Gamma^{\prime\prime\prime}}$ with $\type{\Gamma^{\prime\prime\prime}} \type{\equiv} \type{\Gamma^{\prime\prime}}$.
% \end{lemma}

\begin{lemma}\label{lem:subject-cong}
    Assume $\Theta \cons \Gamma \vdash P$ and $P \equiv P^\prime$.
    Then, $\exists \Gammap$ s.t. $\type{\Gamma \equiv \Gammap}$ and $\Theta \cons \Gammap \vdash P^\prime$.
\end{lemma}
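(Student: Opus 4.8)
The plan is to proceed by induction on the derivation of $P \equiv P'$ (\Cref{def:cong}), treating $\equiv$ as generated by its axioms together with reflexivity, symmetry, transitivity, and closure under parallel composition, restriction, and process definition. Reflexivity is immediate with $\Gammap = \Gamma$; for transitivity I would chain the two congruent contexts obtained from the sub-derivations using transitivity of context congruence; symmetry is handled by establishing each axiom in both directions, which is unproblematic because every inversion step below is reversible. For a congruence-closure case such as $P \| Q \equiv P' \| Q$ (with $P \equiv P'$), I would invert the outer rule (here \Cref{rule:t-par}) to get $\Gamma = \Gammab{1}\cSplit\Gammab{2}$ with $\Gammab{1}\vdash P$ and $\Gammab{2}\vdash Q$, apply the induction hypothesis to obtain $\type{\Gamma_1'}\equiv\Gammab{1}$ with $\type{\Gamma_1'}\vdash P'$, and re-apply \Cref{rule:t-par} to $\type{\Gamma_1'}\cSplit\Gammab{2}$; since context congruence is compatible with $\cSplit$ (and, for the restriction and definition cases, with $+$ and with $\extract_\ses{s}$, noting that $\extract_\ses{s}(\Psi)$ contains only non-parallel session types and is therefore left pointwise-unchanged by $\equiv$), the resulting context is congruent to $\Gamma$.

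For the base axioms I would invert the relevant typing rule(s) and then rebuild the derivation for the right-hand side purely by rearranging the context. The two parallel-composition laws use \Cref{rule:t-par} together with commutativity and associativity of $\cSplit$, mirrored at the type level by commutativity and associativity of $\|$ from \Cref{def:type-cong}; so e.g.\ $P\|Q \equiv Q\|P$ yields $\Gammap = \Gammab{2}\cSplit\Gammab{1}$, which is congruent to $\Gamma$ because a shared channel split into $c\hasT\type{U_1}$ and $c\hasT\type{U_2}$ recombines to $\type{U_1\|U_2}\equiv\type{U_2\|U_1}$. The two definition laws and the definition/restriction-commuting law are analogous rearrangements using \Cref{rule:t-def}, \Cref{rule:t-new} and the side conditions on $\fs$, $\dpv$ and $\fpv$, together with the algebraic lemmas relating $\cSplit$ and $+$ in \Cref{app:lemmata}.

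The main obstacle is the group of axioms that move resources across a split: the unit law $P\|\0 \equiv P$ and scope extrusion $\newnt{s}(P\|Q)\equiv P\|\newnt{s}Q$. For $P\|\0$, inversion of \Cref{rule:t-par} and \Cref{rule:t-0} gives $\Gamma=\Gammab{1}\cSplit\Gammab{2}$ with $\Gammab{1}\vdash P$ and $\pEnd(\Gammab{2})$; I would then reconcile $\Gamma$ with a context typing $P$ alone by absorbing the end-typed residue, using $\type{U\|\tEnd}\equiv\type{U}$ where the finished endpoint is shared with $\Gammab{1}$ and an end-typed weakening argument where it lives solely in $\Gammab{2}$, so that $\Gammap\equiv\Gamma$. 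For scope extrusion I would use the side condition $\ses{s}\not\in\fs(P)$ to argue that the endpoints of session $\ses{s}$ contributed by $\extract_\ses{s}(\Psi)$ are confined to the component typing $Q$; the split of $\Gamma + \extract_\ses{s}(\Psi)$ can then be reorganised so that $\new{s}{\Psi}$ scopes only over $Q$, re-applying \Cref{rule:t-new} with the same $\Psi$ (its safety obligation $\varphi(\extract_\ses{s}(\Psi))$ being inherited unchanged). Soundness of the restriction unit law $\newnt{s}\,\0\equiv\0$ relies on the discharged protocol being end-typed, which follows by inverting \Cref{rule:t-new} on the typable larger term; in every case the context $\Gammap$ produced is congruent to $\Gamma$, discharging the statement.
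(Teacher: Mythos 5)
Your proposal is correct and takes essentially the same route as the paper: the paper's own proof is only a one-sentence sketch stating that one proceeds by cases on the congruence rules, either applying a matching type-level congruence to the context or keeping the process typable under the original context. Your case-by-case plan---mirroring process-level laws with commutativity/associativity of $\type{\|}$ and $\cSplit$, absorbing end-typed residue via $\type{U\|\tEnd \equiv U}$ and end-typed weakening for the unit laws, and reorganising the split for scope extrusion---is a faithful (and considerably more detailed) elaboration of exactly that argument.
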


\begin{proof}[Sketch]
    By cases on the definition of $P \equiv P^\prime$, observing that either a matching congruence can be applied to the type context to preserve the judgement, or the process remains typable under the original context.
\end{proof}

\subsection{Safety}

\begin{lemma}
    If $\Gamma,\Gammap$ is {safe}, then $\Gamma$ is {safe}.
\end{lemma}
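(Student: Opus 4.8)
The plan is to use the standard coinductive technique of \emph{exhibiting a safety property}. Recall that the predicate $\varphi$ used throughout the metatheory denotes the \emph{largest} safety property, so ``$\Gamma$ is safe'' means $\varphi(\Gamma)$, and to establish $\varphi(\Gamma)$ it suffices to produce \emph{any} safety property that holds of $\Gamma$. I therefore define a candidate predicate $\psi$ by: $\psi(\Delta)$ holds iff there exists $\Delta'$ with $\dom(\Delta') \cap \dom(\Delta) = \emptyset$ such that $\varphi(\Delta,\Delta')$. The hypothesis that $\Gamma,\Gamma'$ is safe immediately gives $\psi(\Gamma)$, witnessed by $\Gamma'$. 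Hence, once I show that $\psi$ is itself a safety property, maximality of $\varphi$ yields $\psi(\Delta) \Rightarrow \varphi(\Delta)$ for all $\Delta$, and in particular $\varphi(\Gamma)$, as required.

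The bulk of the work is checking that $\psi$ satisfies each of the five closure conditions of \Cref{def:prop-safe}; fix a witness $\Delta'$ for the relevant instance of $\psi$ in each case. For the communication conditions \Cref{rule:safe-com} and \Cref{rule:safe-bang}, I observe that whenever the antecedent presents a context of shape $\Delta \cSplit \ses{s}[\p]\hasT\ldots \cSplit \ses{s}[\q]\hasT\ldots$, the witnessed judgement has exactly the same shape but with ambient context $\Delta,\Delta'$ (using commutativity and associativity of composition together with disjointness of $\Delta'$). Since $\varphi$ satisfies these conditions, and their conclusions concern only the two named endpoints---common labels, payload-length agreement, and subtyping, all independent of the ambient context---the same conclusions hold, giving the conditions for $\psi$. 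Condition \Cref{rule:safe-mu} is analogous: the witnessed $\varphi$-instance unfolds the recursive binder at $\ses{s}[\q]$ irrespective of the ambient context, and the result is again of the form $(\Delta \cSplit \ses{s}[\q]\hasT S\{\ldots\}),\Delta'$, witnessing $\psi$ of the unfolding. For \Cref{rule:safe-role}, I use that $\frv$ is monotone under composition, so $\frv(\Delta,\Delta') = \emptyset$ (from safety of $\varphi$) forces $\frv(\Delta) = \emptyset$.

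The only genuinely delicate case is closure under reduction, \Cref{rule:safe-r}: given $\psi(\Delta)$ with witness $\Delta'$ and a transition $\Delta \redC \Delta_1$, I must produce a witness for $\psi(\Delta_1)$. The key observation is that context reduction is preserved under composition with a disjoint context. Since $\Delta,\Delta'$ arises as a split $\Delta \cSplit \Delta'$, rule \Cref{rule:g-cong} lifts $\Delta \redC \Delta_1$ to $\Delta,\Delta' \redC \Delta_1 + \Delta'$; and because every transition in \Cref{fig:type-sem} preserves the domain of the reducing endpoints, $\dom(\Delta_1) = \dom(\Delta)$ stays disjoint from $\dom(\Delta')$, so $\Delta_1 + \Delta' = \Delta_1,\Delta'$. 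Applying \Cref{rule:safe-r} for $\varphi$ to $\varphi(\Delta,\Delta')$ then yields $\varphi(\Delta_1,\Delta')$, i.e.\ $\psi(\Delta_1)$ with witness $\Delta'$. I expect this reduction-closure step---specifically the bookkeeping that the ambient $\Delta'$ is untouched by the transition and that context addition collapses to composition on disjoint domains---to be the main obstacle; the remaining conditions become routine once the witness-preservation pattern is in place.
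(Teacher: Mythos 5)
Your proof is correct, and it takes a genuinely different route from the paper's. The paper argues by contradiction: if $\Gamma$ were not safe, then some reduction sequence $\Gamma \redCstar \Gammapp$ would reach a context violating one of the local conditions \Cref{rule:safe-com}, \Cref{rule:safe-bang} or \Cref{rule:safe-role} (modulo applications of \Cref{rule:safe-mu}); rule \Cref{rule:g-cong} lifts that same trace to $\Gamma,\Gammap \redCstar \Gammapp + \Gammap$, where the violation persists, contradicting safety of $\Gamma,\Gammap$. You instead package the same two key facts---that \Cref{rule:g-cong} lifts reductions of the sub-context while leaving the ambient part untouched, and that the local conditions constrain only the endpoints named in them---into an explicit witness predicate $\psi$ and conclude by maximality of $\varphi$. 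What your version buys is self-containedness: the paper's first step silently relies on the characterisation that non-safety is always witnessed by a \emph{reachable}, locally violating context (morally justified by the behavioural-set view of \Cref{def:beh}, but never proved as a lemma), whereas your coinductive argument works directly from \Cref{def:prop-safe}, at the price of checking all five closure conditions explicitly; what the paper's version buys is brevity. One caveat applies to both proofs equally: identifying the composition $\Delta,\Delta^\prime$ with the split $\Delta \cSplit \Delta^\prime$ (and, in your \Cref{rule:safe-role} case, the monotonicity of $\frv$ under composition) depends on conventions about role-singleton entries that the paper leaves informal; under the reading that makes the lemma true at all---singleton entries in one part do not bind role variables occurring in the other part's types---your steps go through, so this is a looseness inherited from the paper rather than a gap in your argument.
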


\begin{proof}
    By contradiction. Assume $\Gamma$ not safe. 
    Then, by \cref{def:prop-safe} condition \cref{rule:safe-r}, 
    there is a $\Gammapp$ s.t. $\Gamma \redC^*\Gammapp$ and $\Gammapp$ either violates \cref{rule:safe-com} or \cref{rule:safe-role} (possibly after some applications of \cref{rule:safe-mu}).
    But by \cref{rule:g-cong}, $\Gamma,\Gammap = \Gamma \cSplit \Gammap \redC^* \Gammapp + \Gammap$ that is not safe.
    Therefore, $\Gamma,\Gammap$ violates \cref{rule:safe-r} and is itself not safe---contradiction.
    Hence, $\Gamma,\Gammap$ is {safe}.\qed
\end{proof}

% \begin{lemma}
%     If $\Gamma$ is {safe} and $\Gamma \subT \Gammap$, then, $\Gammap$ is {safe}.
% \end{lemma}

\begin{lemma}\label{lem:safe-interact}
    If $\Gamma$ is safe and $\Gamma = \ses{s}[\p] \hasT \type{S_{\o}},\ses{s}[\q] \hasT \type{S_{\&}}\; \subT\; \Gammap$ and $\Gammap \redC$, then $\Gamma \redC$.
\end{lemma}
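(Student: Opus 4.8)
The plan is to use the reduction of $\Gammap$ to pin down a concrete communication between $\p$ and $\q$, push the relevant structural facts back onto $\Gamma$ through subtyping, and then invoke the safety of $\Gamma$ (condition \Cref{rule:safe-com}) to reconstruct a matching reduction. Since subtyping preserves the domain of a context, from $\Gamma \subT \Gammap$ I would first write $\Gammap = \ses{s}[\p] \hasT \type{S_\o^p}, \ses{s}[\q] \hasT \type{S_\&^p}$ with $\type{S_\o} \subT \type{S_\o^p}$ and $\type{S_\&} \subT \type{S_\&^p}$. Because subtyping (\Cref{def:subtyping}) relates selection types only to selection types and branching types only to branching types up to unfolding, $\type{S_\o^p}$ is a selection and $\type{S_\&^p}$ a non-replicated branching type. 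As $\Gammap$ contains exactly these two entries, any $\Gammap \redC$ must be a communication from $\p$ (holding the selection) to $\q$ (holding the branching); and since $\type{S_\&^p}$ is not replicated, the input it fires carries a concrete source role, forcing the reduction to use \Cref{rule:g-com1} rather than \Cref{rule:g-com2}.

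Next I would extract the structural data from this reduction. By \Cref{rule:g-com1} together with \Cref{rule:g-send}, $\type{S_\o^p}$ selects some label $\m$ whose target is $\q$, and by \Cref{rule:g-rcv} the type $\type{S_\&^p}$ receives from $\p$. Both facts transfer to $\Gamma$: selection subtyping only \emph{adds} branches while keeping the target role of each retained label, so $\type{S_\o}$ still offers $\m$ towards $\q$; and branching subtyping preserves the source role, so $\type{S_\&}$ still receives from $\p$.

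Finally I would close with safety. Viewing $\Gamma$ as $\ses{s}[\p]\hasT\type{S_\o} \cSplit \ses{s}[\q]\hasT\type{S_\&}$ (empty leftover context), the previous step supplies a non-empty set $K$ of labels of $\type{S_\o}$ (containing $\m$) directed at $\q$, while $\type{S_\&}$ receives from $\p$; these are exactly the premises of \Cref{rule:safe-com} for $\varphi(\Gamma)$. Its conclusion gives that $\m$ is a branch of $\type{S_\&}$, that the two endpoints' role-payload lengths agree, and that the sender's session-type payloads are subtypes of the receiver's. These coincide precisely with the side conditions of \Cref{rule:g-com1}: $\ses{s}[\p]\hasT\type{S_\o}$ fires the output on $\m$ towards $\q$ via \Cref{rule:g-send}, $\ses{s}[\q]\hasT\type{S_\&}$ fires the matching input on $\m$ from $\p$ via \Cref{rule:g-rcv}, and the payload premises are discharged. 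Thus $\Gamma \redS{\aCom{s}{\p}{\q}{\m}} \Gammapp$ for some $\Gammapp$, i.e. $\Gamma \redC$.

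The main obstacle is that subtyping runs \emph{against} us on the receiving endpoint: the subtype $\type{S_\&}$ may expose strictly fewer branches than $\type{S_\&^p}$, so the label on which $\Gammap$ communicates need not a priori occur in $\type{S_\&}$. The entire weight of the lemma therefore falls on invoking safety of $\Gamma$ to recover that branch; the remaining work is the routine bookkeeping of verifying that the target/source roles and the existence of a $\q$-directed label survive subtyping, and that the payload side conditions produced by \Cref{rule:safe-com} are exactly those demanded by \Cref{rule:g-com1}.
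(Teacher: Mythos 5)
Your argument is correct for the case it covers, and it is in fact more rigorous than the paper's own justification: the paper proves this lemma with a one-sentence sketch asserting that ``by virtue of subtyping, at least one overlapping message that allows for reduction in the supertype will remain present in the subtype''. Read literally that is not sufficient---exactly as you observe, branching subtyping \emph{shrinks} the receiver's branch set, so the label on which $\Gammap$ communicates need not survive in $\type{S_{\&}}$. Your proof supplies the ingredient the sketch leaves implicit: safety of $\Gamma$, via \Cref{rule:safe-com}, is what forces every $\q$-directed output label of $\type{S_{\o}}$ (in particular the one transferred from $\Gammap$ by selection subtyping) to occur among the branches of $\type{S_{\&}}$ with compatible payloads, which is precisely the data \Cref{rule:g-com1} needs to fire. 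On the non-replicated fragment your proof and the paper's intended argument therefore coincide, with yours being the honest version.

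There is, however, a genuine coverage gap: you restrict $\type{S_{\&}^{p}}$ (hence $\type{S_{\&}}$) to \emph{non-replicated} branching types and use this to rule out \Cref{rule:g-com2}. The lemma as the paper deploys it must also cover replicated receivers: it underpins \Cref{lem:red-commutes-subt}, which the subject reduction proof invokes precisely in case \Cref{rule:r-bang2}, where the receiving entry has type $\type{{!}\role{\alpha}\&_{i\in I}\,\m_i(\poly{T_i}).S_i}$---a replicated branching type whose input fires through \Cref{rule:g-bang} and whose communication is assembled by \Cref{rule:g-com2} (role-variable source) rather than \Cref{rule:g-com1}. Your skeleton extends to this case essentially unchanged: replicated subtyping likewise only adds branches on the supertype side and preserves $\role{\rho}$, the relevant safety clause becomes \Cref{rule:safe-bang} instead of \Cref{rule:safe-com}, and the reconstructed reduction uses \Cref{rule:g-bang} followed by \Cref{rule:g-com1} or \Cref{rule:g-com2}. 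But as written your proof does not establish the lemma in the generality in which the paper relies on it, so this case needs to be added. (Both you and the paper also gloss over recursive types; under the paper's equi-recursive identification, handled by \Cref{rule:g-rec} and \Cref{rule:safe-mu}, this is minor.)
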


\begin{proof}[Sketch]
    We observe that by virtue of subtyping (\cref{def:subtyping}), at least one overlapping message that allows for reduction in the supertype will remain present in the subtype.
\end{proof}

\begin{lemma}\label{lem:red-commutes-subt}%[Reduction Commutes over Supertypes]
    Assume $\Gamma$ is {safe}, $\Gamma \subT \Gammap$, and $\Gammap \;\redC\; \type{\Gamma^{\prime\prime}}$.
    Then, $\exists \type{\Gamma^{\prime\prime\prime}}$ s.t. $\Gamma \;\redC\; \type{\Gamma^{\prime\prime\prime}}$ with $\type{\Gamma^{\prime\prime\prime}} \subT \type{\Gamma^{\prime\prime}}$.
\end{lemma}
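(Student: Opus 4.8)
The plan is to invert the communication that drives $\Gammap \redC \type{\Gamma^{\prime\prime}}$ and transport it across the subtyping relation, repairing with \emph{safety} the one place where subtyping alone fails. Since $\redC$ is by definition a communication action, the transition $\Gammap \redC \type{\Gamma^{\prime\prime}}$ is derived with \cref{rule:g-com1} or \cref{rule:g-com2} at its core (possibly under \cref{rule:g-cong}, \cref{rule:g-rec} or \cref{rule:g-role}), which fixes a sender $\ses{s}[\p]$, a receiver $\ses{s}[\q]$ and a label $\m$. Inverting it exposes a split $\Gammap = \type{\Gamma_\o} \cSplit \type{\Gamma_\&}$ in which, after unfolding, $\ses{s}[\p]$ offers a selection with an $\m$-branch aimed at $\q$ and $\ses{s}[\q]$ offers a branching or replicated receive of $\m$ from $\p$, together with the payload constraint and any role substitutions. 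Applying \cref{lem:split-subt} to $\Gamma \subT \Gammap$ produces a matching split $\Gamma = \type{\Gamma_1} \cSplit \type{\Gamma_2}$ with $\type{\Gamma_1} \subT \type{\Gamma_\o}$ and $\type{\Gamma_2} \subT \type{\Gamma_\&}$; recursive binders met en route are absorbed by the $\mu$-rules of \cref{def:subtyping} together with \cref{rule:g-rec}, and safety survives unfolding by \cref{rule:safe-mu}.

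The output side is routine: by \cref{def:subtyping} the selection at $\ses{s}[\p]$ in $\Gamma$ retains at least the branches of its supertype, so it still fires $\m$ toward $\q$ via \cref{rule:g-send}, reaching a continuation below the one reached in $\Gammap$. The input side is the crux and is exactly where safety of $\Gamma$ is needed. Branching subtyping \emph{removes} labels, so $\ses{s}[\q]$ in $\Gamma$ need not offer $\m$; worse, the payload orderings do not compose, since the sender payload in $\Gamma$ is a \emph{super}type of the one in $\Gammap$ while the receiver payload is a \emph{sub}type, so the premise $\poly{S} \subT \poly{S}^\prime$ cannot simply be relayed. Both gaps close by applying \cref{rule:safe-com} (or \cref{rule:safe-bang} when $\ses{s}[\q]$ is replicated) to the safe context $\Gamma$ itself: as $\m$ is a label that $\p$ directs at $\q$, safety forces $\m$ into the receiver's branch set and forces the sender's session payloads to be subtypes of the receiver's. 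This is precisely the enabling condition of \cref{rule:g-com1}/\cref{rule:g-com2}, so $\Gamma$ performs the same communication, giving $\Gamma \redC \type{\Gamma^{\prime\prime\prime}}$; morally this refines \cref{lem:safe-interact}.

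It remains to show $\type{\Gamma^{\prime\prime\prime}} \subT \type{\Gamma^{\prime\prime}}$. The advanced sender and receiver entries stay related because continuations are covariant in both rules of \cref{def:subtyping}, and for a replicated receiver the copy pulled out in parallel is matched by the rule $\type{U \| S \subT U^\prime \| S^\prime}$. The role substitutions inserted by \cref{rule:g-com1}/\cref{rule:g-com2} agree across the two derivations---role payloads are singletons and hence identical on both sides of any subtyping---so \cref{lem:subt-preserves-subs} pushes them through while preserving $\subT$. The untouched remainders are related by the splits from \cref{lem:split-subt}, and recombining via \cref{lem:add-subt} together with congruence of $\subT$ over $+$, $\|$ and contexts delivers $\type{\Gamma^{\prime\prime\prime}} \subT \type{\Gamma^{\prime\prime}}$. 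I expect the main obstacle to be the input side: reconciling the loss of receiver branches and the non-composing payload orderings, which succeeds only thanks to the safety assumption on $\Gamma$; the rest---substitutions, unfoldings, and the auxiliary transition rules---is tedious but routine.
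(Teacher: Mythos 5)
Your proposal is correct and follows essentially the same route as the paper: invert the communication behind $\Gammap \redC \type{\Gamma^{\prime\prime}}$, locate the corresponding sender/receiver entries in $\Gamma$ via the context split, use safety of $\Gamma$ to license the same communication, and conclude subtyping of the results from covariance of continuations and the untouched remainder. The only difference is presentational: where the paper delegates the enabling step to \cref{lem:safe-interact}, you inline its content via \cref{rule:safe-com}/\cref{rule:safe-bang}, and in doing so you make explicit two points the paper leaves implicit---that the \emph{same} label can be fired (needed for the continuations to remain related) and that payload compatibility cannot be obtained by composing the subtyping assumptions and must come from safety.
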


\begin{proof}
    If $\Gammap \redC \Gammapp$, then the reduction is a result from the communication between some two entries $\ses{s}[\p] \hasT \type{S_\o^\prime},\ses{s}[\q] \hasT \type{S_\&^\prime}$ in $\Gammap$.

    Since $\Gamma \subT \Gammap$, then $\Gamma = \Gammab{0} \cSplit \Gammab{\role \rho},\ses{s}[\p] \hasT \type{S_\o},\ses{s}[\q] \hasT \type{S_\&}$, where $\Gamma_\role{\rho}$ only contains singletons (and hence $\pEnd(\Gammab{\role \rho})$).

    By \cref{lem:safe-interact}, we know $\Gamma \redC \type{\Gamma^{\prime\prime\prime}}$. Lastly, we obtain $\type{\Gamma^{\prime\prime\prime}} \subT \Gammapp$ since 
    \emph{(i)} for entries $\ses{s}[\p]$ and $\ses{s}[\q]$, we observe from \cref{def:subtyping} that type continuations preserve subtyping;
    and \emph{(ii)} the remainder of the context is unaffected by the communication. \qed
\end{proof}

\subsection{Session Inversion}

\begin{lemma}
    If $P \equiv \0$ and $\Theta \cons \Gamma \vdash P$, then $\pEnd(\Gamma)$.
\end{lemma}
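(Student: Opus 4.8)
The plan is to avoid a direct induction on the derivation of $P \equiv \0$---whose symmetry and transitivity cases make structural reasoning awkward---and instead route the argument through subject congruence, which is already available as \cref{lem:subject-cong}. The guiding observation is that, once we transport the typing judgement along $\equiv$ onto the inactive process $\0$ itself, the inactive process admits essentially only one typing derivation, so its context must be end-typed.

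First I would apply \cref{lem:subject-cong} to the hypotheses $\Theta \cons \Gamma \vdash P$ and $P \equiv \0$. This yields a context $\Gammap$ with $\Gamma \equiv \Gammap$ and $\Theta \cons \Gammap \vdash \0$. The purpose of this step is to replace the structurally-congruent-to-$\0$ process $P$ by $\0$ itself, paying only for this by weakening the context up to type congruence.

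Next I would invert the typing of $\0$. The process typing rules of \cref{fig:typing-rules} are syntax-directed on the process, and the only rule whose conclusion has the form $\Theta \cons \Gammap \vdash \0$ is \cref{rule:t-0} (there is no process-level weakening rule, since weakening is confined to values via T-Wkn). Inverting \cref{rule:t-0} on $\Theta \cons \Gammap \vdash \0$ gives exactly its premise $\pEnd(\Gammap)$. Finally I would transfer end-typedness back to the original context: since type congruence is an equivalence relation (\cref{def:type-cong}), from $\Gamma \equiv \Gammap$ we obtain $\Gammap \equiv \Gamma$, and the congruence-closure rule of \cref{def:end} applied to $\Gammap \equiv \Gamma$ together with $\pEnd(\Gammap)$ yields the desired $\pEnd(\Gamma)$.

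I do not expect a genuine obstacle here: the whole content of the lemma is inversion of \cref{rule:t-0} packaged together with subject congruence. The only points requiring minor care are \emph{(i)} that no typing rule other than \cref{rule:t-0} can conclude a judgement about $\0$, so the inversion is unambiguous, and \emph{(ii)} that $\pEnd$ must be closed under type congruence in the correct direction, which is precisely the last rule of \cref{def:end} once we invoke the symmetry of $\equiv$ to reorient $\Gamma \equiv \Gammap$ into $\Gammap \equiv \Gamma$.
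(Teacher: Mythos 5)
Your proof is correct and follows essentially the same route as the paper's: apply \cref{lem:subject-cong} to transport the judgement onto $\0$, invert \cref{rule:t-0} to get $\pEnd(\Gammap)$, and close with the congruence rule of \cref{def:end}. Your additional remarks about the uniqueness of the inversion and the symmetry of $\equiv$ are fine points the paper leaves implicit, but they do not change the argument.
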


\begin{proof}
    By \cref{lem:subject-cong} $\exists \Gammap \type{\equiv} \Gamma$ s.t. $\Theta \cons \Gammap \vdash \0$.
    By \cref{rule:t-0}, $\pEnd(\Gammap)$.
    By \cref{def:end}, we obtain $\pEnd(\Gamma)$.
\end{proof}

\begin{lemma}
    Assume $\emptyset \cons \Gamma\vdash\ \bigm\vert_{\q \in I} P_\q$ with each $P_\q$ either being $\0$ (up-to $\equiv$) or only playing role $\q$ in $\ses{s}$.
    Then, $\Gamma = \Gammab{0}, \left\{\ses{s}[\role{q}] \hasT \type{U_\q} \right\}_{\q \in I^\prime}$ (for some $I^\prime$) and with $\pEnd(\Gammab{0})$. 
    Moreover, $\forall \q \in I^\prime$:
    \begin{enumerate}
        \item if $\type{!\role \rho \&_{j \in J} \m_j (\poly{T_j}).S_j} \;\subT\; \type{U_\q}$ then $\q \in I$ and for some $\rc,\rc^\prime,$ and $J^\prime \supseteq J$, either:
        \begin{itemize}
            \item $P_k \equiv \rc\left[!\ses{s}[\role{\q}][\role \rho]\&_{j\in J^\prime}\m_j(\poly{b_j}).P^\prime_j\right]$ or
            \item \textnormal{$P_k \equiv \rc\left[\begin{array}{c}
                \defin{X}{x_1 \hasT \type{S^{\prime}_1},\dots,x_n \hasT\type{S^{\prime}_n}}{\rc^\prime \left[!x_l[\role \rho]\&_{j \in J^\prime}\m_j(\poly{b_j}).P^\prime_j\right]} \\
                \recVar{X}\left\langle \ses{s^\prime_1}[\role{r_1}],\dots,\ses{s^\prime_{l-1}}[\role{r_{l-1}}], \ses{s}[\role{\q}],\ses{s^\prime_{l+1}}[\role{r_{l+1}}],\dots,\ses{s^\prime_n}[\role{r_n}] \right\rangle
            \end{array}\right]$} with $1 \leq l \leq n$;
        \end{itemize}
        \item if $\type{\role \rho \&_{j \in J} \m_j (\poly{T_j}).S_j} \;\subT\; \type{U_\q}$ then $\q \in I$ and for some $\rc,\rc^\prime,$ and $J^\prime \supseteq J$, either:
        \begin{itemize}
            \item $P_k \equiv \rc\left[!\ses{s}[\role{\q}][\role \rho]\&_{j\in J^\prime}\m_j(\poly{b_j}).P^\prime_j\right]$ or
            \item \textnormal{$P_k \equiv \rc\left[\begin{array}{c}
                \defin{X}{x_1 \hasT \type{S^{\prime}_1},\dots,x_n \hasT\type{S^{\prime}_n}}{\rc^\prime \left[!x_l[\role \rho]\&_{j \in J^\prime}\m_j(\poly{b_j}).P^\prime_j\right]} \\
                \recVar{X}\left\langle \ses{s^\prime_1}[\role{r_1}],\dots,\ses{s^\prime_{l-1}}[\role{r_{l-1}}], \ses{s}[\role{\q}],\ses{s^\prime_{l+1}}[\role{r_{l+1}}],\dots,\ses{s^\prime_n}[\role{r_n}] \right\rangle
            \end{array}\right]$} with $1 \leq l \leq n$;
        \end{itemize}
        \item if $\type{ \o_{j \in J}\role{\rho_j} \m_j (\poly{T_j}).S^{\prime}_j} \;\subT\; \type{U_\q}$ then $\q \in I$ and for some $\rc,\rc^\prime$ and $J^\prime \subseteq J$, either:
        \begin{itemize}
            \item $P_k \equiv \rc\left[ {\textstyle\sum_{j \in J^\prime}} \ses{s}[\role{q}][\role{\rho_j}]\o \m_j\langle\poly{V_j}\rangle.P^\prime_j\right]$ or
            \item \textnormal{$P_k \equiv \rc\left[\!\begin{array}{c}
                \defin{X}{x_1 \hasT \type{S^{\prime}_1},\dots,x_n \hasT\type{S^{\prime}_n}}{\rc^\prime \left[{\textstyle\sum_{j \in J^\prime}}x_l[\role{\rho_j}]{\o} \m_j\langle\poly{V_j}\rangle . P^\prime_j\right]}\!\! \\
                \recVar{X}\left\langle \ses{s^\prime_1}[\role{r_1}],\dots,\ses{s^\prime_{l-1}}[\role{r_{l-1}}], \ses{s}[\role{q}],\ses{s^\prime_{l+1}}[\role{r_{l+1}}],\dots,\ses{s^\prime_n}[\role{r_n}] \right\rangle
            \end{array}\right]$} with $1 \leq l \leq n$.
        \end{itemize}
        \item if $\type{ S_1 \| \cdots \| S_n} \;\subT\; \type{U_\q}$ then $\q \in I$ and $P_\q \equiv P_1 \| \cdots \| P_n$ s.t. $\forall j \in 1..n: \emptyset \cons \Gammab{0},\{\ses{s}[\q] \hasT \type{S_j}\} \vdash P_j$, with each $P_j$ either being $\0$ (up-to $\equiv$) or only plays $\q$ in $\ses{s}$.
    \end{enumerate}
    Further,
    \begin{enumerate*}[start=5]
        \item $\forall \q \in I \setminus I^\prime : P_\q \equiv \0$.
    \end{enumerate*} 
\end{lemma}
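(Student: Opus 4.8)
The plan is to prove the statement by first pinning down the global shape of $\Gamma$ and then characterising each component $P_\q$ separately. Throughout I would reason up to structural congruence via \Cref{lem:subject-cong}, which lets me commute the top-level parallel composition so that the derivation of $\bigm\vert_{\q \in I} P_\q$ is built by repeated applications of \Cref{rule:t-par}. Inverting these applications splits $\Gamma$ into per-component contexts through the context-split operation of \Cref{fig:context-operations}. For every $\q$ with $P_\q \equiv \0$, the immediately preceding lemma (inactive processes are end-typed) gives that its context is end-typed, so these endpoints collect into $\Gammab{0}$ and land in $I \setminus I^\prime$, which is exactly conclusion (5). For every $\q$ that genuinely plays role $\role p = \q$, condition (iii) of \Cref{item:plays-one-role} forces its context to have the form $\Gammab{0}^\q, \ses{s}[\q] \hasT \type{U_\q}$ with $\type{U_\q \centernot{\subT} \tEnd}$ and $\pEnd(\Gammab{0}^\q)$; gathering all the end-typed parts into $\Gammab{0}$ and recording the single active endpoint yields $\Gamma = \Gammab{0}, \{\ses{s}[\q]\hasT\type{U_\q}\}_{\q\in I^\prime}$ with $\pEnd(\Gammab{0})$.

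Next I would establish the per-role structure by a secondary induction on $P_\q$, guided by the top constructor of $\type{U_\q}$. The parallel case (item 4) is the decomposition step: if $\ses{s}[\q]$ carries a parallel type $\type{S_1 \| \cdots \| S_n}$, then since such a type can only enter a context through \Cref{rule:split-par} under an application of \Cref{rule:t-par}, inverting that rule exhibits $P_\q$ as a parallel composition $P_1 \| \cdots \| P_n$ with each $P_j$ typed under $\ses{s}[\q]\hasT\type{S_j}$ and again either inactive or playing $\q$, so I recurse on each $\type{S_j}$. When $\type{U_\q}$ is a single session type I peel off the reduction-context wrapping $\rc$ (parallel composition with end-typed processes, end-typed restrictions, and process definitions) until an active prefix on $\ses{s}[\q]$ is exposed, and invert the corresponding rule: \Cref{rule:t-bang} for a replicated branch (item 1), \Cref{rule:t-rcv} for a branch (item 2), and \Cref{rule:t-choice} together with \Cref{rule:t-send} for a selection (item 3). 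The subtyping slack is absorbed by \Cref{rule:t-sub} and \Cref{lem:narrowing-proc}, and it is precisely the shape of the branch and selection rules in \Cref{def:subtyping} that dictates $J^\prime \supseteq J$ in the receiving cases and $J^\prime \subseteq J$ in the selection case.

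The main obstacle I anticipate is the second alternative in items 1--3, where the active endpoint $\ses{s}[\q]$ is reached only through a recursive call $\recVar{X}\langle \dots, \ses{s}[\q], \dots \rangle$ rather than by a syntactically guarding prefix. Here I would invert \Cref{rule:t-call} and \Cref{rule:t-def} to locate the declaration body and the parameter $x_l$ instantiated with $\ses{s}[\q]$, and then crucially invoke the guarded-definitions condition (\Cref{item:guarded-def}): since $x_l$ is mapped to a type $\type{S_l \centernot{\subT} \tEnd}$, any call mentioning $x_l$ can only occur as a subterm of a communication over $x_l$, which pins the body down to the displayed shape $\rc^\prime[\cdots x_l[\role \rho] \cdots]$. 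The conceptual content of the lemma lies entirely in this interaction between \Cref{def:one-role} and the typing rules; the remaining work---threading the residual end-typed contexts through context split and addition, tracking the reduction contexts $\rc$ and $\rc^\prime$, and checking that each reconstructed subprocess still satisfies \Cref{def:one-role} so the induction stays well-founded---is routine but delicate bookkeeping.
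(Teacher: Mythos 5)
Your proposal is correct and follows essentially the same route as the paper: the paper's own proof simply defers items 1--3 and 5 to the standard session-inversion argument of the base theory (Scalas \& Yoshida, Theorem B.4) --- which is exactly the inversion-on-typing-derivations argument you spell out, including the use of the guarded-definitions condition of \Cref{def:one-role} for the recursive-call alternatives and the choice-before-selection inversion for item 3 --- and attributes item 4 to \Cref{def:one-role}, matching your treatment via \Cref{rule:t-par} and \Cref{rule:split-par}. Your write-up is in effect a detailed expansion of what the paper cites by reference.
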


\begin{proof}
    Items \emph{1--3} and \emph{5} follow similar proofs to the base version of this theory, found in \cite[Theorem B.4]{DBLP:journals/pacmpl/ScalasY19tech}.
    (Main differences being that for \emph{3} we infer the shape of the choice before the selection; and the proof for \emph{1} is new but almost identical to that of \emph{2}.)
    Item \emph{4} follows from the definition of \cref{def:one-role}.
\end{proof}
\section{Subject Reduction and Session Fidelity}\label{ap:sr}

\begin{customthm}{1}
    If $\Theta \cons \Gamma \vdash P$ with $\Gamma$ safe.
    Then, $P \redP P^\prime$ implies $\exists \Gammap$ safe s.t. $\Gamma\ \redCstar\ \Gammap$ and $\Theta \cons \Gammap \vdash P^\prime$.
\end{customthm}
\begin{proof}
    The proof is by induction on the derivation of $P \redP P^\prime$.
    Cases \Cref{rule:r-c}, \Cref{rule:r-bang1}, \Cref{rule:r-bang2} are similar.
    For each of these we infer the shape of $\Gamma$ by inversion of typing rules, observing that $\Gamma$ reduces via communication actions, and rebuilding a typing derivation for the reduced process using the reduced context.
    Cases \Cref{rule:r-choice} and \Cref{rule:r-proc-call} are straight forward.
    Case \Cref{rule:r-cong} holds from subject congruence (\cref{lem:subject-cong}), and case \Cref{rule:r-ctx} holds by a further proof by induction on the structure of $\rc$.

    We demonstrate the proof with case \Cref{rule:r-bang2}:

    \noindent
    We may assume:
    \begin{enumerate*}[label=\textbf{(A\arabic*)}]
        \item $\Theta \cons \Gamma \vdash P$ \label{ass:sr1}\tab
        \item $\varphi(\Gamma)$ \label{ass:sr2}\tab
        \item $P \redP P^\prime$ \label{ass:sr3}
    \end{enumerate*}
    
    From the hypothesis and \cref{rule:r-bang2}, we infer the shape of $P$ and $P^\prime$:
    \begin{flalign}
        P &= \snd{\ses{s}[\role{q}]}{\role{p}}{\m_k\langle \poly{d} \rangle} Q\ \|\ {!}\rcvI{\ses{s}[\role{p}]}{\role{\alpha}}{\m_i(\poly{b_i})} Q_i^\prime 
        & \reason{by rule \cref{rule:r-bang2}}{sr1} \\
        P^\prime &= Q\ \|\ {!}\rcvI{\ses{s}[\role{p}]}{\role{\alpha}}{\m_i(\poly{b_i})} Q_i^\prime\ \|\ Q_k^\prime \sub{\poly{d}}{\poly{b_k}}\sub{\q}{\role \alpha} 
        & \reason{by rule \cref{rule:r-bang2}}{sr2} \\
        k &\in I & \reason{by rule \cref{rule:r-bang2}}{sr3}
    \end{flalign}

    We can now infer the shape of $\Gamma$ through inversion of typing rules:
    {\small\begin{mathpar}
        \inferrule*[right={\cref{rule:t-par}}]
        {
            \inferrule
            {}
            { \Gammab{L} = \Gammab{L_0} \cSplit \Gammab{\o} \cSplit \Gammab{\role \rho} (\cSplit \Gammab{L_i})_{i \in 1..n} \vdash \snd{\ses{s}[\role{q}]}{\role{p}}{\m_k\langle d_1,\dots,d_n \rangle} Q }
        \\
            \inferrule
            {}
            { \Gammab{R} = \Gammab{R_0} \cSplit \Gammab{!} \vdash {!}\rcvI{\ses{s}[\role{p}]}{\role{\alpha}}{\m_i(\poly{b_i})} Q_i^\prime }
        }
        { \Gamma = \Gammab{L} \cSplit \Gammab{R} \vdash \snd{\ses{s}[\role{q}]}{\role{p}}{\m_k\langle \poly{d} \rangle} Q\ \|\ {!}\rcvI{\ses{s}[\role{p}]}{\role{\alpha}}{\m_i(\poly{b_i})} Q_i^\prime }

        \inferrule*[right={\cref{rule:t-send}}]
        { \Gammab{\o} \vdash \ses{s}[\q] \hasT \type{\p \o m_k (T_1,\dots,T_n).S_L} 
        \\
        \Gammab{\role \rho} \vdash \p \hasT \p
        \\
        \forall i \in 1..n : \Gammab{L_i} \vdash d_i \hasT \type{T_i}
        \\
        \Gammab{L_0} + \ses{s}[\q] \hasT \type{S_L} \vdash Q
        }
        { \Gammab{L_0} \cSplit \Gammab{\o} \cSplit \Gammab{\role \rho} (\cSplit \Gammab{L_i})_{i \in 1..n} \vdash \snd{\ses{s}[\role{q}]}{\role{p}}{\m_k\langle d_1,\dots,d_n \rangle} Q }

        \inferrule*[right={\cref{rule:t-bang}}]
            { \Gammab{!} \vdash \ses{s}[\p] \hasT \type{!\role \alpha \&_{i\in I} m_i (\poly{T^\prime}).S_{R_i}}
            \\
              \pEnd(\Gammab{R_0})
            \\
              \left(\Gammab{R_0} + \ses{s}[\p] \hasT \type{S_{R_i}} + \poly{b_i} \hasT \type{\poly{T_i}} \insnew{\role \alpha} \vdash Q_i^\prime\right)_{i \in I}
            }
            { \Gammab{R_0} \cSplit \Gammab{!} \vdash {!}\rcvI{\ses{s}[\role{p}]}{\role{\alpha}}{\m_i(\poly{b_i})} Q_i^\prime }
    \end{mathpar}}

    Since $\Gamma$ is \emph{safe}, we now that:
    \begin{flalign}
        &\|\type{\poly{T}}\| = \|\type{\poly{T^\prime}}\| && \reason{by condition \cref{rule:safe-bang}}{sr4}\\
        & \forall j \in 1..\|\type{\poly{T}}\| : \type{T_j} \subT \type{T_j^\prime}\ \text{ if $\type{T_j}$ is a session type} && \reason{by condition \cref{rule:safe-bang}}{sr5}
    \end{flalign}

    Without loss of generality, we assume $\type{\poly{T}}$ and $\type{\poly{T^\prime}}$ to be ordered as session types first, followed by role singletons, s.t. 
    $\type{\poly{T}} = \type{\poly{S^\prime},\poly{\role{r}}}$ and $\type{\poly{T^\prime}} = \type{\poly{S^{\prime\prime}},\poly{\role{\kappa}}}$.
    Now we observe the context reduction using \cref{l:sr4},\cref{l:sr5}, \cref{lem:red-commutes-subt} and \cref{rule:g-com2}:
    \begin{flalign}
        &\Gammab{\o}, \Gammab{!} \redC \Gammapp \subT \Gammab{0},\{\ses{s}[\q] \hasT \type{S_L} + \ses{s}[\p] \hasT \type{!\role{\alpha} \&_{i\in I } m_i (\poly{T_i^\prime}) . S_{R_i} \| S_{R_k}\sub{\q}{\role{\alpha}}\sub{\poly{\role{r}}}{\poly{\role{\kappa}}}}\}
        && \reason{by \cref{rule:g-com2}}{sr6} \\
        &\Gammapp = \left(\type{\Gamma_L^\prime} \subT \Gammab{0},\ses{s}[\q] \hasT \type{S_L} \right) \cSplit \Gammab{!} \cSplit \left(\type{\Gamma_R^\prime} \subT \Gammab{0},\ses{s}[\p] \hasT \type{S_{R_k}\sub{\q}{\role{\alpha}}\sub{\poly{\role{r}}}{\poly{\role{\kappa}}}}\right)
        && \reason{by \cref{l:sr6} and \cref{fig:context-operations}}{sr7}
    \end{flalign}
    where $\Gammab{0}$ contains any singletons in $\Gammab{\o}, \Gammab{!}$, and therefore, $\pEnd(\Gammab{0})$.

    Using rule \cref{rule:g-cong}, we can infer the shape of the entire context reduction:
    \begin{flalign}
        & \Gamma \redC \Gammap = (\Gammab{L_0} \cSplit \Gammab{\role \rho} (\cSplit \Gammab{L_i})_{i \in 1..n} \cSplit \Gammab{R_0}) + (\Gammapp)
        && \reason{by \cref{l:sr6},\cref{l:sr7} and \cref{rule:g-cong}}{sr8} \\
        & \Gammap = (\Gammab{R_0} \cSplit \Gammab{!}) \cSplit (\Gammab{L_0} \cSplit \type{\Gamma_L^\prime}) \cSplit (\type{\Gamma_R^\prime} (\cSplit \Gammab{L_i})_{i \in 1..n})
        && \reason{by \cref{l:sr7} and \cref{l:sr8}}{sr9}
    \end{flalign}
    where $\Gammab{\role \rho}$ is engulfed in $\type{\Gamma_R^\prime}$ since $\type{\Gamma_R^\prime}$ contains $\Gammab{0}$, which contains all singletons that exist in $\Gammab{\o},\Gammab{\&}$.

    We can now observe that $\Gammap \vdash P^\prime$ since:
    \begin{flalign}
        & \Gammab{R_0} \cSplit \Gammab{!} \vdash {!}\rcvI{\ses{s}[\role{p}]}{\role{\alpha}}{\m_i(\poly{b_i})} Q_i^\prime
        && \reason{from der. tree}{sr10} \\
        & \Gammab{L_0} \cSplit \type{\Gamma_L^\prime} \vdash Q
        && \reason{from der. tree, \cref{l:sr7} and \cref{lem:narrowing-proc}}{sr11} \\
        & \type{\Gamma_R^\prime} (\cSplit \Gammab{L_i})_{i \in 1..n} \vdash Q_k^\prime \sub{\poly{d}}{\poly{b_k}}\sub{\q}{\role \alpha}
        && \reason{from der. tree, \cref{l:sr7}, \cref{lem:narrowing-proc}, \cref{lem:role-subs-proc} and \cref{lem:chan-subs}}{sr12} \\
        & \Gammap \vdash P^\prime && \reason{by \cref{l:sr10},\cref{l:sr11},\cref{l:sr12}}{sr13}
    \end{flalign}

    Lastly, we must show that $\Gammap$ is \emph{safe}, which holds from \cref{ass:sr2}, \cref{rule:safe-r} and \cref{l:sr8}. \qed

\end{proof}

\begin{customthm}{2}
    Assume $\emptyset\cons\Gamma \vdash P$ with $\Gamma$ safe and $P \equiv
    {\bigm|_{\q \in I}\! P_\q}$ where each $P_\q$ is either $\0$ (up-to
    $\equiv$), or only plays role $\role q$ in $\ses{s}$. 
    Then, $\Gamma \redC$ implies $\exists \Gammap,P^\prime$ s.t. $\Gamma \redC \Gammap$, $P \redP^* P^\prime$ and $\Gammap \vdash P^\prime$, where $P^\prime \equiv {\bigm|_{\q \in I}\! P_\q^\prime}$ and each $P_\q^\prime$ is either $\0$ (up-to $\equiv$), or only plays role $\role q$ in $\ses{s}$.
\end{customthm}

\begin{proof}[Sketch]
    We start by observing that if $\varphi(\Gamma)$ and $\Gamma\redC$, then we can split out of $\Gamma$ either 
    \emph{(i)} a selection and branch type; or \emph{(ii)} a selection and replicated branch type.
    Proceeding in a similar manner for both cases, we infer the shape of $P$ based on the split types.
    It follows that $P$ can reduce via \Cref{rule:r-c} (or \Cref{rule:r-bang1}/\Cref{rule:r-bang2} in case \emph{(ii)}), possibly after some applications of \Cref{rule:r-ctx},\Cref{rule:r-proc-call}, and \Cref{rule:r-choice}; and $\Gamma$ can reduce such that the process reduction remains typable.
    Lastly, we show that the process reductum retains the structure of the assumptions for session fidelity (by rule \Cref{rule:r-cong} and since \Cref{def:one-role} requires its properties to hold on all subterms). 
    We note that for replicated communication, a process $P_\p$ can take the shape $P_\p \equiv {\bigm|_{i \in I}\! P_i^\prime}$ where each $P_i^\prime$ is either $\0$ (up-to $\equiv$), or only plays role $\role p$ in $\ses{s}$; \ie\ a process playing a single role can consist of multiple parallel processes that all play the same role. 
    In this case we first show that $P_\p$ is typed under a runtime type
    $\type{U}$, then proceed in a similar manner as before.
\end{proof}

\section{Unfolding}\label{app:dec}

Below we present our definitions for type and context unfolding; they are mostly standard, we only adapt type unfolding to cater for parallel types.

\begin{definition}[Type unfolding]
    The one-step unfolding of a type $\type{U}$, written \textnormal{$\unf(\type{U})$}, is given by:
    {\normalfont\[
        \unf(\type{\mu \recVar{t}.S}) = \St\sub{\type{\mu \recVar{t}.S}}{\type{\recVar{t}}}
        \tab\tab
        \unf(\type{S \| U}) = \unf(\St) \type{\|} \unf(\type{U})
        \tab\tab
        \unf(\St) = \St \tab \textit{if } \St \neq \type{\mu \recVar{t}.S^\prime}
    \]}%
    The n-steps unfolding of a type $\type{U}$, written \textnormal{$\unf^n(\type{U})$}, is given by:
    {\normalfont\[
        \unf^0(\type{U}) = \type{U}
        \tab\tab
        \unf^{m+1}(\type{U}) = \unf(\unf^m(\type{U}))
    \]}%
    The complete unfolding of a type $\type{U}$, written $\unf^*(\type{U})$, is defined as:
    {\normalfont\[
        \unf^*(\type{U}) = \unf^n(\type{U}) \tab \textit{for the smallest $n$ s.t. } \unf^n(\type{U}) = \unf^{n+1}(\type{U})
    \]}
\end{definition}

\begin{definition}[Context unfolding, as in \cite{DBLP:journals/pacmpl/ScalasY19tech} definition K.1]
    The set of unfoldings of a type context $\Gamma$, written \textnormal{$\unf(\Gamma)$}, is defined below (where $\Gamma\sub{\type{U}}{c}$ is a mapping update):
    {\normalfont\[
        \unf(\Gamma) = \textstyle\bigcup_{c \hasT \type{U} \in \Gamma}\{\Gamma\sub{\unf(\type{U})}{c}\}
        \tab\tab
        \textit{extends to sets of contexts as}
        \tab\tab
        \unf(\{\Gammab{i}\}_{i \in I}) = \textstyle\bigcup_{i \in I}\unf(\Gammab{i})
    \]}%
    Given a set of contexts $\xi$, the closure of its unfoldings, written \textnormal{$\unf*(\xi)$}, is defined as:
    {\normalfont\[
        \unf^*(\xi) = \textsf{lfix}(\lambda \xi^\prime . \xi \cup \xi^\prime \cup \unf(\xi \cup \xi^\prime)) \tab \textit{where }\textsf{lfix}\textit{ is the least fixed point of its argument}
    \] }
\end{definition}

\end{document}